\title{Decentralized Dynamics for Finite Opinion Games\thanks{Work partially supported by EPSRC grant EP/G069239/1 ``Efficient Decentralised Approaches in Algorithmic Game Theory''
and by PRIN 2008 research project COGENT (COmputational and GamE-theoretic aspects of uncoordinated NeTworks), funded by the Italian Ministry of University and Research.
A preliminary version of this paper appeared as \cite{SAGT12}.
}}
\author{Diodato Ferraioli\thanks{LAMSADE, Universit\`e Pars Dauphine, France. E-mail: {\tt diodato.ferraioli@dauphine.fr}.} \and Paul W. Goldberg\thanks{Department of Computer Science, University of Liverpool, UK. E-mail: {\tt P.W.Goldberg@liverpool.ac.uk}.} \and Carmine Ventre\thanks{School of Computing, Teesside University, UK. E-mail: {\tt C.Ventre@tees.ac.uk}.}}
\newtheorem{theorem}{Theorem}[section]
\newtheorem{claim}[theorem]{Claim}
\newtheorem{lemma}[theorem]{Lemma}
\newtheorem{cor}[theorem]{Corollary}
\newtheorem{obs}[theorem]{Observation}
\newtheorem{definition}[theorem]{Definition}
\newcommand{\x}{{\mathbf x}}
\newcommand{\y}{{\mathbf y}}
\newcommand{\z}{{\mathbf z}}
\newcommand{\w}{{\mathbf w}}
\newcommand{\blf}{{\mathbf b}}
\newcommand{\0}{{\mathbf 0}}
\newcommand{\1}{{\mathbf 1}}
\newcommand{\G}{\mathcal{G}}
\newcommand{\Ham}{\mathcal{H}}
\newcommand{\OO}{{\mathcal O}}
\newcommand{\SC}{{\sf SC}}
\newcommand{\cw}{{\sf CW}}
\newcommand{\diam}{\text{\rm diam}}
\newcommand{\tm}{{t_\text{\rm mix}}}
\newcommand{\tr}{{t_\text{\rm rel}}}
\newcommand{\pimin}{\pi_{\text{\rm min}}}
\newcommand{\tv}[1]{\left\|#1\right\|_{\rm TV}}
\newcommand{\tc}{{\tau_\text{\rm couple}}}
\newcommand{\poly}[1]{{\sf poly}\left(#1\right)}
\newcommand{\Prob}[2]{\mathbf{P}_{#1} \left( #2 \right)}
\newcommand{\Expec}[2]{\mathbf{E}_{#1} \left[ #2 \right]}
\newcounter{todo}
\def\pbre{pure best-response equivalent\ }
\renewcommand{\epsilon}{\varepsilon}
\begin{document}
\date{}

\setcounter{page}{0}
\maketitle
\thispagestyle{empty}

\begin{abstract}
Game theory studies situations in which strategic players can modify the state of a given system, due to the absence of a central authority. Solution concepts, such as Nash equilibrium, are defined to predict the outcome of such situations.
In multi-player settings, it has been pointed out that to be realistic, a solution concept should be obtainable via processes that are decentralized and reasonably simple. Accordingly we look at the computation of solution concepts by means of decentralized dynamics. These are algorithms in which players move in turns to improve their own utility and the hope is that the system reaches an ``equilibrium'' quickly.

We study these dynamics for the class of opinion games, recently introduced by Bindel et al.~\cite{bkoFOCS11}. These are games, important in economics and sociology, that model the formation of an opinion in a social network. We study best-response dynamics and show upper and lower bounds on the convergence to Nash equilibria. %is polynomial in the number of players.\todo{To update according to result about BRD for weighted graphs} 
We also study a noisy version of best-response dynamics, called logit dynamics, and prove a host of results about its convergence rate as the noise in the system varies. To get these results, we use a variety of techniques developed to bound the mixing time of Markov chains, including coupling, spectral characterizations and bottleneck ratio.
\end{abstract}

\bigskip \noindent
{\bf Keywords:} Algorithmic Game Theory, Convergence Rate to Equilibria, Logit Dynamics.

\newpage
\setcounter{page}{1}
\section{Introduction}
Social networks are widespread in physical and digital worlds. The following scenario therefore becomes of interest. Consider a group of individuals, connected in a social network, who are members of a committee, and suppose that each individual has his own opinion on the matter at hand. How can this group of people reach \emph{consensus}? This is a central question in economic theory, especially for processes in which people repeatedly average their own opinions. This line of work, see e.g. \cite{AO,DVZ03,GJ10,Jac08}, is based on a model defined by DeGroot \cite{DeG74}. In this model, each person $i$ holds an opinion equal to a real number $x_i$, which might for example represent a position on a political spectrum. There is an undirected weighted graph $G = (V, E, w)$ representing a social network, and node $i$ is influenced by the opinions of his neighbors in $G$ (the influence of neighbor $j$ is stronger the higher $w_{ij}$ is). In each time step, node $i$ updates his opinion to be a weighted average of his current opinion with the current opinions of his neighbors. A variation of this model of interest to our study is due to Friedkin and Johnsen \cite{FriJoh90}. In \cite{FriJoh90} it is additionally assumed that each node $i$ maintains a persistent \emph{internal belief} $b_i$, which remains constant even as node $i$ updates his overall opinion $x_i$ through averaging. (See Section \ref{sec:game} for the formal framework.)

However, as recently observed by Bindel et al. \cite{bkoFOCS11}, consensus is hard to reach, the case of political opinions being a prominent example. The authors of \cite{bkoFOCS11} justify the absence of consensus by interpreting repeated averaging as a decentralized dynamics for selfish players. Consensus is not reached as players will not compromise further when this diminishes their \emph{utility}. Therefore, these dynamics will converge to an equilibrium in which players might disagree; Bindel et al. study the cost of disagreement by bounding the price of anarchy in this setting.

In this paper, we continue the study of \cite{bkoFOCS11} and ask the question of how quickly equilibria are reached by decentralized dynamics in opinion games. We focus on the setting in which players have only a finite number of strategies available. This is motivated by the fact that in many cases although players have personal beliefs which may assume a continuum of values, they only have a limited number of strategies available. For example, in political elections, people have only a limited number of parties they can vote for and usually vote for the party which is \emph{closer} to their own opinions. Motivated by several electoral systems around the world, we concentrate in this study on the case in which players only have two strategies available. This setting already encodes a number of interesting technical challenges as outlined below.

\subsection{Our contribution}
For the finite version of the opinion games considered in \cite{bkoFOCS11}, we firstly note that this is a potential game \cite{R73,MS96} thus implying that these games admit pure Nash equilibria. The set of pure Nash equilibria is then characterized (cf. Lemma~\ref{lemma:Nash}). We also notice the interesting fact that while the games in \cite{bkoFOCS11} have a price of anarchy of 9/8, our games have unbounded price of anarchy, thus implying that for finite games  disagreeing has far more deep consequences on the social cost. We additionally prove that the socially optimal profile is always a Nash equilibrium when the weights of the social network are integers, thus implying that the Price of Stability is $1$ in this case, and in general a tight bound of $2$ for the Price of Stability when edges have fractional weights. %These basic facts turn out to be useful in the study of decentralized dynamics for finite opinion games.

We then study decentralized dynamics for finite opinion games. We first consider the best-response dynamics, by proving that it quickly converges to pure Nash equilibria in the case of unweighted social networks. For  general weights, we prove that the convergence rate is polynomial in the number of players but exponential in the representation of the weights. We also prove that for a specific opinion game, there exists an exponentially-long sequence of best responses thus implying that convergence may be exponential in general. The upper bounds are proved by ``reducing'' an opinion game to a version of it in which the internal beliefs can only take certain values. The reduced version is equivalent to the original one, as long as best-response dynamics is concerned. Note that the convergence rate for the version of the game considered in \cite{bkoFOCS11} is unknown.

In real life, however, there is some noise in the decision process of players. Arguably, people are not fully rational. On the other hand, even if they were, they might not exactly know what strategy represents the best response to a given strategy profile due to the incapacity to correctly determine their utility functions. To model this, we study \emph{logit dynamics} \cite{blumeGEB93} for opinion games. Logit dynamics features a \emph{rationality level} $\beta \geq 0$ (equivalently, a noise level $1/\beta$) and each player is assumed to play a strategy with a probability which is proportional to the corresponding utility to the player and $\beta$. So the higher $\beta$ is, the less noise there is and the more the dynamics is similar to best-response dynamics. Logit dynamics for potential games defines a Markov chain that has a nice structure. As in \cite{afppSAGT10, afpppSPAA11j} we exploit this structure to prove bounds on the convergence rate of logit dynamics to the so-called \emph{logit equilibrium}. The logit equilibrium corresponds to the stationary distribution of the Markov chain. Intuitively, a logit equilibrium is a probability distribution over strategy profiles of the game; the distribution is concentrated around pure Nash equilibrium profiles\footnote{Thus, the solution concept of logit dynamics is different from the one associated with best-response dynamics.}. It is observed in \cite{afppSAGT10} how this notion enjoys a number of desiderata one would like solution concepts to have.

We prove a host of results on the convergence rate of logit dynamics that give a pretty much complete picture as $\beta$ varies. We give an upper bound in terms of the cutwidth of the graph modeling the social network. The bound is exponential in $\beta$ and the cutwidth of the graph, thus yielding an exponential guarantee for some topology of the social network. We complement this result by proving a polynomial upper bound when $\beta$ takes a small value. We complete the preceding upper bound in terms of the cutwidth with lower bounds. Firstly, we prove that in order to get an (essentially) matching lower bound it is necessary to evaluate the size of a certain subset of strategy profiles. When $\beta$ is big enough relatively to this subset then we can prove that the upper bound is tight for any social network (specifically, we roughly need $\beta$ bigger than $n \log n$ over the cutwidth of the graph). For smaller values of $\beta$, we are unable to prove a lower bound which holds for every graph. However, we prove that the lower bound holds in this case at both ends of the spectrum of possible social networks. In details, we look at two cases of graphs encoding social networks: cliques, which model monolithic, highly interconnected societies, and complete bipartite graphs, which model more sparse ``antitransitive'' societies. For these graphs, we firstly evaluate the cutwidth and then relate the latter to the size of the aforementioned set of states. This allows to prove a lower bound exponential in $\beta$ and the cutwidth of the graph for (almost) any value of $\beta$. As far as we know, no previous result was known about the cutwidth of a complete bipartite graph; this might be of independent interest. The result on cliques is instead obtained by generalizing arguments in \cite{lpwAMS08}.

To prove the convergence rate of logit dynamics to logit equilibrium we adopt a variety of techniques developed to bound the mixing time of Markov chains. To prove the upper bounds we use some spectral properties of the transition matrix of the Markov chain defined by the logit dynamics, and coupling of Markov chains. To prove the lower bounds, we instead rely on the concept of bottleneck ratio and the relation between the latter and mixing time. (The interested reader might refer to \cite{lpwAMS08} for a discussion of these concepts. Below, we give a quick overview of these techniques and state some useful facts.)

\subsection{Related works}
In addition to the papers mentioned above, our paper is related to the works on logit dynamics.
This dynamics is introduced by Blume \cite{blumeGEB93} and it is mainly adopted in the analysis of graphical coordination games \cite{ellisonECO93, youngTR00, msFOCS09}, in which players are placed on vertices of a graph embedding social relations and each player wants to coordinate with neighbors: we highlight that an unique game is played on every edge, whereas, for opinion games, we need different games in order to encode beliefs. Asadpour and
Saberi \cite{asWINE09} adopt the logit dynamics for analyzing a class of congestion games. However, none of these works evaluates the time the logit dynamics takes in order to reach the stationary distribution: this line of research is conducted in \cite{afppSAGT10, afpppSPAA11j}.

A number of papers study the efficient computation of (approximate) pure Nash
equilibria for $2$-strategy games, such as
\emph{party affiliation games}~\cite{FPT04,BBM09} and \emph{cut games}~\cite{BCK10}.
The class of games we study here contrasts with those in that for the games considered here, Nash equilibria
can be found in polynomial time (Observation~\ref{obs:NEinP}), so that our
interest is in the extent to which equilibria can be found easily with simple
decentralized dynamic processes.
Similarly to these works, we focus on a class of $2$-strategy games and study
efficient computation of pure Nash equilibria; additionally we also study the
convergence rate to logit equilibria.

Another related work is \cite{DM11} by Dyer and Mohanaraj. They study graphical games, called \emph{pairwise-interaction games}, and prove among other results, quick convergence of best-response dynamics for these games. However, our games do not fall in their class. The difference is that, in their case, there is a unique game being played on the edges of the graph; as noted above, we instead need a different game to encode the internal beliefs of the players.

\section{The game}\label{sec:game}
Let $G = (V, E)$ be a connected undirected graph
with $|V| = n$ and for each edge $e=(i,j) \in E$ let $w_{ij} > 0$ be its weight.
We set $w_{ij} = 0$ if $(i,j)$ is not an edge of $G$. Every vertex of the graph represents a player.
Each player $i$ has an \emph{internal belief} $b_i \in [0,1]$ and only two strategies or \emph{opinions} are available, namely $0$ and $1$. Motivated by the model in \cite{bkoFOCS11}, we define the utility of player $i$ in a strategy profile $\x \in \{0,1\}^n$ as
$$
u_i(\x) = - \left((x_i - b_i)^2 + \sum_{(i,j) \in E} w_{ij} (x_i - x_j)^2\right).
$$
We call such a game an $n$-player \emph{opinion game} on a graph $G$. Let $D_i(\x) = \sum_{j \colon x_i \neq x_j} w_{ij}$ be the sum of the weights of edges going from $i$ to players with the opposite opinion to $i$. Then
$$
 u_i(\x) = - (x_i - b_i)^2 - D_i(\x).
$$

\subsection{Potential function}
Let $D(\x) = \sum_{i,j~:~x_i \neq x_j} w_{ij}$ be the sum of the weights of \emph{discording edges} in the strategy profile $\x$, that is the weight of all edges in $G$ whose endpoints have different opinions.
Thus, $D(\x)=\frac{1}{2}\sum_i D_i(\x)$.

\begin{lemma}
 The function
 \begin{equation}
  \label{eq:pot}
  \Phi(\x) = \sum_i (x_i - b_i)^2 + D(\x)
 \end{equation}
 is an exact potential function for the opinion game described above.
\end{lemma}

\begin{proof}
Given a strategy profile $\x$, player $i$ experiences a non-positive utility or
equivalently, a non-negative cost defined as $c_i(\x) = - u_i(\x)$.
We show that in an opinion game if players minimize their cost, then the function $\Phi$ defined in \eqref{eq:pot} decreases. The difference in the cost to player $i$ when he switches from strategy $x_i$ to strategy $y_i$ is
 $$
  c_i(\x) - c_i(\x_{-i},y_i) = (x_i - b_i)^2 - (y_i - b_i)^2 + D_i(\x) - D_i(\x_{-i},y_i).
 $$
 The difference in the potential function between the two corresponding profiles is
 \begin{align*}
  \Phi(\x) - \Phi(\x_{-i},y_i) & = \sum_j (x_j - b_j)^2 + D(\x) - \sum_{j \neq i} (x_j - b_j)^2 - (y_i - b_i)^2 -D(\x_{-i},y_i)\\
  & = (x_i - b_i)^2 - (y_i - b_i)^2  + D(\x) - D(\x_{-i},y_i).
 \end{align*}
Discording edges not incident on $i$ are not affected by the deviation of player $i$.
That is, if we let $K_i(\x) = \sum_{j,k \neq i;~x_j \neq x_k} w_{jk}$
be the sum of the weights of these edges, then $K_i(\x) = K_i(\x_{-i},y_i)$.
The claim then follows since $D(\x) = K_i(\x) + D_i(\x)$ and $D(\x_{-i},y_i) = K_i(\x_{-i},y_i) + D_i(\x_{-i},y_i)$.
\end{proof}

A more convenient way to express the potential function above, useful in one of the proofs below, is the following: $\Phi(\x) = \sum_{e \in E} \Phi_e(\x)$, where, for an edge $e=(i,j)$,
\begin{equation}\label{eq:pot:edge}
 \Phi_e(\x) = \begin{cases}
               \alpha_e := \frac{b_i^2}{\Delta_i} + \frac{b_j^2}{\Delta_j}, & \text{if } x_i = x_j = 0;\\
               \beta_e := \frac{b_i^2}{\Delta_i} + \frac{(1 - b_j)^2}{\Delta_j} + w_{ij}, & \text{if } x_i = 0 \text{ and } x_j = 1;\\
               \gamma_e := \frac{(1 - b_i)^2}{\Delta_i} + \frac{b_j^2}{\Delta_j} + w_{ij}, & \text{if } x_i = 1 \text{ and } x_j = 0;\\
               \delta_e := \frac{(1 - b_i)^2}{\Delta_i} + \frac{(1 - b_j)^2}{\Delta_j}, & \text{if } x_i = x_j = 1.
              \end{cases}
\end{equation}
and $\Delta_i$ represents the degree of $i$.

Notice that the potential function $\Phi$ looks similar to (but is not the same as) the social cost
$$
 \SC(\x) = - \sum_{i = 1}^n u_i(\x) = \sum_i (x_i - b_i)^2 + 2D(\x).
$$

\begin{obs}[\cite{efgm13}]
\label{obs:min-cost}
The profile minimizing the social cost can be computed in polynomial time by a centralized algorithm.
\end{obs}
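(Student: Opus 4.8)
The plan is to recognize that minimizing $\SC$ over $\x \in \{0,1\}^n$ is an energy-minimization problem that reduces to a minimum $s$--$t$ cut computation. First I would rewrite the social cost as a sum of unary and pairwise contributions. Since each $x_i \in \{0,1\}$, the term $\sum_i (x_i - b_i)^2$ splits into a per-vertex cost $\theta_i(x_i)$ with $\theta_i(0) = b_i^2$ and $\theta_i(1) = (1-b_i)^2$, while $2D(\x) = \sum_{(i,j) \in E} 2 w_{ij} \cdot [x_i \neq x_j]$ is a per-edge cost $\theta_{ij}(x_i,x_j)$ that vanishes when $x_i = x_j$ and equals $2 w_{ij}$ otherwise. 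Thus $\SC(\x) = \sum_i \theta_i(x_i) + \sum_{(i,j) \in E} \theta_{ij}(x_i,x_j)$.

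The second step is to observe that the pairwise terms are submodular, since $\theta_{ij}(0,0) + \theta_{ij}(1,1) = 0 \le 4 w_{ij} = \theta_{ij}(0,1) + \theta_{ij}(1,0)$ using $w_{ij} > 0$. For two-label energies this is exactly the condition guaranteeing that a global minimizer can be read off from a minimum cut. Concretely, I would build a flow network on $V \cup \{s,t\}$ and interpret a cut $(S,T)$ with $s \in S$, $t \in T$ as the profile defined by $x_i = 0$ iff $i \in S$. For each vertex $i$ add an arc $i \to t$ of capacity $b_i^2$ (cut precisely when $x_i = 0$) and an arc $s \to i$ of capacity $(1-b_i)^2$ (cut precisely when $x_i = 1$); for each edge $(i,j) \in E$ add an undirected edge of capacity $2 w_{ij}$ (cut precisely when $x_i \neq x_j$). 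For every assignment exactly one of the two vertex arcs at $i$ is cut, so its contribution to the cut value is $\theta_i(x_i)$, and each graph edge contributes $\theta_{ij}(x_i,x_j)$; hence the capacity of the cut equals $\SC(\x)$ with no additive offset.

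It then follows that a minimum $s$--$t$ cut yields a profile minimizing the social cost, and this can be computed in polynomial time by any max-flow algorithm, as all capacities $b_i^2$, $(1-b_i)^2$, $2 w_{ij}$ are nonnegative and, assuming the beliefs and weights are given as rationals, of polynomial bit-size. The points needing care are choosing the cut conventions so that the cut value matches $\SC$ exactly rather than only up to a constant (which works here precisely because the two vertex arcs at each $i$ are complementary) and the submodularity check, which is immediate from $w_{ij} > 0$. I expect the main conceptual obstacle to be recognizing that the disagreement penalty $2D(\x)$ is exactly a cut term; once this is identified, the reduction to minimum cut is routine.
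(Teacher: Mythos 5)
Your proposal is correct and follows essentially the same route as the paper, which simply cites Escoffier et al.~for the fact that the minimizer of $\SC$ corresponds to a minimum $(s,t)$-cut in a suitably built graph; you have just spelled out the standard construction (vertex arcs with capacities $b_i^2$ and $(1-b_i)^2$, edge capacities $2w_{ij}$) that the paper leaves to the reference. The cut-value accounting and the polynomial-time conclusion via max-flow are both sound.
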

Specifically, Escoffier et al.~\cite{efgm13} showed that it corresponds to the $(s,t)$-cut of minimum weight in a suitably built graph.

\subsection{Nash equilibria, Price of Anarchy and Price of Stability}
We start with a simple observation about the centralized computation of Nash equilibria.
\begin{obs}\label{obs:NEinP}
Nash equilibria of the opinion games studied here, can be found
in polynomial time.
\end{obs}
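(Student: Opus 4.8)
The plan is to exploit the fact, just established, that the opinion game is an exact potential game with potential $\Phi$. In an exact potential game a profile $\x$ is a pure Nash equilibrium precisely when no single-player deviation decreases $\Phi$, that is, when $\x$ is a local minimum of $\Phi$ with respect to single-coordinate flips. A \emph{global} minimizer of $\Phi$ is certainly such a local minimum, so it suffices to compute $\arg\min_{\x \in \{0,1\}^n} \Phi(\x)$ and return it; a minimizer exists since the domain is finite. Thus the entire task reduces to minimizing $\Phi$ in polynomial time.

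Second, I would observe that $\Phi$ is, up to an additive constant, a graph-cut energy, exactly as the social cost $\SC$ is in Observation~\ref{obs:min-cost}. Indeed $\Phi(\x) = \sum_i (x_i - b_i)^2 + D(\x)$, where the unary part contributes $b_i^2$ when $x_i = 0$ and $(1-b_i)^2$ when $x_i = 1$, while the pairwise part $D(\x)$ charges $w_{ij}$ for every discording edge. The only difference from $\SC$ is the coefficient of $D$ (here $1$ rather than $2$), which does not affect the reduction. Since each pairwise term satisfies the submodularity inequality $0 + 0 \le w_{ij} + w_{ij}$ (using $w_{ij} > 0$), the energy is graph-representable.

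Concretely I would build an $s$-$t$ flow network with one node per player: an arc from the source $s$ to $i$ and an arc from $i$ to the sink $t$ whose capacities encode the two unary costs $b_i^2$ and $(1-b_i)^2$, so that placing $i$ on the source side or on the sink side pays the appropriate term, together with an arc of capacity $w_{ij}$ between $i$ and $j$ for each edge $(i,j) \in E$, which is cut exactly when $x_i \neq x_j$. A minimum $s$-$t$ cut then corresponds to the partition minimizing $\Phi$, and it can be found by any polynomial max-flow algorithm; reading off which side each player lies on yields the optimal $\x$, which by the first paragraph is a pure Nash equilibrium. This is essentially the construction of Escoffier et al.~\cite{efgm13} underlying Observation~\ref{obs:min-cost}, adapted to $\Phi$.

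The only genuinely technical point is the gadget bookkeeping: checking that the unary capacities can be chosen (after a harmless additive normalization to keep them nonnegative) so that the value of every $s$-$t$ cut equals $\Phi$ of the corresponding profile plus a fixed constant. This is routine and identical in spirit to the min-cut reduction for $\SC$, so I do not expect it to be a real obstacle; the conceptual content is entirely in the observation that minimizing the potential --- rather than simulating the dynamics, which may take exponentially long --- is what delivers the polynomial bound.
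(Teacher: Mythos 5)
Your proof is correct, but it takes a genuinely different route from the paper's. The paper's argument is purely combinatorial and much lighter: start from the all-$0$ profile, repeatedly switch any player who prefers to play $1$, and stop when nobody wants to move. The point that makes this polynomial is monotonicity: a player who prefers $1$ when a set $S$ of players plays $1$ and everyone else plays $0$ is facing the worst case (for opinion $1$) among all profiles in which $S$ plays $1$, so he strictly prefers $1$ in every such profile and hence must play $1$ in \emph{every} Nash equilibrium; consequently each player flips at most once, the process ends within $n$ rounds, and by construction it stops at an equilibrium (in fact the one maximizing the number of $0$-players). Your argument instead computes a \emph{global} minimizer of $\Phi$ by a min $s$-$t$ cut, and your reduction is sound: the unary terms $b_i^2$ and $(1-b_i)^2$ are already nonnegative (so the normalization you worry about is not even needed), the pairwise terms are submodular since $w_{ij}>0$, and a global minimizer of an exact potential is a pure Nash equilibrium. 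Each approach buys something the other does not. Yours returns the potential minimizer $\x^\star$, which by the paper's Price-of-Stability argument satisfies $\SC(\x^\star) \leq 2\,\Phi(\x^\star) \leq 2\,\Phi(OPT) \leq 2\,\SC(OPT)$, so you obtain in polynomial time an equilibrium with social cost at most twice optimal (and, combined with Theorem~\ref{thm:PoS1} and Observation~\ref{obs:min-cost}, this is the same flow-based flavor the paper invokes for integer weights to get the lowest-cost equilibrium); the greedy gives no such quality guarantee. The paper's greedy, on the other hand, needs only utility comparisons rather than max-flow machinery, and it produces the extremal equilibria (maximizing the number of players on $0$, or symmetrically on $1$), a structural fact your cut-based construction does not provide.
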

Specifically, consider the following simple greedy algorithm. Start with the pure
profile {\bf x} where everyone plays 0. Check for players who prefer to play 1; notice that any
such player must play 1 in {\em any} Nash equilibrium; modify {\bf x} so that they play 1.
Repeat this until no player prefers to change his own strategy.

Notice that this algorithm finds an equilibrium
that maximizes the number of players
playing 0, and we could similarly find one that maximizes the number of players
playing 1. It does not necessarily find a socially optimal equilibrium, although
it follows from Theorem~\ref{thm:PoS1} below and Observation~\ref{obs:min-cost}
that when edge weights are integers, the lowest-cost equilibrium is computable in polynomial time.

We next give a characterization of Nash equilibria.
Let $B_i$ be the integer closer to the internal belief of the player $i$: that is, $B_i = 0$ if $b_i \leq 1/2$, $B_i = 1$ if $b_i > 1/2$. Moreover, let $W_i = \sum_{j} w_{ij}$ be the total weight of edges incident on $i$ and $W_i^s(\x) = \sum_{j \colon x_j = s} w_{ij}$ be the total weight of edges going from $i$ to players playing strategy $s$ in the profile $\x$.

The following lemma shows that, for every player, it is preferable to select the opinion closer to his own belief if and only if more than (almost) half of his (weighted) neighborhood has selected this opinion.
\begin{lemma}
\label{lemma:Nash}
In a Nash equilibrium profile $\x$, it holds that for each player $i$
$$
  x_i = \begin{cases}
         B_i,     & {\rm if}~~W_i^{B_i}(\x) \geq \frac{W_i}{2} - \delta;\\
         1 - B_i, & {\rm if}~~W_i^{B_i}(\x) \leq \frac{W_i}{2} - \delta;
        \end{cases}
$$
where $\delta = \frac{1}{2} - |B_i - b_i|$.
\end{lemma}
\begin{proof}
 Let us start by observing that $|B_i - b_i| = B_i + b_i - 2b_iB_i$ and then
 \begin{equation}
  \label{eq:B_inequality}
  (1 - B_i - b_i)^2 = (B_i - b_i)^2 - 2|B_i - b_i| + 1.
 \end{equation}
 Now, we first prove that a profile for which the above conditions hold is a Nash equilibrium, and then we prove that every other profile is not in equilibrium.
 
Let $\x$ be a profile for which the above conditions hold for every player and $i$ be one such player. We consider first the case that $W_i^{B_i}(\x) \geq W_i/2 - \delta$: then we have $x_i = B_i$.
There is no incentive for $i$ to play $1 - B_i$ since
\begin{align*}
 u_i(\x_{-i}, 1 - B_i) & = - \left[(1 - B_i - b_i)^2 + W_i^{B_i}(\x)\right]\\
 & \leq - \left[ (B_i - b_i)^2 + \left(\frac{W_i}{2} + \delta\right)\right]\\
 & \leq - \left[(B_i - b_i)^2 + \left(W_i - W_i^{B_i}(\x)\right)\right] = u_i(\x),
\end{align*}
where we used \eqref{eq:B_inequality} for the first inequality.
Similarly, we can prove that if $W_i^{B_i}(\x) \leq \frac{W_i}{2} - \delta$, and thus $x_i = 1 - B_i$, then $u_i(\x_{-i}, B_i) \leq u_i(\x)$. Hence, no player has incentive to switch his opinion in $\x$ ad thus $\x$ is a Nash equilibrium.

 Now consider a profile $\y$ for which the conditions above do not hold for some player $i$. It must be the case that $W_i^{B_i}(\y) \neq W_i/2 - \delta$. If $W_i^{B_i}(\y) > W_i/2 - \delta$, this means that $y_i = 1 - B_i$; similarly, if $W_i^{B_i}(\y) < \frac{W_i}{2} - \delta$, we have that $y_i = B_i$. However, it is immediate to check that in the former case $u_i(\y) < u_i(\y_{-i}, B_i)$ and in the latter case $u_i(\y) < u_i(\y_{-i}, 1 - B_i)$.
\end{proof}
Roughly speaking, Lemma~\ref{lemma:Nash} identifies the point at which a player's neighbors dictate his strategy and overcome his internal belief.

\paragraph{Price of Anarchy and Stability.}
\begin{obs}\label{obs:PoA}
The price of anarchy of the social network games studied here is unbounded.
\end{obs}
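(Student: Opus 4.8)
The plan is to exhibit a one-parameter family of opinion games whose price of anarchy diverges, by driving the social optimum towards $0$ while keeping a Nash equilibrium bounded away from $0$. The cleanest instance I would use is the two-player game: a single edge $(1,2)$ of weight $w_{12}=1$, with both internal beliefs set to a small value $b_1=b_2=\epsilon$ where $0<\epsilon<1/2$. The intuition driving the construction is that weak beliefs (near $0$) make consensus on opinion $0$ almost free, whereas the unit coordination weight is already strong enough to lock in the ``wrong'' consensus on opinion $1$ as a self-sustaining equilibrium.

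First I would identify the bad equilibrium. Consider the profile $\x=(1,1)$. Since $\epsilon<1/2$ we have $B_i=0$ for both players, so each plays $1-B_i$, and by Lemma~\ref{lemma:Nash} it suffices to verify $W_i^{B_i}(\x)\le W_i/2-\delta$ with $\delta=\tfrac12-\epsilon$. Here $W_i^{0}(\x)=0$ (the neighbour also plays $1$) and $W_i=1$, so the condition reads $0\le \tfrac12-(\tfrac12-\epsilon)=\epsilon$, which holds, strictly for $\epsilon>0$. Hence $(1,1)$ is a Nash equilibrium, with social cost $\SC(1,1)=2(1-\epsilon)^2$.

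Next I would upper bound the optimum by exhibiting the consensus-on-$0$ profile $\y=(0,0)$, whose cost is $\SC(0,0)=2\epsilon^2$ (no discording edge, only the two tiny belief terms). Thus the optimum is at most $2\epsilon^2$, and
\[
\text{PoA}\;\ge\;\frac{\SC(1,1)}{\SC(0,0)}\;=\;\frac{(1-\epsilon)^2}{\epsilon^2}\;\xrightarrow{\ \epsilon\to 0\ }\;\infty,
\]
so the ratio exceeds any prescribed bound once $\epsilon$ is small enough. The same argument scales to a $k$-clique with all beliefs $\epsilon$ and intra-clique weight $w$ satisfying $(k-1)w\ge 1-2\epsilon$, should a more robust (strictly stable) equilibrium be desired.

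The routine parts are the two social-cost evaluations; the step that needs genuine care is confirming that the high-cost profile is truly a Nash equilibrium rather than merely a bad-looking profile, which is exactly where Lemma~\ref{lemma:Nash} does the work, and why the unit weight is chosen to sit strictly on the right side of the stability threshold for $\epsilon>0$. The only other thing to watch is that $(0,0)$ really certifies a small optimum: since we need only an upper bound on the optimal cost, exhibiting this single profile suffices, and a quick comparison against the disagreement profiles $(0,1),(1,0)$ (whose cost is $\approx 3$) confirms no cheaper alternative is being overlooked.
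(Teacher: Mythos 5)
Your proposal is correct and follows essentially the same route as the paper: the paper's own proof takes a clique with all beliefs equal to $0$ and unit edge weights, notes that the all-$0$ profile has social cost $0$, and invokes Lemma~\ref{lemma:Nash} to certify the all-$1$ consensus as a Nash equilibrium of cost $n>0$. Your version merely perturbs the beliefs to $\epsilon>0$ (so the optimum is $2\epsilon^2$ instead of exactly $0$) and lets $\epsilon\to 0$, which avoids the zero-denominator degeneracy but is otherwise the same wrong-consensus construction, certified by the same lemma.
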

To see this, consider the opinion game on a clique where each player has internal belief $0$
and each edge has weight $1$: the profile where each player has opinion $0$ has social cost $0$.
By Lemma~\ref{lemma:Nash}, the profile where each player has opinion $1$ is a Nash
equilibrium and its social cost is $n > 0$. This is in sharp contrast with the
bound $9/8$ proved in \cite{bkoFOCS11}.

We complete this section by proving bounds on the Price of Stability. 
\begin{theorem}
For opinion games, the price of stability is $2$.
\end{theorem}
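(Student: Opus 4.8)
The plan is to prove a matching pair of bounds: an upper bound of $2$ that holds for every opinion game, obtained by the standard potential argument, and a lower bound of $2$ approached by a one-parameter family of instances with fractional edge weights.

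For the upper bound I would exploit that the potential $\Phi$ and the social cost $\SC$ differ only in the coefficient of the discord term. Since $\sum_i (x_i - b_i)^2 \ge 0$ and $D(\x) \ge 0$ for every profile, comparing $\Phi(\x) = \sum_i (x_i - b_i)^2 + D(\x)$ with $\SC(\x) = \sum_i (x_i - b_i)^2 + 2 D(\x)$ gives $\Phi(\x) \le \SC(\x) \le 2\Phi(\x)$. Let $\x^*$ be a global minimizer of $\Phi$ over $\{0,1\}^n$. Because $\Phi$ is an exact potential (previous Lemma), no unilateral deviation can strictly reduce a player's cost at $\x^*$, so $\x^*$ is a pure Nash equilibrium. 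Writing $\OO$ for a socially optimal profile, the chain $\SC(\x^*) \le 2\Phi(\x^*) \le 2\Phi(\OO) \le 2\SC(\OO)$, where the middle step uses optimality of $\x^*$ for $\Phi$ and the last uses $D(\OO) \ge 0$, shows the best equilibrium costs at most twice the optimum, i.e.\ the price of stability is at most $2$.

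For the lower bound I would use the simplest conflicting gadget: two players joined by a single edge of weight $w$, with beliefs $b_1 = 0$ and $b_2 = 1$. A short case analysis, either directly from the utilities or via Lemma~\ref{lemma:Nash}, shows that for $w < 1$ the profile $(0,1)$ in which each player follows his own belief is the unique Nash equilibrium, of social cost $2w$, whereas the agreeing profiles $(0,0)$ and $(1,1)$, each of social cost $1$, are \emph{not} equilibria, since supporting an equilibrium there requires $w \ge 1$. Hence for $1/2 < w < 1$ the optimum equals $1$ and the price of stability equals $2w$, which tends to $2$ as $w \to 1^-$.

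The main subtlety to flag is that the value $2$ is a supremum that is approached but not attained: exactly at $w = 1$ the profiles $(0,0)$ and $(1,1)$ become (weakly) stable at cost $1$, collapsing the ratio back to $1$. This is precisely why the lower-bound family must use fractional weights, and it is consistent with the earlier claim (Theorem~\ref{thm:PoS1}) that integer weights force the price of stability down to $1$. The upper bound is essentially routine; the only real work is the equilibrium bookkeeping in the gadget and verifying that none of the inequalities in the chain can be improved.
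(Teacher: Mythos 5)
Your proof is correct, and the upper bound is exactly the paper's argument: take the potential minimizer $\x^\star$, note it is a Nash equilibrium, and chain $\SC(\x^\star) \leq 2\Phi(\x^\star) \leq 2\Phi(OPT) \leq 2\SC(OPT)$ using $\Phi \leq \SC \leq 2\Phi$. Where you genuinely diverge is the lower bound. The paper uses a star with $n+1$ nodes, every edge of weight $1/n$, beliefs $1$ on all leaves but one, and $0$ on the center and the remaining leaf; there the unique Nash equilibrium (everyone plays his own belief) costs $2(n-1)/n$ against an optimum of $(n+2)/n$, so the ratio $2(n-1)/(n+2)$ tends to $2$ as the network grows. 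You instead use a minimal two-player gadget: one edge of weight $w \in (1/2,1)$, opposing beliefs $b_1=0$, $b_2=1$. Your bookkeeping checks out: $(0,1)$ is the unique equilibrium (each agreeing profile would need $w \geq 1$ to be stable, and $(1,0)$ is never stable), it costs $2w$, the optimum is $1$, and $2w \to 2$ as $w \to 1^-$. Your construction is simpler -- a four-profile case analysis instead of the paper's accounting over star profiles -- but it drives the ratio by pushing the edge weight to the tie-breaking threshold $w=1$, whereas the paper's family keeps all incentives strict and weights fixed at $1/n$, letting the number of players grow instead; that variant shows the gap is not an artifact of near-tie weights and persists in large networks. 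Both approaches correctly use fractional weights, as they must by Theorem~\ref{thm:PoS1}, and both exhibit $2$ as a supremum that is approached but never attained; your observation about the collapse to ratio $1$ at $w=1$ is the right sanity check for this.
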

\begin{proof}
We begin by proving the upper bound of $2$ on the price of stability for any opinion game $\G$. Let $\x^\star$ be the profile of $\G$ minimizing the potential function and $OPT$ be the profile minimizing the social cost. Therefore, we have
\begin{align*}
\SC(\x^\star) = \Phi(\x^\star) + D(\x^\star) \leq 2 \cdot \Phi(\x^\star) \leq 2 \cdot \Phi(OPT) \leq 2 \cdot \SC(OPT).
\end{align*}
Since $\x^\star$ is a Nash equilibrium then the upper bound to the price of stability follows.

For the lower bound, consider an opinion game $\G$ defined on a star-shaped social network with $n+1$ nodes, with $n > 4$, where each edge is weighted $1/n$. Let each external node, but one, have belief $1$. The center and the remaining external node have instead belief $0$. 

We now argue that the social cost is minimized by the profile in which all nodes apart from the center play their own belief. Indeed, its cost is $1 + 2/n = (n+2)/n$ (we have one discording edge weighted $1/n$ and additionally the center has a cost of $1$ since he is playing the strategy opposite to his own belief). All profiles in which at least two nodes play the strategy opposite to their belief have cost at least $2 > 1+2/n$. All profiles in which there is only one player, different from the center, playing opposite to his belief will have $k >1$ discording edges and then a cost of $ 1 + 2k/n > (n+2)/n$. Finally, the profile in which each player plays his belief has $n-1$ discording edges for a social cost of $2(n-1)/n = (2n - 2)/n > (n+2)/n$.

Now we note that the latter profile, that we will call $\x$, is the unique Nash equilibrium of the game. Indeed, for all nodes but the center, it is a dominant strategy to play their belief (by doing so, they will have a cost of at most $1/n$ while by switching their cost would be at least $1$). But then by Lemma \ref{lemma:Nash}, the center has a strict incentive to play his belief as well since, letting $i$ be the center, $W_i^0(\x) = 1/n > 0=W_i/2 -\delta$.

Therefore, the price of stability of this game is $2(n-1)/(n+2)$, which approaches $2$ as $n$ increases.
\end{proof}

We show that above bound can be improved when edge weights are integer.
Indeed, it turns out that, in this special case, the profile that minimizes
the social cost is always a Nash equilibrium.

\begin{theorem}\label{thm:PoS1}
 For an opinion game with integer weights, the price of stability is $1$.
\end{theorem}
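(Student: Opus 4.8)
The plan is to establish the stronger statement that the social optimum is itself a Nash equilibrium; since the price of stability is always at least $1$, this immediately gives the claim together with the optimum being the reference point. Let $\x^\star$ be a profile minimizing $\SC$. I would fix an arbitrary player $i$, write $s = x^\star_i$ for his opinion in $\x^\star$, and compare $\x^\star$ against the profile obtained when $i$ switches to $1-s$. In a two-strategy game this is the only deviation available to $i$, so ruling it out for every $i$ establishes that $\x^\star$ is a Nash equilibrium.

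The first step is to record the two relevant differences. Switching player $i$ affects only his belief term $(x_i - b_i)^2$ and the discording edges incident on $i$, exactly as in the proof of the potential-function lemma. Writing $A = (1 - s - b_i)^2 - (s - b_i)^2$ for the change in the belief term and $B = W_i^{s}(\x^\star) - W_i^{1-s}(\x^\star)$ for the difference between the weight $i$ agrees with and the weight $i$ disagrees with, a direct calculation gives
\[
\SC(\x^\star_{-i}, 1-s) - \SC(\x^\star) = A + 2B, \qquad c_i(\x^\star_{-i},1-s) - c_i(\x^\star) = A + B,
\]
the factor $2$ multiplying $B$ being precisely the discrepancy between $\SC$ and the individual cost $c_i = -u_i$, since each discording edge incident on $i$ is counted by both endpoints in the social cost but only once in $c_i$. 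The Nash condition for $i$ is exactly $A + B \ge 0$, whereas optimality of $\x^\star$ against single deviations yields $A + 2B \ge 0$.

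The crux of the argument, and the only place where integrality is used, is to upgrade $A + 2B \ge 0$ to $A + B \ge 0$. Here I would invoke two facts: first, $|A| \le 1$, since $A \in \{1 - 2b_i,\, 2b_i - 1\}$ and $b_i \in [0,1]$; second, $B$ is an integer, being a difference of sums of integer edge weights. From $A + 2B \ge 0$ and $A \le 1$ we get $B \ge -A/2 \ge -\tfrac12$, hence $B \ge 0$ by integrality. A short case analysis then finishes the step: if $B \ge 1$ then $A + B \ge -1 + 1 = 0$; if $B = 0$ then the optimality inequality reads $A \ge 0$, so again $A + B = A \ge 0$. In either case $A + B \ge 0$, so $i$ has no improving deviation.

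Since $i$ was arbitrary, $\x^\star$ is a Nash equilibrium, and because its social cost equals the optimum, the price of stability is $1$. I expect the main obstacle to be purely conceptual: recognizing that integrality of the weights is exactly what closes the ``factor-two gap'' between the social-optimality inequality $A+2B\ge 0$ and the Nash inequality $A+B\ge 0$; once that is seen, everything else is the same bookkeeping already used for the potential function. Equivalently, one could route the same argument through Lemma~\ref{lemma:Nash} by checking that $W_i^{B_i}(\x^\star)$ meets the stated threshold, but the direct cost comparison seems cleaner.
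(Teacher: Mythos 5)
Your proof is correct. Both your argument and the paper's establish the stronger claim that, with integer weights, the social optimum is itself a Nash equilibrium, and both ultimately hinge on the same structural fact: a unilateral switch by player $i$ changes the social cost by the change in $i$'s own cost \emph{plus} one extra copy of the change in discording weight incident on $i$ (your ``factor-two gap''), and integrality of the weights closes that gap. Where you genuinely differ is the route. The paper argues by contradiction through Lemma~\ref{lemma:Nash}: it picks a player violating the equilibrium characterization, does per-player cost accounting (the deviator, neighbors agreeing with the new opinion, neighbors disagreeing), reaches
$\SC(\x_{-i}, B_i) - \SC(\x) = 2\left(W_i - 2W_i^{B_i}(\x) + |B_i - b_i|\right) - 1$,
and then invokes integrality via the threshold conditions, with a case split on $|B_i - b_i| < 1/2$ versus $|B_i - b_i| = 1/2$. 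You instead work directly and globally: optimality gives $A + 2B \ge 0$, the Nash condition is $A + B \ge 0$, and since $|A| \le 1$ while $B$ is an integer, you get $B \ge 0$ and finish with the two-line case analysis $B = 0$ versus $B \ge 1$. Your version is self-contained (no appeal to the Nash characterization lemma), avoids the $\delta$-threshold bookkeeping, and makes the role of integrality maximally transparent as the device that upgrades the optimality inequality to the equilibrium inequality; the paper's version is computationally heavier but stays in the notation of its Lemma~\ref{lemma:Nash}, which it reuses throughout Section~2. Either proof would serve; yours is arguably the cleaner exposition of the same underlying phenomenon.
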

\begin{proof}
 Let $\G$ be an $n$-player opinion game and let $\x$ be the profile that minimizes the social cost of $\G$. Assume for a contradiction that $\x$ is not a Nash equilibrium. This means there is a player $i$ for which the condition of Lemma~\ref{lemma:Nash} does not hold, i.e., either $x_i = 1 - B_i$ and $W_i^{B_i}(\x) > W_i/2 - \delta$ or $x_i = B_i$ and $W_i^{B_i}(\x) < W_i/2 - \delta$. Let us consider the first case (the second one can be handled similarly): we will show that the profile $(\x_{-i}, B_i)$ achieves a social cost lower than $\x$ and thus a contradiction. We evaluate the difference between $c_j(\x_{-i}, B_i)$ and $c_j(\x)$ for each player $j$. If $j = i$, then
 $$
  c_i(\x_{-i}, B_i) - c_i(\x) = W_i - W_i^{B_i}(\x) + (B_i - b_i)^2 - W_i^{B_i}(\x) - (1 - B_i - b_i)^2 = W_i - 2W_i^{B_i}(\x) + 2|B_i - b_i| - 1,
 $$
 where we used \eqref{eq:B_inequality}. Consider now a neighbor $j$ of $i$ such that $x_j = B_i$. Then,
 $$
  c_j(\x_{-i}, B_i) - c_j(\x) = W_j - W_j^{B_i}(\x) - w_{ij} + (B_i - b_j)^2 - W_j + W_j^{B_i}(\x) - (B_i - b_j)^2 = - w_{ij}.
 $$
 For a neighbor $j$ of $i$ such that $x_j = 1 - B_i$, we obtain
 $$
  c_j(\x_{-i}, B_i) - c_j(\x) = W_j^{B_i}(\x) + w_{ij} + (1 - B_i - b_j)^2 - W_j^{B_i}(\x) - (1 - B_i - b_j)^2 = w_{ij}.
 $$
 Finally, note that players that are not in the neighborhood of $i$ have the same cost in both profiles. Thus, the difference between social costs is:
 \begin{align*}
  \SC(\x_{-i}, B_i) - \SC(\x) & = W_i - 2W_i^{B_i}(\x) + 2|B_i - b_i| - 1 - \sum_{\substack{j \colon (i,j) \in E\\x_j = B_i}} w_{ij} + \sum_{\substack{j \colon (i,j) \in E\\x_j = 1 - B_i}} w_{ij}\\
  & = 2 \left(W_i - 2W_i^{B_i}(\x) + |B_i - b_i|\right) - 1.
 \end{align*}
Observe that by definition of $B_i$, $|B_i - b_i| \leq 1/2$. We now distinguish two cases. 

If $|B_i - b_i| < 1/2$ then $W_i^{B_i}(\x) > W_i/2 - \delta$ implies $W_i^{B_i}(\x) \geq W_i/2$ for integer weights. Therefore,  $\SC(\x_{-i}, B_i) - \SC(\x)= 2 \left(W_i - 2W_i^{B_i}(\x) + |B_i - b_i|\right) - 1 \leq  2|B_i - b_i| - 1 < 0$ and this concludes the proof in this case. 

If $|B_i - b_i| = 1/2$ then $W_i^{B_i}(\x) > W_i/2 - \delta$ is equivalent to $W_i^{B_i}(\x) > W_i/2$ and similarly to the case above we can conclude $\SC(\x_{-i}, B_i) - \SC(\x) < 0$.
\end{proof}

\section{Best-response dynamics}
To prove the convergence rate of best-response dynamics, we use the following
definition, previously used in~\cite{DM11}.
\begin{definition}
 Two games $\G, \G'$ are \emph{pure best-response equivalent}\footnote{In~\cite{DM11} they just use ``equivalent''.}
 if they have the same sets of players and pure-strategies,
 and for any player and pure-strategy profile, that player's best response is the same in $\G$ as in $\G'$.
\end{definition}
We prove bounds on the time the best-response dynamics for a game $\G$ takes to converge by analyzing the dynamics on a game $\G'$ that is \pbre to $\G$ but such that beliefs are ``nicely'' distributed. We say that a belief $b \in [0,1]$ is \emph{threshold} for player $i$ in opinion game $\G$ if player $i$ with belief $b$ in $\G$ is indifferent between playing strategy $0$ and strategy $1$ for some strategies of the players other than $i$. %We denote by ${\cal B}_i$ the set of threshold beliefs for a player $i$ in $\G$.

\begin{lemma}\label{lemma:bre}
Given an opinion game $\G$ and a player $i$ in $\G$, we define a finite set ${\cal B}_i$
of numbers in $[0,1]$ as follows. Let ${\cal B}'_i$ contain 0 and 1 together with every
threshold belief of player $i$. Let ${\cal B}_i$ contains every element of ${\cal B}'_i$,
and in addition, for every pair of consecutive elements $b'_1$, $b'_2$ of ${\cal B}'_i$, let ${\cal B}_i$
contain at least one element in the interval $(b'_1,b'_2)$ (for example, $\frac{1}{2}(b'_1+b'_2)$).

Then, any opinion game $\G$ is \pbre to an opinion game $\G'$ in which the beliefs of every
player $i$ in $\G$ have been replaced by an element of ${\cal B}_i$.
\end{lemma}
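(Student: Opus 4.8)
The plan is to show that replacing each player's belief $b_i$ with a suitably chosen nearby value $\tilde b_i \in {\cal B}_i$ leaves every player's best-response map unchanged. The key observation is that player $i$'s best response depends on $b_i$ only through which of the two strategies minimizes his cost, and that the cost difference between $i$'s two strategies, for a \emph{fixed} profile of the others, is a monotone (indeed affine) function of the quantity $|B_i-b_i|$ via \eqref{eq:B_inequality}. Concretely, by Lemma~\ref{lemma:Nash} the best response of $i$ to a profile $\x_{-i}$ is determined entirely by comparing $W_i^{B_i}(\x)$ against the single threshold $\frac{W_i}{2}-\delta$, where $\delta=\frac12-|B_i-b_i|$ and $B_i$ is the integer nearest $b_i$. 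So I would first isolate this threshold as the only way $b_i$ enters the decision.

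First I would make precise the notion of a \emph{threshold belief}: for player $i$, a belief $b$ is threshold if there is some profile $\x_{-i}$ of the other players for which $i$ is indifferent, i.e. $W_i^{B_i(b)}(\x)=\frac{W_i}{2}-\delta(b)$. Since $W_i^{B_i}(\x)$ ranges over only finitely many values (it is a sum of a subset of the incident edge weights, one value per subset $S$ of neighbors), and for each such value the equation $\frac{W_i}{2}-\delta(b)=W_i^{B_i(b)}(\x)$ pins down at most finitely many $b$, the set of threshold beliefs is finite, so ${\cal B}'_i$ and hence ${\cal B}_i$ are well-defined finite sets. I would note that $B_i(b)$ itself can only change at $b=1/2$, and that $1/2$ is already captured (it is either a threshold point or sits between consecutive points of ${\cal B}'_i$), so I can treat the two regimes $b\le 1/2$ and $b>1/2$ uniformly through the quantity $|B_i-b_i|$.

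Next comes the core argument. Fix a player $i$ and consider the interval between two consecutive elements $b'_1,b'_2$ of ${\cal B}'_i$ in which the original belief $b_i$ lies; by construction ${\cal B}_i$ contains some $\tilde b_i$ in the open interval $(b'_1,b'_2)$, and I claim $b_i$ and $\tilde b_i$ induce identical best responses. The point is that on the open interval $(b'_1,b'_2)$ there are \emph{no} threshold beliefs, so for every fixed profile $\x_{-i}$ the sign of the cost difference $c_i(\x_{-i},0)-c_i(\x_{-i},1)$ is constant as the belief varies over $(b'_1,b'_2)$: it can never vanish (that would make the belief threshold), and it depends continuously (affinely in $|B_i-b_i|$, which is itself continuous on an interval avoiding $1/2$) on the belief, so by the intermediate value property it cannot change sign without passing through zero. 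Hence the best response of $i$ to every $\x_{-i}$ is the same for belief $b_i$ as for belief $\tilde b_i$. Doing this for every player $i$ simultaneously produces the game $\G'$ with beliefs $\tilde b_i\in{\cal B}_i$, and since each player's best-response map is individually preserved, $\G$ and $\G'$ are pure best-response equivalent by definition.

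The main obstacle I anticipate is handling the point $b=1/2$ cleanly, where $B_i$ flips and the expression $|B_i-b_i|$ has a kink, so that the ``affine/continuous in $|B_i-b_i|$'' claim must be argued on each side of $1/2$ separately rather than across it; the remedy is precisely that $1/2$ is forced to be an endpoint of the relevant subinterval (either it is a threshold belief, or it lies strictly between two consecutive threshold beliefs and is thus not in the interior of any interval I reason about once I insert it). A secondary point requiring care is verifying that indifference can only occur at finitely many beliefs, which follows from $W_i^{B_i}(\x)$ taking finitely many values together with the affine dependence of the indifference condition on the belief; the genuinely routine calculations (expanding the cost difference via \eqref{eq:B_inequality}) I would not belabor.
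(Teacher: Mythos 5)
Your overall strategy --- best responses can only change at threshold beliefs, so moving a belief within an interval free of thresholds preserves every best response --- is exactly the paper's argument (the paper's own proof is a terser statement of this same fact), and your finiteness argument for ${\cal B}'_i$ is fine. The one genuine weak point is your treatment of $b_i = 1/2$. Your proposed remedy --- ``inserting'' $1/2$ so that it becomes an endpoint of every interval you reason about --- is not available to you: the sets ${\cal B}'_i$ and ${\cal B}_i$ are fixed by the lemma statement, and when $1/2$ is \emph{not} a threshold belief, the construction allows the replacement $\tilde b_i$ to lie on the opposite side of $1/2$ from $b_i$, both inside the same threshold-free interval $(b'_1, b'_2)$; nothing in the definition of ${\cal B}_i$ separates the two sides of $1/2$. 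In that case, arguing ``on each side of $1/2$ separately'' shows the best response is constant on $(b'_1, 1/2)$ and constant on $(1/2, b'_2)$, but not that the two constants agree, which is precisely what you need; as written, this straddling case is unhandled.

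The gap is real but easily closed, because the obstacle is an artifact of your parametrization: routing the comparison through $|B_i - b_i|$ and Lemma~\ref{lemma:Nash} creates the kink at $1/2$ (and, worse, $W_i^{B_i}(\x)$ changes meaning there). Instead, compare the two costs directly: for fixed $\x_{-i}$,
$c_i(\x_{-i},0) - c_i(\x_{-i},1) = b_i^2 - (1-b_i)^2 + W_i^1(\x) - W_i^0(\x) = 2b_i - 1 + W_i^1(\x) - W_i^0(\x)$,
which is affine --- in particular continuous and strictly increasing --- in $b_i$ on all of $[0,1]$, with no special behavior at $1/2$. Its unique zero is exactly the threshold belief associated with $\x_{-i}$, so on any interval containing no threshold belief this quantity has constant sign for every $\x_{-i}$ simultaneously, and your sign-constancy argument goes through across $1/2$ with no case analysis. (Equivalently: if $1/2$ is not a threshold, the cost difference is nonzero and continuous at $1/2$, so the constant signs on the two sides must agree.) With this substitution your proof is correct and follows the same route as the paper's.
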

\begin{proof}
Fix an opinion game $\G$ and player $i$. Let $\G'$ be an opinion game defined on the same social network as $\G$.
For each player $i$ in $\G$, if his belief $b_i$ is threshold, then it belongs to ${\cal B}_i$,
and we keep it the same in $\G'$.
If $b_i$ is not a threshold belief, then in $\G'$ it is replaced by one of the elements of ${\cal B}_i$ that lies in
the subinterval bounded by 2 consecutive elements of ${\cal B}'_i$ containing $b_i$.

We claim that $\G'$ constructed this way, is \pbre to $\G$.
Consider a player $i$, and note that for a pair of beliefs $b_i$ and $b'_i$ to result in
different best responses, there must be a strategy profile for the remaining players
for which the best response under belief $b_i$ is opposite to the one under belief $b'_i$.
Note however that a player's beliefs have been changed in a way that ensures that there is no
such pair of pure-strategy profiles.
\end{proof}

\subsection{A special case: unitary weights}
We start by considering the special case in which $w_{ij} = 1$ for each edge $(i,j)$.
This helps to develop the ideas that we use to prove a bound for more general weights.

For a player $i$, consider ${\cal B}'_i$ as defined in Lemma~\ref{lemma:bre}.
In this special case, it is easy to see that $\mathcal{B}'_i = \{0,1\}$
if the neighborhood of $i$ has odd size and $\mathcal{B}'_i = \{0,\frac{1}{2},1\}$, otherwise.
Thus, by Lemma \ref{lemma:bre}, in both cases, $\G$ is \pbre to an opinion game $\G'$ where each player $i$ has belief in
$\mathcal{B}_i = \left\{0, \frac{1}{4}, \frac{1}{2}, \frac{3}{4}, 1 \right\}$.
The following theorem shows that the best-response dynamics quickly converges to a Nash equilibrium in $\G'$ and hence in $\G$.
\begin{theorem}
\label{thm:brd_special}
The best-response dynamics for an $n$-player opinion game converges to a Nash
equilibrium after a number of steps that is polynomial in $n$.
\end{theorem}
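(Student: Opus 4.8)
The plan is to use the potential function $\Phi$ to argue convergence, but the bare potential argument only gives a bound in terms of how many distinct values $\Phi$ can take, which could a priori be exponential. The key leverage is Lemma~\ref{lemma:bre}: it suffices to analyze the best-response dynamics on the reduced game $\G'$ in which every player's belief lies in the finite set $\mathcal{B}_i = \{0, \frac{1}{4}, \frac{1}{2}, \frac{3}{4}, 1\}$. Since the dynamics on $\G'$ produces exactly the same sequence of best responses as on $\G$ (they are \pbre), any bound on the number of improving moves in $\G'$ transfers verbatim to $\G$.

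First I would argue that on $\G'$ the potential $\Phi$ takes values in a nicely structured set. With unitary weights, the discording-edge term $D(\x)$ is always an integer in $\{0, 1, \dots, |E|\}$. The belief term $\sum_i (x_i - b_i)^2$ contributes, for each player, one of the values $b_i^2$ or $(1-b_i)^2$ with $b_i \in \{0,\frac14,\frac12,\frac34,1\}$; these are all of the form $k/16$ for integer $k$. Hence $\Phi(\x)$, multiplied by $16$, is always an integer, and it lies in an interval whose length is $O(|E| + n) = O(n^2)$. Therefore $16\,\Phi$ takes only $O(n^2)$ distinct integer values.

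Next I would invoke the standard potential-game argument: in an exact potential game, every best response by a player who strictly improves his utility strictly decreases $\Phi$ by exactly the amount his cost decreases. Since $16\,\Phi$ is integer-valued and strictly decreasing along the dynamics, each improving move lowers $16\,\Phi$ by at least $1$, so the number of improving moves before reaching a Nash equilibrium is at most the range of $16\,\Phi$, namely $O(n^2)$. (A player who is indifferent does not move, so every recorded step is strictly improving.) This yields convergence in a number of best-response steps polynomial in $n$.

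The step I expect to require the most care is verifying that $16\,\Phi$ is genuinely integer-valued and controlling its range precisely, since the argument hinges on the minimum strictly positive decrement being bounded below by a fixed constant. Concretely I must check that for every $b_i \in \mathcal{B}_i$, both $b_i^2$ and $(1-b_i)^2$ are integer multiples of $\frac{1}{16}$, and that summing $n$ such terms plus the integer $D(\x)$ keeps $16\,\Phi$ an integer. Once this granularity is established, the bound follows immediately; the only subtlety is to confirm that a strictly improving move cannot change $\Phi$ by an amount smaller than $\frac{1}{16}$, which is exactly what the integrality of $16\,\Phi$ guarantees.
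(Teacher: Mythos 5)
Your proof is correct, and it shares the paper's skeleton: reduce to a pure best-response equivalent game $\G'$ with beliefs in $\left\{0,\frac14,\frac12,\frac34,1\right\}$ via Lemma~\ref{lemma:bre}, bound $0 \leq \Phi \leq n + |E| = O(n^2)$, and lower-bound the potential drop of each improving move by a fixed constant. Where you differ is in how that last, crucial step is established. The paper proves a per-step decrease of at least $\frac12$ by an explicit case analysis on $\Delta D = D_i(\x) - D_i(s,\x_{-i})$, using the identity $\Phi(\x)-\Phi(s,\x_{-i}) = 2|x_i - b_i| - 1 + \Delta D$ and the structure of $\mathcal{B}_i$ in each of the ranges $\Delta D > 1$, $-1 < \Delta D \leq 1$, $\Delta D \leq -1$. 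You instead observe that with unit weights $D(\x)$ is an integer and each $(x_i - b_i)^2$ with $b_i \in \left\{0,\frac14,\frac12,\frac34,1\right\}$ is a multiple of $\frac1{16}$, so $16\,\Phi$ is integer-valued; since an exact potential makes every strictly improving move strictly decrease $16\,\Phi$, each move decreases it by at least $1$. Your granularity argument is more elementary, as it dispenses with the case analysis entirely, at the cost of a weaker constant ($\frac1{16}$ versus $\frac12$), which is immaterial for the polynomial bound; your handling of ties (indifferent players do not move) matches the paper's implicit assumption that $c_i(\x) > c_i(s,\x_{-i})$ at every recorded step. One thing the paper's style buys is a template that it reuses verbatim for finite-precision weights in Theorem~\ref{thm:brd_general}; your integrality argument also adapts to that setting (by Lemma~\ref{lem:b_integer} the relevant beliefs $b$ there satisfy $4 \cdot 10^k b \in \mathbb{Z}$, so a suitable multiple of the potential of the integer version is again integer-valued), though a direct adaptation yields a bound with $10^{2k}$ in place of the paper's $10^k$.
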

\begin{proof}
Let $\G$ be an $n$-player opinion game and let $\G'$ be a \pbre game having beliefs in
the set $\mathcal{B}_i = \left\{0, \frac{1}{4}, \frac{1}{2}, \frac{3}{4}, 1 \right\}$.
We show that best-response dynamics converges quickly on $\G'$.

We begin by observing that for every profile $\x$, we have $0 \leq \Phi(\x) \leq W + n$, where $W = \sum_{(i,j) \in E} w_{ij} \leq n^2$. Thus, the theorem follows by showing that at each time step the cost of a player decreases by at least a constant value. Fix $\x_{-i}$, the opinions of players other than $i$, and let $x_i$ be the strategy currently played by player $i$ and $s$ be his best response. By definition of best response, we have $c_i(\x) > c_i(s, \x_{-i})$. We will show that
\begin{equation}\label{eq:BRD:convergence}
\begin{aligned}
 \frac{1}{2} & \leq \Phi(\x)-\Phi(s, \x_{-i}) = c_i(x_i,\x_{-i}) - c_i(s,\x_{-i}) = (x_i- b_i)^2-(s-b_i)^2+ D_i(\x) - D_i(s, \x_{-i})\\
 & = 2|x_i - b_i| - 1 + \Delta D,
\end{aligned}
\end{equation}
where the last equality follows by \eqref{eq:B_inequality} (with $x_i$ in place of $B_i$) and by using $\Delta D$ as a shorthand for $D_i(\x) - D_i(s, \x_{-i})$.
We distinguish three cases based on the value of $\Delta D$.

\smallskip \noindent\underline{If $\Delta D > 1$,} then it will be the case that $\Delta D \geq 2$. Since the difference between the squares is bounded from below by $-1$, then \eqref{eq:BRD:convergence} follows.

\smallskip \noindent\underline{If $-1 < \Delta D \leq 1$,} then it will be the case that $\Delta D \in\{0,1\}$. If $x_i = 0$, then $c_i(x_i,\x_{-i})>c_i(s,\x_{-i})$ implies $b_i > \frac{1 - \Delta D}{2}$ and, from $b_i \in \mathcal{B}_i$, $b_i \geq \frac{1 - \Delta D + 1/2}{2}$; for these values of $b_i$ and $x_i$ \eqref{eq:BRD:convergence} follows. Similarly, if $x_i = 1$, then $c_i(x_i,\x_{-i})>c_i(s,\x_{-i})$ implies $b_i < \frac{1 + \Delta D}{2}$ and, from $b_i \in \mathcal{B}_i$, $b_i \leq \frac{1 + \Delta D - 1/2}{2}$; for these values of $b_i$ and $x_i$ \eqref{eq:BRD:convergence} follows.

\smallskip \noindent\underline{If $\Delta D \leq -1$,} then we will reach a contradiction. Indeed, since the difference between the squares is bounded from above by $1$, we have $c_i(x_i,\x_{-i})\leq c_i(s,\x_{-i})$.
\end{proof}

\subsection{Finite-precision weights}
We now show how to extend the bound for the convergence of the best-response dynamics to opinion games whose edge weights have bounded precision $k$, i.e. they can be written with at most $k$ digit after the decimal point.

Given an opinion game $\G$,
we consider a game $\G'$ that is exactly the same as $\G$ except that
in $\G'$ the utility of player $i$ in the profile $\x$ is
$$
u'_i(\x) = 10^k \cdot u_i(\x).
$$
We say $\G'$ is the \emph{integer version} of $\G$.
Obviously, $\G$ is \pbre to $\G'$ and $\G'$ is a potential game with potential function $\Phi' = 10^k \cdot \Phi$.
Note that $\G'$ can be equivalently described as follows:
each player has two strategies, 0 and 1, and a personal belief $b_i$ as in $\G$;
for each edge $e$ of the social graph of $\G$, we set $w'_e = 10^k \cdot w_e$ and $D'_i(\x) = \sum_{j \colon x_i \neq x_j} w'_{ij}$.
Then
$$
 u'_i(\x) = - \left( 10^k (x_i - b_i)^2 + D'_i(\x)\right).
$$
Note that $w'_e$ is an integer for each edge $e$. Then we have the following lemma.
\begin{lemma}
 \label{lem:b_integer}
 Consider an opinion game $\G$ and for each player $i$ consider the set $\mathcal{B}'_i$ as defined in Lemma~\ref{lemma:bre}.
 Then for each $b \in \mathcal{B}'_i$, we have that $10^k \cdot 2b$ is an integer.
\end{lemma}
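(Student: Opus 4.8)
The plan is to unwind the definition of a threshold belief into an explicit algebraic condition, solve that condition for $b$, and read off integrality of $10^k \cdot 2b$ directly from the finite-precision assumption on the edge weights. Recall that $b$ is a threshold belief for player $i$ precisely when there is a profile $\x_{-i}$ of the other players making $i$ indifferent between his two opinions, so the whole proof amounts to pinning down the form of this indifference point.

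First I would write out the indifference condition using the cost function $c_i = -u_i$. Playing $0$ against $\x_{-i}$ costs $b^2 + W_i^1(\x)$, since the discording edges are exactly those to neighbors playing $1$; playing $1$ costs $(1-b)^2 + W_i^0(\x)$. Setting these equal and simplifying (the $b^2$ terms cancel) yields the single linear relation $2b = 1 + W_i^0(\x) - W_i^1(\x)$, so every threshold belief of $i$ has the form $2b = 1 + \sum_{j \colon x_j = 0} w_{ij} - \sum_{j \colon x_j = 1} w_{ij}$ for some partition of $i$'s neighbors induced by $\x_{-i}$.

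The integrality claim is then immediate. By hypothesis each weight $w_{ij}$ has at most $k$ digits after the decimal point, so $10^k w_{ij}$ is an integer for every edge. Multiplying the displayed relation by $10^k$ gives $10^k \cdot 2b = 10^k + \sum_{j \colon x_j = 0} 10^k w_{ij} - \sum_{j \colon x_j = 1} 10^k w_{ij}$, a sum of integers and hence an integer. The two remaining elements of $\mathcal{B}'_i$, namely $b = 0$ and $b = 1$, are handled trivially since $10^k \cdot 2 \cdot 0 = 0$ and $10^k \cdot 2 \cdot 1 = 2 \cdot 10^k$ are integers.

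There is essentially no hard step here; the only thing to be careful about is the bookkeeping in the indifference equation -- in particular keeping the roles of $W_i^0$ and $W_i^1$ straight (the cost of playing $s$ counts edges to neighbors playing the \emph{opposite} opinion), and confirming that the $b^2$ terms do indeed cancel so that $b$ enters only linearly. This linearity is exactly what forces every threshold belief, after scaling by $2$, to be an integer combination of quantities that the precision assumption makes integral once multiplied by $10^k$.
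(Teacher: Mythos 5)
Your proof is correct and follows essentially the same route as the paper: both reduce a threshold belief to the linear relation $2b = 1 + W_i^0(\x) - W_i^1(\x)$ for some profile $\x$ of the other players, and then read off integrality of $10^k \cdot 2b$ from the fact that each $10^k w_{ij}$ is an integer under the precision-$k$ assumption. The only cosmetic difference is that you derive this relation directly from indifference of the two costs (which handles all thresholds uniformly), while the paper extracts it from the equilibrium characterization of Lemma~\ref{lemma:Nash} with a case split on $b \leq 1/2$ versus $b > 1/2$, arriving at the algebraically identical expression $2b = 1 - W_i + 2W_i^0(\x)$.
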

\begin{proof}
If $b \in \{0,1\}$, then the lemma trivially follows. As for $b$ being a threshold belief, we distinguish two cases: if $b \leq 1/2$, then, by Lemma \ref{lemma:Nash}, there is a profile $\x$ such that
$$
 b_i = \frac{1}{2} - \frac{1}{2} \sum_{j} w_{ij} + \sum_{j \colon x_j = 0} w_{ij}.
$$
Hence,
$$
 10^k \cdot 2b_i = 10^k - \sum_{j} w'_{ij} + 2 \sum_{j \colon x_j = 0} w'_{ij}.
$$
Since each term in the right-hand side of the last equation is an integer then so is also $10^k \cdot 2b_i$.
The case $b > 1/2$ can be handled similarly.
\end{proof}

Now we are ready for proving a bound on the convergence time of the best-response dynamics.
\begin{theorem}
\label{thm:brd_general}
The best-response dynamics for an $n$-player opinion game $\G$ whose edge weights have bounded precision $k$
converges to a Nash equilibrium in $\OO(10^k \cdot n^2 \cdot w_{\max})$, where $w_{\max}$ is the largest edge weight in $\G$.
\end{theorem}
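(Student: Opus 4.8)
The plan is to mirror the argument behind the special case (Theorem~\ref{thm:brd_special}): exhibit an auxiliary game that is \pbre to $\G$, whose potential ranges over an interval of size $\OO(10^k\cdot n^2\cdot w_{\max})$, and on which every best-response move strictly decreases the potential by at least $\tfrac12$. Since \pbre games induce identical best-response trajectories, any bound on the number of moves for the auxiliary game transfers verbatim to $\G$. First I would pass to the integer version $\G'$, which is \pbre to $\G$, has integer edge weights $w'_{ij}=10^k w_{ij}$, and has potential $\Phi'=10^k\Phi$ with $0\le\Phi'(\x)\le 10^k(n+W)$ for $W=\sum_{(i,j)\in E}w_{ij}$. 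I would then apply Lemma~\ref{lemma:bre} to $\G'$ to replace each belief by a conveniently chosen element of $\mathcal{B}_i$, obtaining a game $\G''$ still \pbre to $\G$.

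The crucial choice is which representative of $\mathcal{B}_i$ to take. A threshold belief I keep unchanged; by Lemma~\ref{lem:b_integer} it then satisfies $2\cdot 10^k b_i\in\mathbb{Z}$. A non-threshold belief lies strictly between two consecutive elements $b'_1<b'_2$ of $\mathcal{B}'_i$, and here I would pick the representative $b_i$ so that $2\cdot 10^k b_i\in\mathbb{Z}+\tfrac12$ (a half-odd-integer). This is always possible: by Lemma~\ref{lem:b_integer} the endpoints satisfy $2\cdot 10^k b'_1,\,2\cdot 10^k b'_2\in\mathbb{Z}$, and being distinct they differ by at least $1$, so the open interval $(2\cdot 10^k b'_1,\,2\cdot 10^k b'_2)$ contains the half-odd-integer $2\cdot 10^k b'_1+\tfrac12$, realised by $b_i=b'_1+\tfrac{1}{4\cdot 10^k}\in(b'_1,b'_2)$. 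This generalises the special case, where the midpoints $\tfrac14,\tfrac34$ already satisfy $2b\in\mathbb{Z}+\tfrac12$.

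With this choice I would analyse a single best-response move of player $i$ from $x_i$ to $s=1-x_i$ in $\G''$, reusing the computation behind \eqref{eq:BRD:convergence}: writing $\Delta D'=D'_i(\x)-D'_i(s,\x_{-i})$ and applying \eqref{eq:B_inequality} with $x_i$ in place of $B_i$,
\[
\Phi''(\x)-\Phi''(s,\x_{-i}) \;=\; 10^k\bigl(2|x_i-b_i|-1\bigr)+\Delta D'.
\]
Because the $w'_{ij}$ are integers, $\Delta D'\in\mathbb{Z}$; and $10^k\cdot 2|x_i-b_i|$ equals $2\cdot 10^k b_i$ (if $x_i=0$) or $2\cdot 10^k-2\cdot 10^k b_i$ (if $x_i=1$), so the whole right-hand side lies in $\mathbb{Z}$ when $b_i$ is threshold and in $\mathbb{Z}+\tfrac12$ when $b_i$ is non-threshold. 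A move is taken only when it strictly improves the cost, i.e. when this quantity is positive; a positive integer is at least $1$ and a positive half-odd-integer is at least $\tfrac12$, so in either case the potential drops by at least $\tfrac12$.

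The main obstacle is exactly this last step: securing a uniform constant lower bound on the per-move decrease for the non-threshold beliefs, where $2\cdot 10^k b_i$ need not be an integer. The half-odd-integer placement above is what pins the decrease onto the discrete grid $\tfrac12\mathbb{Z}$ and rules out arbitrarily small improvements. Granting it, the number of moves is at most $2\,\Phi''_{\max}\le 2\cdot 10^k(n+W)$, and bounding $W\le\binom{n}{2}w_{\max}$ yields the claimed $\OO(10^k\cdot n^2\cdot w_{\max})$; the bound carries over to $\G$ since $\G''$ is \pbre to $\G$.
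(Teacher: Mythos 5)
Your proposal is correct, and its skeleton coincides with the paper's: pass to the integer version, discretize beliefs via Lemma~\ref{lemma:bre}, use Lemma~\ref{lem:b_integer} to control where thresholds can sit, show that every improving move lowers the potential $\Phi''$ by at least $\tfrac12$, and divide the potential range $\OO(10^k\cdot n^2\cdot w_{\max})$ by that decrement. Where you genuinely depart from the paper is in how the $\tfrac12$ is secured. The paper takes the non-threshold representative to be the midpoint $\tfrac12(b'_1+b'_2)$ of consecutive elements of $\mathcal{B}'_i$ and then runs a three-case analysis on $\Delta D$; in the delicate middle case $-10^k < \Delta D \le 10^k$ it argues, via an ordering argument, that the first element of $\mathcal{B}_i$ exceeding the critical value $\frac{10^k-\Delta D}{2\cdot 10^k}$ must be at least $\frac{10^k-\Delta D+1/2}{2\cdot 10^k}$, and it disposes of the two extreme cases separately (one giving a drop of at least $1$, the other a contradiction with strict improvement). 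You instead design the representative arithmetically, pinning it onto the half-odd-integer grid $2\cdot 10^k b_i\in\mathbb{Z}+\tfrac12$, so that the per-move drop $10^k(2|x_i-b_i|-1)+\Delta D'$ always lies in $\mathbb{Z}$ (threshold beliefs) or $\mathbb{Z}+\tfrac12$ (non-threshold beliefs); strict positivity then yields the bound $\ge\tfrac12$ uniformly, with no case distinction at all --- the paper's boundary cases are absorbed automatically. What the paper's route buys is that the generic midpoint works with no arithmetic design of the representative, the ordering argument doing all the work; what your route buys is a shorter, uniform argument in which quantization of the potential drop replaces the case analysis. Both arguments rest on the same two pillars, Lemma~\ref{lemma:bre} (best-response equivalence under in-interval replacement of beliefs) and Lemma~\ref{lem:b_integer} (thresholds lie on the grid $\frac{1}{2\cdot 10^k}\mathbb{Z}$), and both transfer the step count back to $\G$ by best-response equivalence, so your bound and the paper's are the same.
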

\begin{proof}
Fix the opinion game $\G$ and for each player $i$ consider the set $\mathcal{B}'_i$ as defined in Lemma~\ref{lemma:bre}.
Consider, moreover, ${\cal B}_i$ containing every element of ${\cal B}'_i$,
and in addition, the element $\frac{1}{2}(b'_1+b'_2)$ for every pair of consecutive elements $b'_1$, $b'_2$ of ${\cal B}'_i$.

Let $\G'$ be an opinion game such that each player $i$ has belief in ${\cal B}_i$. From Lemma~\ref{lemma:bre}, $\G$ and $\G'$ are \pbre games.
Moreover consider $\G''$ the integer version of $\G'$, which is \pbre to $\G'$ and then to $\G$. Below all the notation defined so far uses a double prime when it refers to $\G''$. 
We show that best-response dynamics converges in $\OO(10^k \cdot n^2 \cdot w_{\max})$ steps on $\G''$.

We begin by observing that for every profile $\x$, we have $0 \leq \Phi''(\x) \leq 10^k \left(\sum_{(i,j) \in E} w_{ij} + n\right) = \OO(10^k \cdot n^2 \cdot w_{\max})$. Thus, the theorem follows by showing that at each time step the cost of a player decreases by at least a constant value. Fix $\x_{-i}$, the opinions of players other than $i$, and let $x_i$ be the strategy currently played by player $i$ and $s$ be his best response. By setting $c''_i(\x) = - u''_i(\x)$, from the definition of best response, it follows that $c''_i(\x) > c''_i(s, \x_{-i})$. We will show that
\begin{equation}\label{eq:BRD:convergence_general}
\begin{aligned}
 1/2 & \leq \Phi''(\x)-\Phi''(s, \x_{-i}) = c''_i(x_i,\x_{-i}) - c''_i(s,\x_{-i}) \\ & = 10^k \left((x_i- b_i)^2-(s-b_i)^2\right) + D''_i(\x) - D''_i(s, \x_{-i}) = 10^k \left( 2|x_i - b_i| - 1 \right) + \Delta D,
\end{aligned}
\end{equation}
where the last equality follows by \eqref{eq:B_inequality} (with $x_i$ in place of $B_i$) and by using $\Delta D$ as a shorthand for $D''_i(\x) - D''_i(s, \x_{-i})$.
We distinguish three cases based on the value of $\Delta D$.

\smallskip \noindent\underline{If $\Delta D > 10^k$,} then, since all edge weights are integers, it will be the case that $\Delta D \geq 10^k +1$. Since the difference between the squares is bounded from below by $-1$, then \eqref{eq:BRD:convergence_general} follows.

\smallskip \noindent\underline{If $-10^k < \Delta D \leq 10^k$,} then, since all edge weights are integers, we have $\Delta D \in\{-10^k + 1, \ldots, 10^k\}$. If $x_i = 0$, then $c''_i(x_i,\x_{-i})>c''_i(s,\x_{-i})$ implies $b_i > \frac{10^k - \Delta D}{2 \cdot 10^k}$.
Since $10^k \cdot 2b$ is an integer for each threshold belief $b$, the smallest threshold belief greater than $\frac{10^k - \Delta D}{2 \cdot 10^k}$ should be at least $\frac{10^k - \Delta D + 1}{2 \cdot 10^k}$. Hence, the first element of $\mathcal{B}_i$ greater than $\frac{10^k - \Delta D}{2 \cdot 10^k}$ will be at least
$$
\frac{1}{2} \left(\frac{10^k - \Delta D}{2 \cdot 10^k} + \frac{10^k - \Delta D + 1}{2 \cdot 10^k}\right) = \frac{10^k - \Delta D + 1/2}{2 \cdot 10^k}.
$$
Thus, if $x_i = 0$, then $b_i \geq \frac{10^k - \Delta D + 1/2}{2 \cdot 10^k}$ and \eqref{eq:BRD:convergence_general} follows.
Similarly, one can prove that if $\x_i = 1$, then $b_i \leq \frac{10^k + \Delta D - 1/2}{2 \cdot 10^k}$ and \eqref{eq:BRD:convergence_general} follows.

\smallskip \noindent\underline{If $\Delta D \leq -10^k$,} then we will reach a contradiction. Indeed, since the difference between the squares is bounded from above by $1$, we have $c_i(x_i,\x_{-i})\leq c_i(s,\x_{-i})$.
\end{proof}

The bound on the convergence time of the best response dynamics given in previous theorem can be very large if the edge weights are very large or very small (in which case, we need high precision). However, in the next section we will show that, in these cases, such a large bound cannot be avoided.

\subsection{Exponentially many best-response steps for general weights}
The following result builds
a game with an exponentially large gap between the largest and the smallest edge weight
for which
% if edge weights can take any value in $(0,1)$,
% then 
there exist exponentially long sequences of best responses, where the
% the best-response dynamics can take exponentially many steps
choice of the player switching his strategy at each step is made by an adversary.
Thus it remains an open question whether exponentially-many steps may be required
if, for example, players were allowed to best-respond in round robin manner, or
if the best-responding player was chosen randomly at each step.
However this does establish that the potential function
on its own is insufficient to upper-bound the number of steps with any polynomial.
The construction uses graphs with bounded degree and pathwidth, with
all players having belief $\frac{1}{2}$.

\begin{theorem}
The best-response dynamics for opinion games may take exponentially-many steps.
\end{theorem}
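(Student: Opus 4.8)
The plan is to exploit the fact that fixing every belief to $b_i = \tfrac12$ makes the self-term $(x_i-b_i)^2 = \tfrac14$ constant, so that $u_i(\x) = -\tfrac14 - D_i(\x)$ and each player's only concern is to minimise the weight $D_i(\x)$ of its discording edges. Consequently the best response of player $i$ is simply to adopt the opinion carried by the heavier-weighted part of its neighbourhood, and this deviation is \emph{strictly} improving precisely when $i$ currently sits on the strictly lighter side. Thus with uniform belief $\tfrac12$ the opinion game degenerates into an asynchronous weighted coordination (agreement) dynamics, and I must design weights so that, along the sequence of moves I prescribe, no player is ever exactly balanced. The mechanism that permits an exponentially long run is compatible with the upper bound of Theorem~\ref{thm:brd_general}: if the edge weights span an exponential range then $\Phi$ can be as large as $2^{\Omega(n)}$, while each best response lowers $\Phi$ by at least a constant, so the potential argument alone leaves room for $2^{\Omega(n)}$ strictly-improving steps. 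The task is to realise this room.

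First I would build a family $\G_n$ on $\Theta(n)$ players whose edge weights are scaled geometrically, say by consecutive powers of two, so that the ratio of the largest to the smallest weight is $2^{\Theta(n)}$ while the degree and the pathwidth remain $\OO(1)$; a caterpillar (path-of-gadgets) layout achieves the latter. The opinions of $n$ designated \emph{bit} players encode an $n$-bit integer, and a few auxiliary players serve two purposes: to anchor the boundary (even though all beliefs equal $\tfrac12$, a leaf joined by a single heavy edge is ``slaved'' to copy its unique neighbour, and a heavily-linked pair acts as a fixed reference), and to route the carry so that every neighbourhood stays of bounded size. The key arithmetic observation driving the design is that a weight $2^i$ strictly exceeds the combined weight $\sum_{j<i} 2^j = 2^i - 1$ of all lighter levels; this is simultaneously the inequality that makes a binary carry propagate and the inequality that makes the corresponding flip a strict best response.

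The core of the argument is then to exhibit, for the transition from the integer $m$ to $m+1$, an explicit schedule of single-player flips implementing the carry: set the lowest $0$-bit to $1$ and, in a prescribed order, reset the lower $1$-bits, each such flip being strictly improving by the inequality above. Because the adversary selects who moves, I only need each scheduled player to be strictly on the lighter side at the moment it is chosen; I do not need to control the other players. I would prove by induction on $m$ the invariant that, after its cascade is executed, the configuration encodes $m+1$ and the next scheduled player is indeed unhappy, then sum over $m = 0,\dots,2^n-1$ to obtain $2^{\Omega(n)}$ strictly-improving best-response steps before any equilibrium is reached. I expect the main obstacle to be precisely the maintenance of this loop invariant: one must choose the carry-gadget weights so that each of the exponentially many prescribed flips is genuinely a strict best response (the triggering weight strictly dominates the opposing weight in the current configuration), and so that executing a cascade leaves the designated bits in exactly the state coding $m+1$, all while respecting the bounded-degree, bounded-pathwidth, and exponential-weight-separation constraints. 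The supporting weight inequalities are routine once the gadget is fixed; exhibiting a gadget for which they all hold simultaneously is the delicate part.
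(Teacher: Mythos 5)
Your high-level plan coincides with the paper's actual strategy: all beliefs set to $\tfrac12$ (so each player simply follows the weighted majority of its neighbourhood), geometrically separated edge weights with largest-to-smallest ratio $2^{\Theta(n)}$, bounded degree and pathwidth, an adversary choosing who moves, and a cascading ``counter'' in which one player's repeated toggling drives exponentially many strict best responses downstream. The paper's construction realises this with a chain of $n$ \emph{6-gadgets} with edge weights $\epsilon_i, 2\epsilon_i, 3\epsilon_i, 4\epsilon_i$ and $\epsilon_i > 8\epsilon_{i+1}$, where the output player $A_i$ of gadget $G_i$ acts as the switch of $G_{i+1}$; the crucial verified fact is that during the explicitly listed 10-move ``switch-off cycle'' of $G_i$ the player $A_i$ toggles $0\to1\to0\to1\to0$, so each cycle of $G_i$ induces two switch-on and two switch-off cycles of $G_{i+1}$, giving $2^{n-1}$ cycles in $G_n$.

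The genuine gap is that your proposal never exhibits the gadget, and you concede this yourself: ``exhibiting a gadget for which they all hold simultaneously is the delicate part.'' But that delicate part \emph{is} the theorem; everything else (the potential-function headroom, the inequality $2^i > \sum_{j<i}2^j$, the induction-over-increments bookkeeping) is routine framing. In particular, the difficulty you defer is real and not merely technical: a belief-$\tfrac12$ player is a monotone weighted-majority follower, so a ``bit'' player cannot simply flip when a carry arrives and flip back later unless its neighbourhood majority itself oscillates in a controlled way — that is exactly what forces the paper's intricate internal cycle in which players $A,B,C,D$ of a gadget each flip several times per switch-off, every single move being verified as a strict improvement. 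Your ripple-carry increment scheme ($m \mapsto m+1$ by setting the lowest $0$-bit and resetting lower $1$-bits) is moreover a different mechanism from the paper's recursive toggle-doubling, and it is not evident it can be implemented at all under the weighted-majority constraint; at minimum you would need to construct and verify such a gadget move by move, which is precisely what is missing. As written, the proposal is a correct proof strategy but not a proof.
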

\begin{proof}
In the following construction, all players have a belief of $\frac{1}{2}$.

We start by giving some preliminary definitions.
A {\em 6-gadget} $G$ is a set of 6 players $\{A, B, C, D, E, F\}$ with edges
$(A,B)$, $(B,C)$ and $(C,D)$ having weights $\epsilon$, $2\epsilon$,
$3\epsilon$ respectively
and edges $(D,E)$, $(B,F)$ and $(D,F)$, all weighting $4\epsilon$, for some $\epsilon > 0$.

Consider a 6-gadget and a new player $A_0$ with edges $(A_0,B)$ and $(A_0,D)$ of weight $4\varepsilon$. We say that $A_0$ is a \emph{switch} for the 6-gadget $G$, since it allows $G$ to switch between the opinion vectors $(0,0,0,0,0,1)$ and $(0,1,0,1,0,1)$. Specifically, if initially the players
$(A,B,C,D,E,F)$ have opinions $(0,0,0,0,0,1)$ and $A_0$ is set to $1$, then we can have the following best-response sequence in the 6-gadget, that will be named \emph{switch-on cycle}:
\[
(0,0,0,0,0,1) \rightarrow (0,1,0,0,0,1) \rightarrow (0,1,0,1,0,1).
\]
If instead the players
$(A,B,C,D,E,F)$ have opinions $(0,1,0,1,0,1)$ and $A_0$ is set to $0$, then we can have the following best-response sequence in the 6-gadget, that will be named \emph{switch-off cycle}:
\[
\begin{aligned}
 & (0,1,0,1,0,1) \rightarrow (1,1,0,1,0,1) \rightarrow (1,0,0,1,0,1) \rightarrow (0,0,0,1,0,1)
\rightarrow (0,0,1,1,0,1) \rightarrow (0,1,1,1,0,1)\\
& \rightarrow (1,1,1,1,0,1) \rightarrow (1,1,1,0,0,1) \rightarrow
(1,1,0,0,0,1) \rightarrow (1,0,0,0,0,1) \rightarrow (0,0,0,0,0,1).
\end{aligned}
\]
Notice that $A$'s opinion does not change in the switch-on cycle, whereas it follows the sequence $0\rightarrow 1\rightarrow 0\rightarrow 1\rightarrow 0$ during the switch-off cycle.

We now define the game.
Consider $n$ 6-gadgets $G_i$ with players $\{A_i, B_i, C_i, D_i, E_i, F_i\}$ and
edge weights parametrized by $\epsilon_i$, where $1\leq i\leq n$.
For each $i = 2, \ldots, n$, we connect $G_i$ with $G_{i-1}$ by having $A_{i-1}$ acting as a switch for $G_i$.
We finally add the switch player $A_0$ for $G_1$. Thus, the total number of players is $6n + 1$.

In order that the behavior of $A_i$ in gadget $G_i$, for $i = 1, \ldots, n-1$ is not influenced by the new edges, $(A_i, B_{i+1})$ and $(A_i, D_{i+1})$ of weight $4\epsilon_{i+1}$, we need that the weights of these edges are small enough with respect to the weight of the unique edge incident on $A_i$ in the gadget $G_i$, namely $(A_i,B_i)$ of weight $\epsilon_i$. Specifically, it is sufficient to set $\epsilon_i > 8 \epsilon_{i+1}$.
Hence, the largest edge-weight is $4\varepsilon_1$ and the smallest edge-weight is $\varepsilon_n$ and their ratio is greater than $4\cdot 8^{n-1} = 2^{3n-1}$.

Consider now the following starting profile:
players $B_1$ and $D_1$ have opinion 1, all players $F_i$ have opinion 1,
and all other players start with opinion 0.
Note that $G_1$ is in the starting configuration of a switch-off cycle.

We finally specify an exponentially-long sequence of best-response:
we start the switch-off cycle of $G_1$;
as long as $A_i$, for $i = 1, \ldots, n-1$, switches his opinion from $0$ to $1$,
we execute the switch-on cycle of $G_{i+1}$;
as long as $A_i$, for $i = 1, \ldots, n-1$, switches his opinion from $1$ to $0$,
we execute the switch-off cycle of $G_{i+1}$.
Note that the last two cases occur two times during the switch-off cycle of $G_i$.
Thus, $G_2$ goes through $2$ switch-on cycles and $2$ switch-off cycles,
$G_3$ goes through $4$ switch-on cycles and $4$ switch-off cycles and, hence,
$G_n$ goes through $2^{n-1}$ switch-on cycles and $2^{n-1}$ switch-off cycles.
\end{proof}
% 
% We end up with an exponentially-large ratio (of $14^{n-1}$) between $\epsilon_1$ and $\epsilon_n$.
% Our contrasting upper bounds show that the precision with which we may express edge weights,
% is an important parameter the governs the number of opinion changes that may occur.
% 
% 

\section{Logit Dynamics for Opinion Games}
Let $\G$ be an opinion game as from the above; moreover, let $S=\{0,1\}^n$ denote the set of all strategy profiles. For two vectors $\x,\y \in S$, we denote with $H(\x,\y) = |\{i \colon x_i \neq y_i\}|$ the Hamming distance between $\x$ and $\y$. The \emph{Hamming graph} of the game $\G$ is defined as $\Ham = (S, \mathsf{E})$, where two profiles $\x = (x_1, \dots, x_n), \y = (y_1, \dots, y_n) \in S$ are adjacent in $\Ham$ if and only if $H(\x, \y)=1$.

The \emph{logit dynamics} for $\G$ runs as follows: at every time step (i) Select one player $i \in [n]$ uniformly at random; (ii) Update the strategy of player $i$ according to the \emph{Boltzmann distribution} with parameter $\beta$ over the set $S_i=\{0,1\}$ of his strategies. That is, a strategy $s_i \in S_i$ will be selected with probability
\begin{equation}\label{eq:updateprob}
\sigma_i(s_i \mid \x_{-i}) = \frac{1}{Z_i(\x_{-i})} \, e^{\beta u_i(\x_{-i}, s_i)},
\end{equation}
where $\x_{-i} \in \{0,1\}^{n-1}$ is the profile of strategies played at the current time step by players different from $i$, $Z_i(\x_{-i}) = \sum_{z_i \in S_i} e^{\beta u_i(\x_{-i}, z_i)}$ is the normalizing factor, and $\beta \geq 0$. As mentioned above, from \eqref{eq:updateprob}, it is easy to see that for $\beta = 0$ player $i$ selects his strategy uniformly at random, for $\beta > 0$ the probability is biased toward strategies promising higher payoffs, and for $\beta$ that goes to $\infty$ player $i$ chooses his best response strategy (if more than one best response is available, he chooses one of them uniformly at random).

The above dynamics defines a \emph{Markov chain} $\{ X_t \}_{t \in \mathbb{N}}$ with the set of  strategy profiles as state space, and where the probability $P(\x,\y)$ of a transition from profile $\x = (x_1, \ldots, x_n)$ to profile $\y = (y_1, \ldots, y_n)$ is zero if $H(\x,\y) \geq 2$ and it is $\frac{1}{n} \sigma_i(y_i \mid \x_{-i})$ if the two profiles differ exactly at player $i$. More formally, we can define the logit dynamics as follows.
\begin{definition}[Logit dynamics~\cite{blumeGEB93}]
Let $\G$ be an opinion game as from the above and let $\beta \geq 0$. The \emph{logit dynamics} for $\G$ is the Markov chain $\mathcal{M}_\beta = \left(\{ X_t \}_{t \in \mathbb{N}},S,P\right)$ where $S = \{0,1\}^n$ and
\begin{equation}\label{eq:transmatrix}
P(\x, \y) = \frac{1}{n} \cdot
\begin{cases}
\sigma_i(y_i \mid \x_{-i}), & \quad \mbox{ if } \y_{-i} = \x_{-i} \mbox{ and } y_i \neq x_i; \\
\sum_{i=1}^n \sigma_i(y_i \mid \x_{-i}), & \quad \mbox{ if } \y = \x; \\
0, & \quad \mbox{ otherwise;}
\end{cases}
\end{equation}
where $\sigma_i(y_i \mid \x_{-i})$ is defined in \eqref{eq:updateprob}.
\end{definition}

The Markov chain defined by \eqref{eq:transmatrix} is ergodic. Hence, from every initial profile $\x$ the distribution $P^t(\x, \cdot)$ of chain $X_t$ starting at $\x$ will eventually converge to a \emph{stationary distribution} $\pi$ as $t$ tends to infinity.\footnote{The notation $P^t(\x, \cdot)$, standard in Markov chains literature \cite{lpwAMS08}, denotes the probability distribution over states of $S$ after the chain has taken $t$ steps starting from $\x$.} As in \cite{afppSAGT10}, we call the stationary distribution $\pi$ of the Markov chain defined by the logit dynamics on a game $\G$, the \emph{logit equilibrium} of $\G$.
In general, a Markov chain with transition matrix $P$ and state space $S$ is said to be \emph{reversible} with respect to the distribution $\pi$ if, for all $\x,\y\in S$, it holds that
$
 \pi(\x) P(\x,\y)=\pi(\y) P(\y,\x).
$
If the chain is reversible with respect to $\pi$, then $\pi$ is its stationary distribution. Therefore when this happens, to simplify our exposition we simply say that the matrix $P$ is reversible. For the class of {potential games} the stationary distribution is the well known \emph{Gibbs measure}.
\begin{theorem}[\cite{blumeGEB93}]
\label{thm:gibbsPot}
 If $\G = ([n], \mathcal{S}, \mathcal{U})$ is a potential game with potential function $\Phi$, then the Markov chain given by \eqref{eq:transmatrix} is reversible with respect to the Gibbs measure
$
\pi(\x) = \frac{1}{Z} e^{-\beta \Phi(\x)}
$,
where
$Z = \sum_{\y \in S} e^{-\beta \Phi(\y)}$
is the normalizing constant.
\end{theorem}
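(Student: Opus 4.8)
The plan is to verify the detailed balance (reversibility) condition $\pi(\x)P(\x,\y)=\pi(\y)P(\y,\x)$ directly for every ordered pair $(\x,\y)$, since establishing this for $\pi$ the Gibbs measure immediately certifies that $\pi$ is the stationary distribution. I would first dispose of the two trivial cases. When $\x=\y$ the identity is vacuous. When $H(\x,\y)\geq 2$, the definition \eqref{eq:transmatrix} gives $P(\x,\y)=P(\y,\x)=0$, so both sides vanish. Thus the only case requiring work is $H(\x,\y)=1$, where $\x$ and $\y$ differ in exactly one coordinate, say that of player $i$, so that $\y=(\x_{-i},y_i)$ with $y_i\neq x_i$ and $\y_{-i}=\x_{-i}$.

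In this remaining case I would write both sides explicitly using \eqref{eq:transmatrix} and \eqref{eq:updateprob} together with $\pi(\x)=\frac1Z e^{-\beta\Phi(\x)}$. The key simplification is that the normalizing factors of the Boltzmann update coincide, $Z_i(\x_{-i})=Z_i(\y_{-i})$, precisely because $\x_{-i}=\y_{-i}$; likewise $u_i(\x_{-i},y_i)=u_i(\y)$ and $u_i(\y_{-i},x_i)=u_i(\x)$. After cancelling the common factor $\frac{1}{nZ\,Z_i(\x_{-i})}$ from both sides, the whole identity reduces to
\[
 e^{-\beta\Phi(\x)+\beta u_i(\y)}=e^{-\beta\Phi(\y)+\beta u_i(\x)},
\]
equivalently (taking logarithms and dividing by $\beta$ when $\beta>0$) to $\Phi(\y)-\Phi(\x)=u_i(\x)-u_i(\y)$.

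Finally I would observe that this last identity is nothing but the exact-potential property of $\Phi$ established in the potential lemma: since $c_i=-u_i$ and $\Phi(\x)-\Phi(\x_{-i},y_i)=c_i(\x)-c_i(\x_{-i},y_i)$, substituting $\y=(\x_{-i},y_i)$ gives exactly $\Phi(\x)-\Phi(\y)=u_i(\y)-u_i(\x)$, which is the required equality. The case $\beta=0$ is immediate, since then every update is uniform and $\pi$ is the uniform distribution. I expect no serious obstacle, as the statement is a direct verification; the one point demanding care is the bookkeeping of sign conventions, namely that the potential in this paper is aligned with the cost $c_i=-u_i$ rather than with the utility appearing in the Boltzmann exponent, so that the exponents of $\pi$ and of $\sigma_i$ must be matched with opposite signs for the cancellation to go through.
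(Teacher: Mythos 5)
Your proof is correct, and it is the standard argument for this fact: the paper itself does not prove Theorem~\ref{thm:gibbsPot} but cites it from Blume~\cite{blumeGEB93}, where the proof is precisely this direct verification of detailed balance. Your reduction of the only nontrivial case ($H(\x,\y)=1$) to the exact-potential identity $\Phi(\x)-\Phi(\y)=u_i(\y)-u_i(\x)$, using $Z_i(\x_{-i})=Z_i(\y_{-i})$, is exactly right, and you correctly flag the one delicate point, namely that the cost-aligned potential of this paper pairs with the $e^{-\beta\Phi}$ sign in the Gibbs measure against the $e^{+\beta u_i}$ sign in the update rule. The only cosmetic remark is that the separate treatment of $\beta=0$ is unnecessary: the implication you actually need runs from the potential identity to the equality of exponents (multiply by $\beta$), which holds for all $\beta\geq 0$ without dividing by $\beta$.
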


\paragraph{Mixing time of Markov chains.}
One of the prominent measures of the rate of convergence of a Markov chain to its stationary distribution is the \emph{mixing time}. For a Markov chain with transition matrix $P$ and state space $S$, let us set
$$
d(t)=
\max_{\x\in S}
\tv{P^t(\x,\cdot) - \pi},
$$
where the {\em total variation distance} $\tv{\mu - \nu}$ between two probability distributions $\mu$ and $\nu$ on the same state space $S$ is defined as
$$
\tv{\mu - \nu}=\max_{A\subset S}
	|\mu(A)-\nu(A)| = \frac{1}{2} \sum_{\x \in S} | \mu(\x) - \nu(\x) | =
	\sum_{\substack{\x \in S \colon\\ \mu(\x) > \nu(\x)}} \left(\mu(\x) - \nu(\x)\right).
$$
For $0 < \varepsilon < 1/2$, the mixing time is defined as
$$
\tm(\varepsilon) =
\min \{t\in\mathbb{N} \colon d(t)\leq\varepsilon\}.
$$
It is usual to set $\varepsilon = 1/4$ or $\varepsilon = 1/2e$. If not explicitly specified, when we write $\tm$ we mean $\tm(1/4)$. Observe that $\tm(\varepsilon)\leq\lceil\log_2 \varepsilon^{-1}\rceil\tm$.

\bigskip Next we bound the mixing time of the logit dynamics for an opinion game. Note that, bounds on the mixing time of the logit dynamics for general potential games has been given in~\cite{afpppSPAA11j}. However, these bounds are not tight for the class of opinion game. Moreover, they do not highlight the existing connection between the convergence time of the logit dynamics and the topology of the social network underlying the opinion game.

\subsection{Techniques}
To derive our bounds, we employ several different techniques: \emph{Markov chain coupling} and \emph{spectral techniques} for the upper bound and \emph{bottleneck ratio} for the lower bound. They are well-established techniques for bounding the mixing time; we next summarize them.

\subsubsection{Markov chain coupling}\label{subsubsec:coupling}
A {\em coupling} of two probability distributions $\mu$ and $\nu$ on $S$ is a pair of random variables $(X,Y)$ defined on $S \times S$ such that the marginal distribution of $X$ is $\mu$ and the marginal distribution of $Y$ is $\nu$. A {\em coupling of a Markov chain} $\mathcal{M}$ on $S$ with transition matrix $P$ is a process $(X_t,Y_t)_{t=0}^\infty$ with the property that $X_t$ and $Y_t$ are both Markov chains with transition matrix $P$ and state space $S$. 
When the two coupled chains $(X_t,Y_t)_{t=0}^\infty$ start at $(X_0,Y_0) = (\x,\y)$, we write $\Prob{\x,\y}{\cdot}$ and $\Expec{\x,\y}{\cdot}$ for the probability and the expectation on the space $S \times S$. We denote by $\tc$ the first time the two chains meet; that is,
$$
\tc=\min\{t \colon X_t=Y_t\}.
$$
We also consider only couplings of Markov chains with the property that $X_s=Y_s$ for $s\geq\tc$.

Recall that $\Ham=(S,\mathsf{E})$ is the Hamming graph; %and let $\mathsf{w}:\mathsf{E}\rightarrow\mathbb{R}$ be  a function assigning weights to the edges of $\Ham$ such that $\mathsf{w}(e)\geq 1$ for every edge $e\in \mathsf{E}$; 
for $\x,\y\in S$, we denote by $\rho(\x,\y)$ the %weight 
length of the shortest path in $\Ham$ between $\x$ and $\y$.
The following theorem says that it is sufficient to define a coupling only for pairs of Markov chains starting from states \emph{adjacent} in $\Ham$ and an upper bound on the mixing time can be obtained if each of these couplings contracts their distance on average.
\begin{theorem}[Path Coupling~\cite{BubleyDyer97}]
\label{theorem:pathcoupling}
Suppose that for every edge $(\x,\y)\in \mathsf{E}$ a coupling $(X_t,Y_t)$ of $\mathcal{M}$ with $X_0=\x$ and $Y_0=\y$ exists such that $\Expec{\x,\y}{\rho(X_1,Y_1)} \leq e^{-\alpha}$ % \cdot \mathsf{w}(\x,\y)$
for some $\alpha > 0$.
Then
$$
\tm(\varepsilon) \leq \frac{\log(\diam(\Ham)) + \log(1/\varepsilon)}{\alpha}
$$
where $\diam(\Ham)$ is the %(weighted) 
diameter of $\Ham$.
\end{theorem}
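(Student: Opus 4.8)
The plan is to prove the theorem by the two-step argument underlying the path-coupling method: first upgrade the per-edge couplings into a single coupling of the chain started from an \emph{arbitrary} pair of states, and then feed the resulting contraction estimate into the coupling inequality to bound the total variation distance.

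First I would extend the hypothesis from edges of $\Ham$ to all pairs. Fix $\x,\y\in S$ and a shortest path $\x=\z_0,\z_1,\dots,\z_r=\y$ in $\Ham$, where $r=\rho(\x,\y)$ and each consecutive pair $(\z_{k-1},\z_k)$ is an edge of $\Ham$. For each such edge the hypothesis supplies a coupling; I would glue these couplings together, via the standard composition lemma for couplings that matches the marginal on each shared intermediate chain, to obtain a single coupling $(X_t,Y_t)$ of $\mathcal{M}$ with $X_0=\x$, $Y_0=\y$. This coupling is realized as a chain of intermediate processes $Z^{(0)}_t=X_t,\,Z^{(1)}_t,\dots,Z^{(r)}_t=Y_t$ in which $Z^{(k)}_0=\z_k$ and every adjacent pair evolves according to the given edge coupling. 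Since $\rho$ is a graph metric it obeys the triangle inequality, so $\rho(X_1,Y_1)\le\sum_{k=1}^r \rho(Z^{(k-1)}_1,Z^{(k)}_1)$; taking expectations and applying the edge hypothesis term-by-term gives
$$
\Expec{\x,\y}{\rho(X_1,Y_1)}\le \sum_{k=1}^r \Expec{}{\rho(Z^{(k-1)}_1,Z^{(k)}_1)}\le r\,e^{-\alpha}=e^{-\alpha}\,\rho(\x,\y).
$$

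Next I would iterate this one-step contraction. By the Markov property, conditioning on the state at time $t$ and applying the bound above to the pair $(X_t,Y_t)$, a routine induction yields $\Expec{\x,\y}{\rho(X_t,Y_t)}\le e^{-\alpha t}\rho(\x,\y)\le e^{-\alpha t}\,\diam(\Ham)$. Since $\rho$ is integer-valued and at least $1$ whenever the two chains disagree, Markov's inequality gives $\Prob{\x,\y}{X_t\neq Y_t}=\Prob{\x,\y}{\rho(X_t,Y_t)\ge 1}\le \Expec{\x,\y}{\rho(X_t,Y_t)}$. The coupling inequality then bounds $\tv{P^t(\x,\cdot)-P^t(\y,\cdot)}$ by $\Prob{\x,\y}{X_t\neq Y_t}$, and using the standard comparison $d(t)\le\max_{\x,\y}\tv{P^t(\x,\cdot)-P^t(\y,\cdot)}$ (which follows from stationarity of $\pi$ and convexity of total variation) we obtain $d(t)\le e^{-\alpha t}\diam(\Ham)$. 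Solving $e^{-\alpha t}\diam(\Ham)\le\varepsilon$ for $t$ yields $t\ge(\log\diam(\Ham)+\log(1/\varepsilon))/\alpha$, which is exactly the claimed bound on $\tm(\varepsilon)$.

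The main obstacle is the gluing step: the hypothesis only guarantees a coupling for each individual edge, and one must verify that these can be composed into a genuine coupling of the chain started at the endpoints $\x$ and $\y$ — that is, that the intermediate processes can be coupled simultaneously so each marginal $Z^{(k)}_t$ is a faithful copy of $\mathcal{M}$ while each adjacent pair follows its prescribed edge coupling. Once this composition lemma is in place, everything else reduces to the triangle inequality, linearity of expectation, and Markov's inequality, so the conceptual weight of the theorem lies entirely in legitimizing the passage from edges to arbitrary pairs.
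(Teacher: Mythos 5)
Your proof is correct and is the standard Bubley--Dyer argument: glue the per-edge couplings along a shortest path in $\Ham$ to get, for an arbitrary pair $(\x,\y)$, a one-step coupling whose expected distance contracts by $e^{-\alpha}\rho(\x,\y)$, then iterate via the Markov property and combine Markov's inequality, the coupling inequality, and the bound $d(t)\le\max_{\x,\y}\tv{P^t(\x,\cdot)-P^t(\y,\cdot)}$. The paper itself states Theorem~\ref{theorem:pathcoupling} as an imported result and gives no proof, so there is nothing to diverge from: your argument (including correctly flagging the coupling-composition lemma as the only nontrivial step) is precisely the canonical proof from the cited literature.
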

%We will use this theorem for $\Ham$ with weights $\mathsf{w}(e)=1$ for all $e \in \mathsf{E}$. In detail, w
We here describe only the coupling that we use in the proof of Theorem~\ref{th:small_beta} below. Our exposition follows the one in \cite{afppSAGT10}.
For every pair of strategy profiles $\x = (x_1, \dots, x_n), \y = (y_1, \dots, y_n) \in \{0,1\}^n$ we define a coupling $(X_1,Y_1)$ of two copies of the Markov chain for which $X_0 = \x$ and $Y_0 = \y$. The coupling proceeds as follows: first, pick a player $i$ uniformly at random; then, update the strategies $x_i$ and $y_i$ of player $i$ in the two chains, by setting
$$
(x_i,y_i)=\begin{cases}
	(0,0), & \text{with probability }
		\min\{\sigma_i(0 \mid \x),\sigma_i(0 \mid \y)\};\cr
	(1,1), & \text{with probability }
		\min\{\sigma_i(1 \mid \x),\sigma_i(1 \mid \y)\};\cr
	(0,1), & \text{with probability }
		\sigma_i(0 \mid \x)-\min\{\sigma_i(0 \mid \x),\sigma_i(0 \mid \y)\};\cr
	(1,0), & \text{with probability }
		\sigma_i(1 \mid \x)-\min\{\sigma_i(1 \mid \x),\sigma_i(1 \mid \y)\}.
	\end{cases}
$$
Observe that for every player $i$, at most one of the updates $(x_i,y_i)=(0,1)$ and $(x_i,y_i)=(1,0)$ has positive probability.
Moreover, if $\sigma_i(0 \mid \x)=\sigma_i(0 \mid \y)$ and player $i$ is chosen, then, after the update, we have $x_i=y_i$.

For the path coupling technique (see Theorem~\ref{theorem:pathcoupling}), the coupling described above is applied only to pairs of starting profiles which are adjacent in $\Ham$.

\subsubsection{Relaxation time and spectral techniques}
Another important measure related to the convergence of Markov chains is given by the \emph{relaxation time}. Let $P$ be the transition matrix of a Markov chain with finite state space $S$ and let us label the eigenvalues of $P$ in non-increasing order
$
\lambda_1\geq \lambda_2 \geq \dots \geq \lambda_{|S|}.
$
It is well-known (see, for example, Lemma~12.1 in~\cite{lpwAMS08}) that $\lambda_1 = 1$ and, if $P$ is irreducible and aperiodic, then $\lambda_2<1$ and $\lambda_{|S|}>-1$. We set $\lambda^\star$ as the largest eigenvalue in absolute value other than $\lambda_1$, that is,
$
\lambda^\star = \max_{i=2, \ldots, |S|} \left\{ |\lambda_i| \right\}.
$
The {\em relaxation time} $\tr$ of a Markov chain $\mathcal{M}$ is defined as
$$
\tr = {\frac{1}{1-\lambda^\star}}.
$$
The relaxation time is related to the mixing time by the following theorem (see, for example, Theorems 12.3 and 12.4 in \cite{lpwAMS08}).
\begin{theorem}[Relaxation time]\label{theorem:relaxation}
Let $P$ be the transition matrix of a reversible, irreducible, and aperiodic Markov chain with state space
$S$ and stationary distribution $\pi$. Then
$$
(\tr-1)\log 2
\leq \tm\leq
\log\left({\frac{4}{\pimin}}\right) \tr,
$$
where
$\pimin=\min_{\x \in S} \pi(\x)$.
\end{theorem}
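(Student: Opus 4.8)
The plan is to pass to the Hilbert space $\ell^2(\pi)$ of real functions on $S$ equipped with the inner product $\langle f,g\rangle_\pi = \sum_{\x} \pi(\x) f(\x) g(\x)$, in which reversibility of $P$ is exactly the statement that $P$ is self-adjoint. First I would record the spectral data: since $P$ is self-adjoint on $\ell^2(\pi)$, there is an orthonormal basis $f_1,\dots,f_{|S|}$ of real eigenfunctions with $P f_j = \lambda_j f_j$, where $f_1 \equiv \mathbf{1}$ and $\lambda_1 = 1$. Diagonalizing the symmetric matrix obtained by conjugating $P$ with $\mathrm{diag}(\pi)^{1/2}$ then yields the spectral representation $P^t(\x,\y)/\pi(\y) = \sum_{j=1}^{|S|} \lambda_j^t f_j(\x) f_j(\y)$, and evaluating its $t=0$, $\x=\y$ instance gives the key normalization $\sum_j f_j(\x)^2 = 1/\pi(\x)$.

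For the upper bound I would isolate the $j=1$ term (which equals $1$ since $f_1\equiv\mathbf{1}$) and write $P^t(\x,\y)/\pi(\y) - 1 = \sum_{j\ge 2} \lambda_j^t f_j(\x) f_j(\y)$. Bounding $2\tv{P^t(\x,\cdot)-\pi} = \sum_\y \pi(\y)\,|P^t(\x,\y)/\pi(\y) - 1|$ by Cauchy--Schwarz in $\ell^2(\pi)$ and then using orthonormality collapses the resulting sum to $\sum_{j\ge2}\lambda_j^{2t} f_j(\x)^2 \le (\lambda^\star)^{2t}\sum_{j\ge 2} f_j(\x)^2 \le (\lambda^\star)^{2t}/\pi(\x)$, whence $d(t) \le (\lambda^\star)^t/(2\sqrt{\pimin})$. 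Finally I would substitute $\lambda^\star \le e^{-(1-\lambda^\star)} = e^{-1/\tr}$ and solve $d(t) \le 1/4$ for $t$, which gives $\tm \le \log(4/\pimin)\,\tr$ after bounding the resulting constant.

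For the lower bound I would pick an eigenfunction $f$ achieving $|\lambda| = \lambda^\star$; since $f \perp f_1$ we have $\sum_\y \pi(\y) f(\y) = 0$, so $\lambda^t f(\x) = P^t f(\x) = \sum_\y \bigl(P^t(\x,\y)-\pi(\y)\bigr) f(\y)$. Evaluating at a maximizer $\x$ of $|f|$ and bounding the right-hand side by $\|f\|_\infty \cdot 2\tv{P^t(\x,\cdot)-\pi}$ gives $(\lambda^\star)^t \le 2 d(t)$; at $t = \tm$ this yields $(\lambda^\star)^{\tm} \le 1/2$, and rearranging with the elementary inequality $1/\log(1/\lambda^\star) \ge \lambda^\star/(1-\lambda^\star) = \tr - 1$ produces $\tm \ge (\tr-1)\log 2$.

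The main obstacle is getting the $\ell^2(\pi)$ spectral setup exactly right --- in particular verifying that reversibility makes $P$ self-adjoint and extracting the normalization $\sum_j f_j(\x)^2 = 1/\pi(\x)$ --- since both bounds then follow by short manipulations once the representation is in hand. The only other delicate points are the two scalar estimates linking $\lambda^\star$ to $\tr$, namely $\lambda^\star \le e^{-1/\tr}$ for the upper bound and $1/\log(1/\lambda^\star) \ge \tr - 1$ for the lower bound.
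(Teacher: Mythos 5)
Your proposal is correct, but note that the paper itself does not prove this statement: it is quoted from the literature (Theorems 12.3 and 12.4 of the cited Levin--Peres--Wilmer book), and your argument is precisely the standard spectral proof given there --- self-adjointness of $P$ on $\ell^2(\pi)$, the eigenfunction expansion of $P^t(\x,\y)/\pi(\y)$ with the normalization $\sum_j f_j(\x)^2 = 1/\pi(\x)$ and Cauchy--Schwarz for the upper bound, and a zero-mean eigenfunction achieving $\lambda^\star$ together with $1/\log(1/\lambda^\star) \geq \lambda^\star/(1-\lambda^\star) = \tr - 1$ for the lower bound. So your reconstruction is faithful and complete, matching the approach of the paper's cited source rather than diverging from it.
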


Bounds on relaxation time can be obtained by using the following lemma (see Corollary~13.24 in \cite{lpwAMS08}).
\begin{lemma}
 \label{comp_lemma}
 Let $P$ be the transition matrix of an irreducible, aperiodic and reversible Markov chain with state space $S$ and stationary distribution $\pi$. Consider the graph $\Ham = (S, \mathsf{E})$, where $\mathsf{E}=\{(\x,\y) \colon P(\x,\y) > 0\}$, and to every pair of states $\x, \y \in S$ we assign a path $\Gamma_{\x,\y}$ from $\x$ to $\y$ in $G$. We define
 $$
  \rho = \max_{e = (\z,\w) \in \mathsf{E}} \frac{1}{Q(e)} \sum_{\substack{ \x,\y \colon\\e \in \Gamma_{\x,\y}}} \pi(\x)\pi(\y)|\Gamma_{\x,\y}|.
 $$
 Then $\frac{1}{1 - \lambda_2} \leq \rho$.
\end{lemma}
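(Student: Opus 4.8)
This is the classical \emph{canonical paths} (Poincar\'e inequality) bound on the spectral gap, and I would prove it through the variational characterization of $\lambda_2$. Write $\langle f,g\rangle_\pi = \sum_{\x} \pi(\x) f(\x) g(\x)$ for the $\pi$-weighted inner product, and for an edge $e = (\z,\w) \in \mathsf{E}$ let $Q(e) = \pi(\z) P(\z,\w)$ denote the \emph{edge measure} implicit in the statement (by reversibility $Q(\z,\w) = Q(\w,\z)$). Since $P$ is reversible it is self-adjoint for $\langle\cdot,\cdot\rangle_\pi$, so the Courant--Fischer theorem (using irreducibility and aperiodicity to guarantee $\lambda_2 < 1$) gives
\[
 1 - \lambda_2 = \min_{f \text{ non-constant}} \frac{\Dir(f)}{\mathrm{Var}_\pi(f)},
\]
where the \emph{Dirichlet form} is $\Dir(f) = \tfrac{1}{2} \sum_{\x,\y} (f(\x)-f(\y))^2 \pi(\x) P(\x,\y) = \tfrac{1}{2}\sum_{e=(\z,\w)} (f(\w)-f(\z))^2 Q(e)$ and $\mathrm{Var}_\pi(f) = \tfrac{1}{2}\sum_{\x,\y}(f(\x)-f(\y))^2 \pi(\x)\pi(\y)$; both identities follow from $\sum_\y P(\x,\y)=1$ and stationarity $\sum_\x \pi(\x)P(\x,\y)=\pi(\y)$. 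Hence it suffices to establish the \emph{Poincar\'e inequality} $\mathrm{Var}_\pi(f) \leq \rho\,\Dir(f)$ for every $f$, since this immediately yields $1-\lambda_2 \geq 1/\rho$, i.e.\ $\tfrac{1}{1-\lambda_2} \leq \rho$.

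To prove the inequality, fix $f$ and a pair $\x,\y$ and telescope along the assigned path: $f(\y) - f(\x) = \sum_{e=(\z,\w) \in \Gamma_{\x,\y}} (f(\w)-f(\z))$. Applying Cauchy--Schwarz against the all-ones vector of length $|\Gamma_{\x,\y}|$,
\[
 (f(\y)-f(\x))^2 \leq |\Gamma_{\x,\y}| \sum_{e=(\z,\w) \in \Gamma_{\x,\y}} (f(\w)-f(\z))^2 .
\]
Substituting into the variance and exchanging the order of summation so that the outer sum runs over edges $e$ rather than over pairs $(\x,\y)$ (legitimate, as all terms are non-negative),
\[
 \mathrm{Var}_\pi(f) \leq \frac{1}{2}\sum_{e=(\z,\w)} (f(\w)-f(\z))^2 \sum_{\substack{\x,\y \colon \\ e \in \Gamma_{\x,\y}}} \pi(\x)\pi(\y)\,|\Gamma_{\x,\y}| .
\]
By the definition of $\rho$ the inner sum is at most $\rho\, Q(e)$ for every $e$, so the right-hand side is at most $\rho \cdot \tfrac{1}{2}\sum_{e=(\z,\w)}(f(\w)-f(\z))^2 Q(e) = \rho\,\Dir(f)$, which is exactly the Poincar\'e inequality and completes the proof.

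The telescoping-plus-Cauchy--Schwarz estimate is routine; the only real care needed is the bookkeeping between directed and undirected edges and the accompanying factors of $\tfrac{1}{2}$, namely ensuring that the same orientation convention is used in $\Dir(f)$, in the edge measure $Q$, and in the sum defining $\rho$, and that $\Gamma_{\x,\y}$ is taken consistently for the ordered pairs appearing in $\mathrm{Var}_\pi$. I would also flag the scope of the result: this method controls only the second eigenvalue $\lambda_2$, so it bounds $\tfrac{1}{1-\lambda_2}$ but says nothing about the most negative eigenvalue $\lambda_{|S|}$; thus it bounds the relaxation time $\tr$ directly only when $\lambda^\star = \lambda_2$, which is the regime relevant for the (lazy-type) chains we apply it to.
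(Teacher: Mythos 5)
Your proof is correct: it is the standard canonical-paths (comparison/Poincar\'e) argument, and every step --- the variational characterization of $1-\lambda_2$ for reversible chains, the telescoping plus Cauchy--Schwarz along the assigned paths, the exchange of summation, and the bound of the inner sum by $\rho\,Q(e)$ --- is sound, including your caveat that the method controls only $\lambda_2$ and not $\lambda_{|S|}$ (which is exactly why the paper later invokes non-negativity of the eigenvalues in Corollary~\ref{cor:comp_flow}). Note that the paper itself gives no proof of this lemma, citing it as Corollary~13.24 of \cite{lpwAMS08}; your argument is essentially the proof found in that reference, so there is nothing further to compare.
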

We here mention a corollary of Lemma~\ref{comp_lemma} that will be useful for our results.
\begin{cor}
 \label{cor:comp_flow}
 Let $\G$ be an $n$-player opinion game with profile space $S$ and let $P$ and $\pi$ be the transition matrix and the stationary distribution of the logit dynamics for $\G$, respectively. For every pair of profiles $\x, \y$ we assign a path $\Gamma_{\x,\y}$ on the Hamming graph $\Ham$. Then
 $$
  \tr \leq 2n \max_{\substack{\z,\w \colon\\H(\z, \w) = 1\\\pi(\z) \leq \pi(\w)}} \frac{1}{\pi(\z)} \sum_{\substack{\x,\y \colon\\(\z,\w) \in \Gamma_{\x,\y}}} \pi(\x)\pi(\y)|\Gamma_{\x,\y}|.
 $$
\end{cor}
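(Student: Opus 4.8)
The plan is to invoke Lemma~\ref{comp_lemma} for the logit-dynamics chain $\mathcal M_\beta$, whose hypotheses hold because the chain is ergodic and reversible with respect to $\pi$ (Theorem~\ref{thm:gibbsPot}), and to note that the graph of Lemma~\ref{comp_lemma} is exactly the Hamming graph $\Ham$, since $P(\x,\y)>0$ for distinct profiles precisely when $H(\x,\y)=1$. Two things then need to be supplied: (i) an explicit lower bound on the edge measure $Q(e)=\pi(\z)P(\z,\w)$ in terms of $\pi(\z)$, which converts the abstract congestion $\rho$ into the stated maximum, and (ii) the identity $\tr=\frac{1}{1-\lambda_2}$, since Lemma~\ref{comp_lemma} controls $\frac{1}{1-\lambda_2}$ whereas $\tr=\frac{1}{1-\lambda^\star}$ with $\lambda^\star=\max_{i\geq2}|\lambda_i|$.

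For step (i), I would fix a directed edge $e=(\z,\w)$ with $H(\z,\w)=1$, say differing in coordinate $i$, and assume $\pi(\z)\leq\pi(\w)$. By \eqref{eq:transmatrix}, $P(\z,\w)=\frac1n\sigma_i(w_i\mid\z_{-i})$ and $P(\w,\z)=\frac1n\sigma_i(z_i\mid\z_{-i})$, and these two conditional probabilities sum to $1$ because player $i$ has only the strategies $z_i\neq w_i$. Reversibility gives $\pi(\z)\sigma_i(w_i\mid\z_{-i})=\pi(\w)\sigma_i(z_i\mid\z_{-i})$, so $\pi(\z)\leq\pi(\w)$ forces $\sigma_i(w_i\mid\z_{-i})\geq\sigma_i(z_i\mid\z_{-i})$, whence $\sigma_i(w_i\mid\z_{-i})\geq\frac12$. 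Therefore $Q(e)=\pi(\z)P(\z,\w)\geq\pi(\z)/(2n)$, i.e. $1/Q(e)\leq 2n/\pi(\z)$. Since the reverse orientation has the same minimum weight $\min(\pi(\z),\pi(\w))=\pi(\z)$ and the same set of paths through the undirected edge, the congestion satisfies $\rho\leq 2n\max_{\,H(\z,\w)=1,\ \pi(\z)\leq\pi(\w)}\frac{1}{\pi(\z)}\sum_{\x,\y:(\z,\w)\in\Gamma_{\x,\y}}\pi(\x)\pi(\y)|\Gamma_{\x,\y}|$, which is the right-hand side of the corollary.

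The more delicate step is (ii), which I expect to be the main obstacle: one must rule out a negative eigenvalue dominating $\lambda_2$ in absolute value, as otherwise Lemma~\ref{comp_lemma} only bounds $\frac{1}{1-\lambda_2}$ and not $\tr$ itself. I would close this by showing that $P$ is positive semidefinite with respect to the $\pi$-weighted inner product, so that $\lambda_{|S|}\geq0$ and hence $\lambda^\star=\lambda_2$. The key observation is the decomposition $P=\frac1n\sum_{i=1}^n P_i$, where $P_i$ resamples coordinate $i$ from $\sigma_i(\cdot\mid\x_{-i})$ and leaves the rest fixed: each $P_i$ is reversible with respect to $\pi$ (its detailed-balance equation is exactly the Gibbs-conditioning identity underlying Theorem~\ref{thm:gibbsPot}) and idempotent, $P_i^2=P_i$, because a second resampling of coordinate $i$ depends only on the unchanged profile $\x_{-i}$. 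A reversible idempotent operator is an orthogonal projection, hence positive semidefinite, and an average of positive semidefinite operators remains so. Thus $P$ is positive semidefinite, all eigenvalues are nonnegative, $\tr=\frac{1}{1-\lambda_2}\leq\rho$, and combining with the edge bound from step (i) yields the claim.
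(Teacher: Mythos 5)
Your proposal is correct and follows the same route as the paper's own proof: apply Lemma~\ref{comp_lemma} on the Hamming graph, lower-bound the edge measure by $Q(\z,\w)\geq \pi(\z)/(2n)$, and identify $\tr$ with $\frac{1}{1-\lambda_2}$ by ruling out negative eigenvalues. The difference lies only in how the two supporting facts are discharged: the paper asserts the edge bound with the words ``by reversibility of $P$'' and obtains non-negativity of the spectrum by citing \cite{afpppSPAA11j}, whereas you prove both from scratch. Your step (i) --- detailed balance together with $\sigma_i(w_i\mid\z_{-i})+\sigma_i(z_i\mid\z_{-i})=1$, so that the transition toward the heavier state has conditional probability at least $1/2$ --- is precisely the computation hidden behind the paper's one-line assertion. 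Your step (ii), writing $P=\frac{1}{n}\sum_{i} P_i$ with each single-coordinate resampling operator $P_i$ reversible and idempotent, hence an orthogonal projection in $\ell^2(\pi)$, so that $P$ is positive semidefinite as an average of projections, is the standard argument and in essence the content of the result the paper cites; spelling it out makes the corollary self-contained, and it applies verbatim to the logit dynamics of any finite potential game, not just opinion games. You also correctly identified this as the genuinely delicate point: Lemma~\ref{comp_lemma} controls only $\frac{1}{1-\lambda_2}$, and without $\lambda_{|S|}\geq 0$ the relaxation time could be governed by a negative eigenvalue, in which case the stated bound would not follow. Both subproofs are sound as written.
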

\begin{proof}
 In \cite{afpppSPAA11j} it is proved that all the eigenvalues of the transition matrix of the logit dynamics for potential games are non-negative. It follows then that $\tr = \frac{1}{1 - \lambda_2}$. Moreover, by reversibility of $P$, we have that for $\z,\w \in S$ such that $H(\z,\w)=1$ and $\pi(\z) \leq \pi(\w)$ it holds:
 $$
  Q(\z,\w) = \pi(\z) P(\z,\w) \geq \frac{\pi(\z)}{2n}.
 $$
 Thus, the claim follows from Lemma~\ref{comp_lemma}.
\end{proof}

\subsubsection{Bottleneck ratio}
Finally, an important concept to establish our lower bounds is represented by the \emph{bottleneck ratio}. Consider an ergodic Markov chain with finite state space $S$, transition matrix $P$, and stationary distribution $\pi$. The probability distribution $Q(\x, \y) = \pi(\x)P(\x,\y)$  is of particular interest and is sometimes called the \emph{edge stationary distribution}. Note that if the chain is reversible then $Q(\x, \y)= Q(\y, \x)$. For any $L \subseteq S$, we let $Q(L, S \setminus L)=\sum_{\x \in L, \y \in S \setminus L} Q(\x,\y)$. The bottleneck ratio of $L \subseteq S$, $L$ non-empty, is
$$
B(L) = \frac{Q(L,S \setminus R)}{\pi(L)}.
$$
We use the following theorem to derive lower bounds to the mixing time (see, for example, Theorem~7.3 in \cite{lpwAMS08}).
\begin{theorem}[Bottleneck ratio]\label{theorem:bottleneck}
Let $\mathcal{M} = \{ X_t \colon t \in \mathbb{N} \}$ be an irreducible and aperiodic Markov
chain with finite state space $S$,
transition matrix $P$, and stationary distribution $\pi$.
Then the mixing time is
$$
\tm \geq \max_{L \colon \pi(L) \leq 1/2} \frac{1}{4 B(L)}.
$$
\end{theorem}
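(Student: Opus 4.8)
The plan is to prove the bound by fixing a set $L$ with $\pi(L) \le 1/2$ and showing that the chain started from the stationary distribution \emph{conditioned on} $L$ needs many steps before a constant fraction of its mass leaks out of $L$; since at stationarity only a $\pi(L) \le 1/2$ fraction of the mass sits in $L$, this slow leakage forces the total variation distance to stay large for a long time. Concretely, write $\pi_L$ for the distribution with $\pi_L(\x) = \pi(\x)/\pi(L)$ for $\x \in L$ and $\pi_L(\x) = 0$ otherwise, and set $\mu_t = \pi_L P^t$. I would lower-bound the mixing time by studying $\tv{\mu_t - \pi}$ and then converting this into a statement about $d(t)$ via convexity of total variation distance: since $\mu_t = \sum_{\x \in L} \pi_L(\x) P^t(\x, \cdot)$ is a convex combination of the point-mass evolutions, $\tv{\mu_t - \pi} \le \max_{\x} \tv{P^t(\x,\cdot) - \pi} = d(t)$, so any lower bound on $\tv{\mu_t-\pi}$ is also a lower bound on $d(t)$.

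The heart of the argument is a leakage estimate: I claim $\mu_t(S \setminus L) \le t\, B(L)$ for every $t$. The key auxiliary fact is the pointwise domination $\mu_t(\x) \le \pi(\x)/\pi(L)$ for all $\x$ and all $t$, which I would prove by induction on $t$. It holds at $t=0$ by definition of $\pi_L$, and the inductive step is immediate from stationarity: $\mu_{t+1}(\y) = \sum_\x \mu_t(\x) P(\x,\y) \le \frac{1}{\pi(L)}\sum_\x \pi(\x) P(\x,\y) = \pi(\y)/\pi(L)$. Granting this, I would bound the one-step growth of the escaped mass $s_t := \mu_t(S\setminus L)$ by splitting the flow into $S \setminus L$ according to whether it originates inside or outside $L$: the part coming from outside $L$ contributes at most $s_t$, while the part leaving $L$ is $\sum_{\x \in L, \y \notin L} \mu_t(\x) P(\x,\y) \le \frac{1}{\pi(L)}\sum_{\x \in L, \y \notin L}\pi(\x)P(\x,\y) = Q(L,S\setminus L)/\pi(L) = B(L)$, using the domination. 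This yields $s_{t+1} \le s_t + B(L)$, hence $s_t \le t\,B(L)$ by induction.

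To finish, I would observe that $\tv{\mu_t - \pi} \ge \mu_t(L) - \pi(L) = (1 - s_t) - \pi(L) \ge \tfrac12 - t\,B(L)$, using $\pi(L) \le 1/2$. Combining with the convexity reduction gives $d(t) \ge \tfrac12 - t\,B(L)$. Evaluating at $t = \tm = \tm(1/4)$, where by definition $d(\tm) \le 1/4$, produces $\tfrac14 \ge \tfrac12 - \tm\,B(L)$, i.e. $\tm \ge \frac{1}{4B(L)}$; taking the maximum over all admissible $L$ yields the theorem.

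I expect the main obstacle to be getting the leakage estimate exactly right — in particular, recognizing that the naive one-step flow bound $B(L)$ applies not only at $t=0$ but at every step precisely because of the pointwise domination $\mu_t(\x) \le \pi(\x)/\pi(L)$, which is what prevents returning mass from re-accumulating in $L$ beyond its stationary profile. The convexity step reducing $\tv{\mu_t-\pi}$ to the point-start quantity $d(t)$ is a second point that must be handled with care, since the definition of $\tm$ is phrased in terms of worst-case point starts rather than arbitrary initial distributions.
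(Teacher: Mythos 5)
Your proof is correct. The paper does not actually prove this statement --- it is quoted as Theorem~7.3 of \cite{lpwAMS08} --- and your argument (starting the chain from the conditional distribution $\pi_L$, the pointwise domination $\mu_t(\x) \leq \pi(\x)/\pi(L)$ preserved by stationarity, the resulting leakage bound $\mu_t(S \setminus L) \leq t\,B(L)$, and the convexity reduction to $d(t)$) is precisely the standard proof given there, so no comparison beyond this is needed.
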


\subsection{Upper bounds}
\subsubsection{\texorpdfstring{For every $\beta$}{For every beta}}
Consider the bijective function $\sigma \colon V \rightarrow \{1, \ldots, |V|\}$ representing an ordering of vertices of $G$. Let $\mathcal{L}$ be the set of all orderings of vertices of $G$ and set $V^\sigma_i = \{v \in V \colon \sigma(v) < i\}$. Moreover, for any partition $(L, R)$ of $V$ let $W(L,R)$ be the sum of the weights of edges that have an endpoint in $L$ and the other one in $R$.
Then, the (weighted) \emph{cutwidth} of $G$ is
\begin{equation*}
 \cw(G) = \min_{\sigma \in \mathcal{L}} \max_{1 < i \leq |V|} W(V^\sigma_i, V \setminus V^\sigma_i).
\end{equation*}
\begin{theorem}
 \label{th:all_graph}
 Let $\G$ be an $n$-player opinion game on a graph $G=(V,E)$. The mixing time of the logit dynamics for $\G$ is
 $$
  \tm \leq (1 + \beta) \cdot \poly{n,w_{\max}} \cdot e^{\beta\Theta(\cw(G))}.
 $$
\end{theorem}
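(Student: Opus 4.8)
The plan is to bound the relaxation time $\tr$ by the canonical-paths estimate of Corollary~\ref{cor:comp_flow} and then convert it into a mixing-time bound through Theorem~\ref{theorem:relaxation}. First I would fix an ordering $\sigma$ of $V$ attaining the cutwidth $\cw(G)$, and for each pair of profiles $\x,\y$ let $\Gamma_{\x,\y}$ be the path on $\Ham$ that flips the coordinates on which $\x$ and $\y$ disagree one at a time, in the order prescribed by $\sigma$ (always changing the current coordinate from its $\x$-value to its $\y$-value). Every such path has length at most $n$, so by Corollary~\ref{cor:comp_flow} it suffices to show that for every Hamming edge $(\z,\w)$ with $\pi(\z)\le\pi(\w)$ one has $\frac{1}{\pi(\z)}\sum_{\x,\y\colon(\z,\w)\in\Gamma_{\x,\y}}\pi(\x)\pi(\y)|\Gamma_{\x,\y}|\le\poly{n,w_{\max}}\,e^{\beta\Theta(\cw(G))}$, where $\pi$ is the Gibbs measure of Theorem~\ref{thm:gibbsPot}.

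The heart of the argument is a congestion estimate via an injective encoding. I would fix the edge $(\z,\w)$, let $k$ be the coordinate it flips, and split $V\setminus\{k\}$ into $A=\{v\colon\sigma(v)<\sigma(k)\}$ and $B=\{v\colon\sigma(v)>\sigma(k)\}$. If $\Gamma_{\x,\y}$ uses $(\z,\w)$, then $\x$ agrees with $\z$ on $B$ and $\y$ agrees with $\z$ on $A$; hence the pair $(\x,\y)$ is determined by $\x|_A$, $\y|_B$ and the value $x_k$. Define $\eta(\x,\y)$ to be the profile that equals $\x$ on $A$, equals $\y$ on $B$, and equals $x_k$ on $k$; this map is injective on the pairs routing through $(\z,\w)$. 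The decisive claim is $|\Phi(\x)+\Phi(\y)-\Phi(\z)-\Phi(\eta(\x,\y))|\le O(\cw(G))$. To prove it I would expand $\Phi$ as in \eqref{eq:pot}: the belief terms $\sum_i(x_i-b_i)^2$ cancel up to $O(1)$, and an edge-by-edge comparison of the discording-edge term $D$ shows that edges lying entirely inside $A$, or entirely inside $B$, contribute nothing, so that the only surviving contributions come from edges between $A$ and $B$ and from edges incident to $k$. Each such edge crosses one of the cuts $(V^\sigma_{\sigma(k)},V\setminus V^\sigma_{\sigma(k)})$ determined by $\sigma$, whence the total weight involved is at most $\cw(G)$. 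Consequently $\frac{\pi(\x)\pi(\y)}{\pi(\z)}\le e^{\beta\,O(\cw(G))}\pi(\eta(\x,\y))$, and summing over the distinct images $\eta(\x,\y)$ (a subset of $S$, so the probabilities sum to at most $1$) gives $\sum\frac{\pi(\x)\pi(\y)}{\pi(\z)}\le e^{\beta\,O(\cw(G))}$. Combined with $|\Gamma_{\x,\y}|\le n$ and the factor $2n$ from Corollary~\ref{cor:comp_flow}, this yields $\tr\le 2n^2\,e^{\beta\Theta(\cw(G))}$.

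Finally I would pass from $\tr$ to $\tm$ using Theorem~\ref{theorem:relaxation}, which requires bounding $\log(1/\pimin)$. Since $0\le\Phi(\x)\le W+n=\poly{n,w_{\max}}$ with $W=\sum_{(i,j)\in E}w_{ij}$, and $Z\ge 1$ because $\Phi\ge 0$, we obtain $\log(1/\pimin)=\beta\max_\x\Phi(\x)+\log Z\le(1+\beta)\,\poly{n,w_{\max}}$; multiplying this by the bound on $\tr$ produces exactly $(1+\beta)\poly{n,w_{\max}}\,e^{\beta\Theta(\cw(G))}$. I expect the main obstacle to be the potential comparison of the second paragraph: one must check that both orientations in which a canonical path can traverse the edge $(\z,\w)$ lead to the same $O(\cw(G))$ estimate, and—crucially—that every edge whose discording status actually differs among $\x,\y,\z,\eta(\x,\y)$ genuinely crosses the cut at position $\sigma(k)$, so that its weight is charged to $\cw(G)$ rather than to the (possibly far larger) weighted degree $W_k$ of the flipped vertex $k$.
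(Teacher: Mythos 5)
Your plan is the paper's own proof: the same canonical paths that flip disagreement coordinates in the cutwidth ordering, the same injective-encoding congestion bound fed into Corollary~\ref{cor:comp_flow}, and the same conversion to mixing time via Theorem~\ref{theorem:relaxation} with the estimate $\log(1/\pimin)\le n+O(1)+\beta(n+W)$. The outline is sound, and your two worries at the end are exactly the right ones to check (and both check out: every contributing edge crosses either the cut at position $\sigma(k)$ or the one at $\sigma(k)+1$, so the charge is to $2\cw(G)$, never to $W_k$).

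The one place where you deviate from the paper is the encoding, and it costs you a little. The paper's $\Lambda_\xi$ is defined with two cases precisely so that at \emph{every} coordinate, including $k$, the pair $\{\z,\Lambda_\xi(\x,\y)\}$ carries the same pair of values as $\{\x,\y\}$; then the belief terms cancel exactly and one gets $\Phi(\x)+\Phi(\y)-\Phi(\z)-\Phi(\Lambda_\xi(\x,\y))\ge -2\cw(G)$ with no additive slack. Your $\eta$ always puts $x_k$ at coordinate $k$, so in the traversal orientation where $z_k=x_k$ the belief term $\pm(1-2b_k)$ at $k$ survives (together with one-sided edge terms at $k$, which do still charge to the cuts). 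Hence what your argument actually proves is $|\Phi(\x)+\Phi(\y)-\Phi(\z)-\Phi(\eta(\x,\y))|\le 2\cw(G)+O(1)$, not $O(\cw(G))$ as you claim: the two differ, and your bound yields an extra $e^{O(\beta)}$ factor in $\tr$, which is \emph{not} of the form $e^{\beta\Theta(\cw(G))}$ when all edge weights (hence the cutwidth) are sub-constant. The repair is exactly the paper's device: let the $k$-coordinate of the encoding be $1-z_k$, the complement of the low-probability endpoint's value, so that complementarity holds in both orientations. (Two trivial further remarks: your encoding is in fact injective simultaneously over both traversal orientations, which is slightly cleaner than the paper's; and in the last step you need the upper bound $Z\le 2^n$, which follows from $\Phi\ge 0$, rather than the lower bound $Z\ge 1$ you invoke.)
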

\begin{proof}
This proof is a generalization of a similar proof given by Berger et al.~\cite{bkmp2005} and by Auletta et al.~\cite{afpppSPAA11j}.

Consider the ordering of vertices of $G$ that obtains the cutwidth. Fix $\x, \y \in S$ and let $v_1, v_2, \dots, v_d$ denote the indices (according to this ordering) of the vertices at which the profiles $\x$ and $\y$ differ; we consider the path $\Gamma_{\x,\y} = (\x^0, \x^1, \ldots, \x^d)$ on $\Ham$, where
$$
 \x^i = \left(y_1, \dots, y_{v_{i+1} - 1}, x_{v_{i + 1}}, \dots, x_n\right).
$$
(Above, we assume $v_{d + 1} = n + 1$). Notice that $\x^0 = \x$, $\x^d = \y$ and $|\Gamma_{\x,\y}| \leq n$. For every edge $\xi = (\x^i, \x^{i+1})$ of $\Ham$, we consider the function $\Lambda_\xi$ that assigns to every pair of profiles $\x, \y$ such that $\xi \in \Gamma_{\x,\y}$, the following new profile
$$
 \Lambda_\xi(\x, \y) = \begin{cases}
                       \left(x_1, \dots, x_{v_{i+1} - 1}, y_{v_{i + 1}}, y_{v_{i + 1} + 1}, \dots, y_n\right) & \text{if } \pi(\x^i) \leq \pi(\x^{i+1});\\
                       \left(x_1, \dots, x_{v_{i+1} - 1}, x_{v_{i+1}}, y_{v_{i + 1} + 1}, \dots, y_n\right) & \text{otherwise,}
                      \end{cases}
$$
where $\pi$ denotes the stationary distribution (cf. Theorem~\ref{thm:gibbsPot}). It is easy to see that $\Lambda_\xi$ is an injective function: indeed, since $\xi$ is known, if $\pi(\x^i) \leq \pi(\x^{i+1})$, then we can retrieve $v_{i + 1}$, that is the first vertex where $\x^i$ and $ \x^{i+1}$ differ and thus, selecting the first $v_{i + 1} - 1$  vertices from $\Lambda_\xi(\x, \y)$ and the remaining ones from $\x^i$ we are able to reconstruct $\x$ and, similarly, selecting the first $v_{i + 1} - 1$  vertices from $\x^i$ and the remaining ones from $\Lambda_\xi(\x, \y)$ we are able to reconstruct $\y$. Similarly, if $\pi(\x^i) > \pi(\x^{i+1})$, we can retrieve $v_{i + 1}$ and we can reconstruct $\x$ and $\y$ from $\Lambda_\xi(\x, \y)$ and $\x^{i+1}$.

Let $E^\star = \{ (j, k) \in E \colon j < v_{i + 1} \text{ and } k \geq v_{i + 1}\}$: observe that $\sum_{(j,k) \in E^\star} w_{jk} \leq \cw(G)$. For any edge $e = (j,k) \in E^\star$, for every $\x, \y \in S$ and for every $\xi = (\x^i, \x^{i+1}) \in \Gamma_{\x,\y}$, we distinguish two cases:

\noindent \underline{If $x_j = y_j$ or $x_k = y_k$,} for all available values of $x_j, y_j, x_k$ and $y_k$ we show
$$
 \Phi_{e}(\x) + \Phi_{e}(\y) - \Phi_{e}(\bot_{\x^i, \x^{i+1}}) - \Phi_{e}(\Lambda_\xi(\x, \y)) = 0,
$$
where $\bot_{\x^i,\x^{i+1}} = \arg\min\{\pi(\x^i),\pi(\x^{i+1})\}$. Firstly, assume that $x_j=y_j$ and $\bot_{\x^i,\x^{i+1}}= \x^i$ which in turns implies that $\Lambda_\xi(\x, \y)=(x_1, \dots, x_{v_{i+1} - 1}, y_{v_{i + 1}}, y_{v_{i + 1} + 1}, \dots, y_n)$. We have:
\begin{align*}
  \Phi_{e}(\x) + \Phi_{e}(\y) - \Phi_{e}(\bot_{\x^i, \x^{i+1}}) - \Phi_{e}(\Lambda_\xi(\x, \y)) &= \Phi_{e}(x_j, x_k) + \Phi_{e}(y_j, y_k) - \Phi_{e}(y_j, x_k) - \Phi_{e}(x_j, y_k) \\ &= \Phi_{e}(x_j, x_k) + \Phi_{e}(x_j, y_k) - \Phi_{e}(x_j, x_k) - \Phi_{e}(x_j, y_k)=0.
\end{align*}
It is not hard to check that the same is true for all the other possible cases arising.

\noindent \underline{If $x_j \neq y_j$ and $x_k \neq y_k$,} similarly to the above, it is not hard to see that for all available values of $x_j, y_j, x_k$ and $y_k$
$$
 \Phi_{e}(\x) + \Phi_{e}(\y) - \Phi_{e}(\bot_{\x^i, \x^{i+1}}) - \Phi_{e}(\Lambda_\xi(\x, \y)) = \pm (\alpha_e + \delta_e - \beta_e - \gamma_e) = \pm 2w_e,
$$
where $\alpha_e, \beta_e, \gamma_e$ and $\delta_e$ are defined in \eqref{eq:pot:edge}. Moreover for $e = (j,k) \in E \setminus E^\star$ it holds:
\[
\Phi_{e}(\x) + \Phi_{e}(\y) - \Phi_{e}(\bot_{\x^i, \x^{i+1}}) - \Phi_{e}(\Lambda_\xi(\x, \y)) = 0
\]
since, by construction, one of $\bot_{\x^i, \x^{i+1}}$ and $\Lambda_\xi(\x, \y)$ has $j$-th and $k$-th entry of $\x$ and the other has $j$-th and $k$-th entry of $\y$. Thus, we have that for every $\x, \y \in S$ and for every $\xi = (\x^i, \x^{i+1}) \in \Gamma_{\x,\y}$,
\begin{equation}
 \label{pot}
  \begin{split}
  \Phi(\x) + \Phi(\y) - \Phi(\bot_{\x^i, \x^{i+1}}) - \Phi(\Lambda_\xi(\x, \y)) & = \sum_{e \in E} \left(\Phi_{e}(\x) + \Phi_{e}(\y) - \Phi_{e}(\bot_{\x^i, \x^{i+1}}) - \Phi_{e}(\Lambda_\xi(\x, \y))\right)\\ &= \sum_{e \in E^\star} \left(\Phi_{e}(\x) + \Phi_{e}(\y) - \Phi_{e}(\bot_{\x^i, \x^{i+1}}) - \Phi_{e}(\Lambda_\xi(\x, \y))\right) \\
  & \geq -2\cw(G).
  \end{split}
\end{equation}
Now let $\xi^\star=(\z,\w)$ with $\pi(\z) \leq \pi(\w)$ be the edge of $\Ham$ for which
$
\sum_{\substack{\x,\y \colon\\ \xi^\star \in \Gamma_{\x,\y}}} \frac{\pi(\x)\pi(\y)}{\pi(\z)} |\Gamma_{\x,\y}|
$
is maximized. Applying Corollary~\ref{cor:comp_flow}, we obtain
\begin{align*}
 \tr & \leq 2n \sum_{\substack{\x,\y \colon\\ \xi^\star \in \Gamma_{\x,\y}}} \frac{\pi(\x)\pi(\y)}{\pi(\z)} |\Gamma_{\x,\y}| \leq 2n^2 \sum_{\substack{\x,\y \colon\\ \xi^\star \in \Gamma_{\x,\y}}} \frac{\pi(\x)\pi(\y)}{\pi(\z)\pi(\Lambda_{\xi^\star}(\x,\y))} \pi(\Lambda_{\xi^\star}(\x,\y)) \\
 & \leq 2n^2 e^{2\beta\cw(G)} \sum_{\substack{\x,\y \colon\\ \xi^\star \in \Gamma_{\x,\y}}} \pi(\Lambda_{\xi^\star}(\x, \y)) \leq 2n^2 e^{2\beta\cw(G)} \sum_{\x} \pi(\x) \leq  2n^2 e^{2\beta\cw(G)},
\end{align*}
where the third inequality follows from Theorem~\ref{thm:gibbsPot} and~\eqref{pot}, and the penultimate from the fact that $\Lambda_\xi$ is injective.

The theorem follows from Theorem~\ref{theorem:relaxation} and by observing that, since $\Phi(\x) \geq 0$ for any strategy profile $\x$, Theorem~\ref{thm:gibbsPot} implies
\begin{equation}
 \label{eq:pi_min}
 \begin{split}
   \log \left((\pimin/4)^{-1}\right) & = \log \left(4 \sum_\x e^{-\beta(\Phi(\x)-\Phi_{\max})}\right)\\
 & \leq \log \left(2^{n+2} \cdot e^{\beta\Phi_{\max}}\right) \leq \log \left(e^{n+2+\beta\Phi_{\max}}\right) = n + 2 + \beta\Phi_{\max},
 \end{split}
\end{equation}
where
$
 \Phi_{\max} = \max_\x \Phi(\x) \leq n + W.
$
\end{proof}

\subsubsection{\texorpdfstring{For small $\beta$}{For small beta}} The following theorem shows that for small values of $\beta$ the mixing time is polynomial. We remark that there are network topologies for which this theorem gives a bound higher than that guaranteed by Theorem~\ref{th:all_graph} on the values of $\beta$ for which the mixing time is polynomial.
\begin{theorem}
 \label{th:small_beta}
 Let $\G$ be an $n$-player opinion game on a connected graph $G$, with $n > 2$. Let $\Delta_{\max}$ be the maximum degree in the graph. If $\beta \leq 1/(w_{\max} \Delta_{\max})$, then the mixing time of the logit dynamics for $\G$ is $\OO(n \log n)$.
\end{theorem}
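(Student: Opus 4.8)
The plan is to apply the path coupling technique (Theorem~\ref{theorem:pathcoupling}) using exactly the coupling described in Section~\ref{subsubsec:coupling}, so that it suffices to exhibit a per-step contraction on pairs of profiles adjacent in $\Ham$. First I would fix $\x,\y\in S$ with $\rho(\x,\y)=1$, say differing only at player $j$, and compute $\Expec{\x,\y}{\rho(X_1,Y_1)}$ by conditioning on the player $i$ chosen uniformly at random. Three cases arise. If $i=j$ (probability $1/n$), then since $\x_{-j}=\y_{-j}$ the two update distributions coincide, the coupling forces $x_i=y_i$, and the distance drops to $0$. If $i\neq j$ is not a neighbour of $j$, then player $i$'s utility does not depend on $j$'s strategy, the update distributions again coincide, and the distance stays $1$. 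If $i\neq j$ is a neighbour of $j$, the distance stays $1$ with probability $1-p_i$ and rises to $2$ with probability $p_i$, where $p_i$ is the total variation distance between $\sigma_i(\cdot\mid\x_{-i})$ and $\sigma_i(\cdot\mid\y_{-i})$. Collecting these contributions gives
\[
\Expec{\x,\y}{\rho(X_1,Y_1)} = \frac{1}{n}\Big((n-1) + \sum_{i\colon (i,j)\in E} p_i\Big) = 1 - \frac{1-\sum_{i\colon (i,j)\in E}p_i}{n}.
\]

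The crux is to bound each $p_i$ for a neighbour $i$ of $j$. Writing the two-point Boltzmann update as the logistic function $\sigma_i(0\mid\x_{-i}) = (1+e^{\beta g})^{-1}$ with $g = u_i(\x_{-i},1)-u_i(\x_{-i},0)$, flipping $j$'s opinion changes only the single discording term $w_{ij}(x_i-x_j)^2$ and hence shifts this gap by exactly $2w_{ij}$. Since the logistic map has derivative bounded in absolute value by $\beta/4$, I obtain $p_i \le \tfrac{\beta}{4}\cdot 2w_{ij} = \tfrac{\beta w_{ij}}{2}$, so that $\sum_{i\colon (i,j)\in E} p_i \le \tfrac{\beta}{2}\sum_{i\colon (i,j)\in E} w_{ij} \le \tfrac{\beta}{2}\,\Delta_{\max} w_{\max}$. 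Under the hypothesis $\beta\le 1/(w_{\max}\Delta_{\max})$ this is at most $1/2$, yielding $\Expec{\x,\y}{\rho(X_1,Y_1)} \le 1 - \tfrac{1}{2n} \le e^{-1/(2n)}$, i.e.\ a contraction with $\alpha = 1/(2n)$.

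Finally, since $\diam(\Ham)=n$, Theorem~\ref{theorem:pathcoupling} gives $\tm(\varepsilon) \le 2n\big(\log n + \log(1/\varepsilon)\big)$, which is $\OO(n\log n)$ for constant $\varepsilon$. I expect the main obstacle to be the neighbour case: one must check carefully that the \emph{only} quantity distinguishing the two chains' update rules at player $i$ is the single edge $(i,j)$, so that the utility gap moves by precisely $2w_{ij}$ and no more, and that the uniform $\beta/4$ Lipschitz bound on the logistic map is the right device to convert this shift into a usable bound on $p_i$. The remaining steps — the case split, the summation over neighbours, and the invocation of path coupling — are routine bookkeeping.
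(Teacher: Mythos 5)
Your proposal is correct and follows essentially the same route as the paper's proof: path coupling with exactly the coupling of Section~\ref{subsubsec:coupling}, the same three-case analysis over the randomly chosen player, and the same reduction to bounding $\sum_{i\in N_j} p_i$ where $p_i$ is the total variation distance between the two update distributions at a neighbour $i$ of $j$. The only difference is the technical device for bounding $p_i$: the paper writes out the two logistic probabilities explicitly, maximizes their difference over the free parameter $\ell$, and invokes $e^{-x}\geq 1-x$ to get $p_i \leq \frac{w_{ij}\beta}{2-w_{ij}\beta}$, whereas your uniform $\beta/4$ Lipschitz bound on the logistic map (together with the observation that flipping $x_j$ shifts the utility gap by exactly $2w_{ij}$) gives the slightly cleaner and tighter $p_i \leq \beta w_{ij}/2$, which yields $\sum_i p_i \leq 1/2$ directly and even dispenses with the paper's use of $\Delta_{\max}\geq 2$ (hence of the hypothesis $n>2$) in the final contraction step.
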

\begin{proof}
 Consider two profiles $\x$ and $\y$ that differ only in the strategy played by player $j$. W.l.o.g., we assume $x_j = 1$ and $y_j = 0$. We consider the coupling described in Section~\ref{subsubsec:coupling} for two chains $X$ and $Y$ starting respectively from $X_0=\x$ and $Y_0=\y$. We next compute the expected distance between $X_1$ and $Y_1$ after one step of the coupling.

 Let $N_i$ be the set of neighbors of $i$ in the opinion game. Notice that for any player $i$, $\sigma_i(0 \mid \x)$ only depends on $x_k$, for any $k \in N_i$, and $\sigma_i(0 \mid \y)$ only on $y_k$, for any $k \in N_i$. Therefore, since $\x$ and $\y$ only differ at position $j$, $\sigma_i(0 \mid \x)=\sigma_i(0 \mid \y)$ for $i \notin N_j$.

We start by observing that if position $j$ is chosen for update (this happens with probability $1/n$) then, by the observation above, both chains perform the same update. Since $\x$ and $\y$ differ only for player $j$, we have that the two chains are coupled (and thus at distance $0$). Similarly, if player $i \neq j$ with $i \notin N_j$ is selected for update (which happens with probability $(n - \Delta_j - 1)/n$) we have that both chains perform the same update and thus remain at distance $1$. Finally, let us consider the case in which $i \in N_j$ is selected for update. In this case, since $x_j=1$ and $y_j=0$, we have that $\sigma_i(0 \mid \x)\leq \sigma_i(0 \mid \y)$. Therefore, with probability $\sigma_i(0 \mid \x)$ both chains update position $i$ to $0$ and thus remain at distance $1$; with probability $\sigma_i(1 \mid \y) = 1 - \sigma_i(0 \mid \y)$ both chains update position $i$ to $1$ and thus remain at distance $1$; and with probability $\sigma_i(0 \mid \y)-\sigma_i(0 \mid \x)$ chain $X$ updates position $i$ to $1$ and chain $Y$ updates position $i$ to $0$ and thus the two chains go to distance $2$. By summing up, we have that the expected distance $E[\rho(X_1,Y_1)]$ after one step of coupling of the two chains is
 \begin{align*}
 E[\rho(X_1,Y_1)] & = \frac{n - \Delta_j - 1}{n} + \frac{1}{n}\sum_{i \in N_j} \left[ \sigma_i(0\mid\x)+1-\sigma_i(0\mid\y) + 2\cdot(\sigma_i(0\mid\y)-\sigma_i(0\mid\x))\right] \\
 & =
 \frac{n - \Delta_j - 1}{n}+
 \frac{1}{n}\cdot \sum_{i \in N_j}
 (1+\sigma_i(0\mid\y)-\sigma_i(0\mid\x))\\
 & =
 \frac{n-1}{n}+
 \frac{1}{n}\cdot \sum_{i \in N_j} (\sigma_i(0\mid\y)-\sigma_i(0\mid\x)).
 \end{align*}
 Let us now evaluate the difference $\sigma_i(0\mid\y)-\sigma_i(0\mid\x)$ for some $i \in N_j$. Recall that $W_i^s(\x)$ denotes the sum of the weights of edges connecting $i$ with neighbors that have opinion $s$ in the profile $\x$. Note that $W_i^0(\y) = W_i^0(\x) + w_{ij}$ and $W_i^1(\x) = W_i^1(\y) + w_{ij} = W_i - W_i^0(\x)$. For sake of compactness we will denote with $\ell$ the quantity $e^{\beta(2b_i - 1 + 2W_i^1(\x) - W_i)}$. By \eqref{eq:updateprob} we have
 $$
  \sigma_i(0 \mid \x) = \frac{e^{-\beta(b_i^2 + W_i^1(\x))}}{e^{-\beta(b_i^2 + W_i^1(\x))} + e^{-\beta((1 - b_i)^2 + W_i - W_i^1(\x))}} = \frac{1}{1 + \ell},
 $$
 and
 $$
  \sigma_i(0 \mid \y) = \frac{e^{-\beta(b_i^2 + W_i^1(\x) - w_{ij})}}{e^{-\beta(b_i^2 + W_i^1(\x) - w_{ij})} + e^{-\beta((1 - b_i)^2 + W_i - W_i^1(\x) + w_{ij})}} = \frac{1}{1 + \ell e^{-2w_{i,j}\beta}}.
 $$
 The function $\frac{1}{1 + \ell e^{-2w_{ij}\beta}} - \frac{1}{1 + \ell}$ is maximized for $\ell = e^{w_{ij}\beta}$. Thus
 $$
  \sigma_i(0 \mid \y) - \sigma_i(0 \mid \x) \leq \frac{1}{1 + e^{-\beta}} - \frac{1}{1 + e^{\beta}} = \frac{2}{1 + e^{-\beta}} - 1.
 $$
 By using the well-known approximation $e^{-w_{ij}\beta} \geq 1 - w_{ij}\beta$ and since by hypothesis $w_{ij}\beta \leq 1/\Delta_{\max}$, we have
 $$
  \sigma_i(0 \mid \y) - \sigma_i(0 \mid \x) \leq w_{ij}\beta \cdot \frac{1}{2 - w_{ij}\beta} \leq \frac{1}{\Delta_{\max}} \cdot \frac{\Delta_{\max}}{2\Delta_{\max} - 1}.
 $$
 We can conclude that the expected distance after one step of the chain is
$$
E[\rho(X_1,Y_1)] \leq \frac{n-1}{n} + \frac{1}{n} \cdot \frac{\Delta_j}{2\Delta_{\max} - 1} \leq \frac{n-1}{n} + \frac{2}{3n} = 1 - \frac{1}{3n} \leq e^{-\frac{1}{3n}}.
$$
where the second inequality relies on the fact that $\Delta_{\max} \geq 2$, since the social graph is connected and $n > 2$. Since $\diam(\Ham)=n$, by applying Theorem~\ref{theorem:pathcoupling} with $\alpha=\frac{1}{3n}$, we obtain the theorem.
\end{proof}

\subsection{Lower bound}
Recall that $\Ham$ is the Hamming graph on the set of profiles of an opinion games on a graph $G$. The following observation easily follows from the definition of cutwidth.
\begin{obs}
 For every path on $\Ham$ between the profile $\0 = (0, \ldots, 0)$ and the profile $\1 = (1, \ldots, 1)$ there exists a profile for which the weight of the discording edges is at least $\cw(G)$.
\end{obs}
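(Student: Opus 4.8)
The plan is to translate the statement into the language of graph cuts and then reduce the general case to a \emph{monotone} case, which is essentially just the definition of cutwidth. First I would identify each profile $\x$ with the vertex set $A_\x = \{i \colon x_i = 1\}$ of players choosing opinion $1$; then the weight of the discording edges of $\x$ is exactly the cut value $W(A_\x, V \setminus A_\x)$, and the profiles $\0$ and $\1$ correspond to $\emptyset$ and $V$. A path on $\Ham$ from $\0$ to $\1$ becomes a walk $A_0 = \emptyset, A_1, \ldots, A_m = V$ in which consecutive sets differ by a single vertex. Thus the goal is to show that $\max_t W(A_t, V \setminus A_t) \geq \cw(G)$.

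The monotone case is immediate. Suppose the path only flips coordinates from $0$ to $1$, and let $v_1, \ldots, v_n$ be the order in which the vertices are flipped. This defines an ordering $\sigma$ with $\sigma(v_j) = j$, and after $k$ flips the set of players choosing $1$ is exactly $V^\sigma_{k+1} = \{v_1, \ldots, v_k\}$. Hence the discording weights visited along the path are precisely the prefix cuts $W(V^\sigma_i, V \setminus V^\sigma_i)$ for $1 < i \leq n$. Since $\cw(G)$ is by definition the minimum over \emph{all} orderings of the largest such prefix cut, the largest discording weight along this particular path is at least the value attained by $\sigma$, which is at least $\cw(G)$. So some visited profile has discording weight at least $\cw(G)$, as required.

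For a general (non-monotone) path the visited sets are no longer the prefixes of a single ordering, so no ordering can be read off directly; this is the only real difficulty. The plan is to argue that backtracking never helps: if a path keeps every discording weight strictly below some threshold $M$, then there is also a monotone path keeping all prefix cuts below $M$, which forces $\cw(G) \leq M$ and hence gives the claim. Writing $\mathcal{A}_M = \{A \colon W(A, V \setminus A) \leq M\}$, this amounts to a \emph{conditional monotone-connectivity} property of the sublevel sets of the cut function: whenever $\emptyset$ and $V$ lie in the same connected component of $\mathcal{A}_M$ in the hypercube, they are joined by a monotone increasing chain inside $\mathcal{A}_M$. I expect this to be the main obstacle. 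I would attack it using the fact that the cut function is symmetric and submodular (hence posimodular), which constrains how small-cut sets overlap; one must be careful, since the naive monotonization that replaces each $A_t$ by the running union $\bigcup_{s \leq t} A_s$ does \emph{not} work (the union of two cuts of weight at most $M$ can have weight up to $2M$, as already happens for two disjoint heavy edges), so the witnessing monotone path has to be selected more delicately. This is precisely the magnetization bottleneck phenomenon underlying Ising-type mixing lower bounds in the style of Berger et al.~\cite{bkmp2005}. Finally, I note that if the observation is only ever invoked for monotone paths, the second paragraph already suffices and the obstacle disappears.
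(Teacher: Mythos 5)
Your first two paragraphs are correct, and they already contain everything the paper itself offers: the paper states this observation with only the remark that it ``easily follows from the definition of cutwidth'', which is literally true only for the monotone case you dispose of. However, your closing fallback --- that the monotone case suffices if the observation is only invoked for monotone paths --- does not apply here. The observation is used to guarantee that $R_{\0}$ and $R_{\1}$ (profiles reachable from $\0$, resp.\ $\1$, along paths of $\Ham$ whose potential stays below $\blf^\star+\cw$) are disjoint, so that one of them has stationary measure at most $1/2$ and Theorem~\ref{theorem:bottleneck} applies; since the logit dynamics moves arbitrarily in $\Ham$, these are arbitrary, possibly backtracking, paths. So the non-monotone case you isolate is exactly the case that is needed, and your proposal leaves it unproven: as it stands, it is not a proof of the statement.

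The missing ingredient is an uncrossing (``recontamination does not help'') argument in the style of Bienstock and Seymour's monotonicity theorem for graph searching; your instinct that submodularity is the right tool is correct, and your observation that running unions fail is also correct --- what is missing is the extremal exchange that makes submodularity bite. Concretely: call a sequence $\emptyset=X_0,X_1,\dots,X_m=V$ with $|X_i\setminus X_{i-1}|\le 1$ a \emph{crusade} (every path in $\Ham$ from $\0$ to $\1$ is one), and let $f$ be the discording-edge-weight (cut) function, which is non-negative and submodular. Among all crusades whose cuts never exceed $M$ (excising loops, one may assume no set repeats, so the minimum below exists), pick one minimizing $\sum_i f(X_i)$ and, subject to that, $\sum_i |X_i|$. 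If it is not monotone, take $i$ with $X_{i-1}\not\subseteq X_i$. Submodularity gives $f(X_{i-1}\cap X_i)+f(X_{i-1}\cup X_i)\le f(X_{i-1})+f(X_i)$, so either $f(X_{i-1}\cap X_i)\le f(X_{i-1})$, and then replacing $X_{i-1}$ by $X_{i-1}\cap X_i$ preserves the crusade property (the element added at each of the two affected steps is still unique) and the cost bound, does not increase $\sum_i f(X_i)$, and strictly decreases $\sum_i|X_i|$; or else $f(X_{i-1}\cup X_i)<f(X_i)\le M$, and then replacing $X_i$ by $X_{i-1}\cup X_i$ preserves the crusade property and strictly decreases $\sum_i f(X_i)$. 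Either way the extremal choice is contradicted, so the extremal crusade is monotone, i.e.\ an ordering all of whose prefix cuts are at most $M$, whence $\cw(G)\le M$. Applied with $M$ equal to the largest discording weight along a given path, this proves the observation exactly as stated (so your ``conditional monotone-connectivity'' property is a theorem, not an obstacle); but it is a genuine argument, not a consequence of the definition, and the paper never supplies it.
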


From now on, let us write $\cw$ as a shorthand for $\cw(G)$, when the reference to the graph is clear from the context.
For sake of compactness, we set $\blf(\x) = \sum_i (x_i - b_i)^2$. We denote as $\blf^\star$ the minimum of $\blf(\x)$ over all profiles with $\cw$ discording edges.

Let $R_{\0}$ ($R_{\1}$) be the set of profiles $\x$ for which a path from $\0$ (resp., $\1$) to $\x$ exists in $\Ham$ such that every profile along the path has potential value less than $\blf^\star + \cw$. To establish the lower bound we use the technical result given by Theorem~\ref{theorem:bottleneck} which requires to compute the bottleneck ratio of a subset of profiles that is weighted at most a half by the stationary distribution. Accordingly, we set $R = R_{\0}$ if $\pi(R_{\0}) \leq 1/2$ and $R=R_{\1}$ if $\pi(R_{\1}) \leq 1/2$. (If both sets have stationary distribution less than one half, the best lower bound is achieved by setting $R$ to $R_{\0}$ if and only if $\Phi(\0) \leq \Phi(\1)$ since, in this case, $\blf(\0) \leq \blf(\1)$.) W.l.o.g., in the remaining of this section we assume $R = R_{\0}$.

\subsubsection{\texorpdfstring{For large $\beta$}{For large beta}}
Let $\partial R$ be the set of profiles in $R$ that have at least a neighbor $\y$ in the Hamming graph $\Ham$ such that $\y \notin R$. Moreover let $\mathcal{E}(\partial R)$ the set of edges $(\x,\y)$ in $\Ham$ such that $\x \in \partial R$ and $\y \notin R$: note that $|\mathcal{E}(\partial R)| \leq n |\partial R|$. The following lemma bounds the bottleneck ratio of $R$.

\begin{lemma}
\label{lemma:bottle_lb}
For the set of profiles $R$ defined above, we have
 $
  B(R) \leq n \cdot |\partial R| \cdot e^{-\beta(\cw + \blf^\star - \blf(\0))}.
 $
\end{lemma}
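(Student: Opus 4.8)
The plan is to bound the numerator $Q(R,S\setminus R)$ and the denominator $\pi(R)$ of the bottleneck ratio $B(R) = Q(R,S\setminus R)/\pi(R)$ separately, and then combine. For the denominator, I would first note that $\0$ is reachable from itself by the trivial (length-zero) path, so $\0 \in R$ and hence $\pi(R) \geq \pi(\0)$. By Theorem~\ref{thm:gibbsPot}, $\pi(\0) = \frac{1}{Z}e^{-\beta\Phi(\0)}$, and since the all-zero profile has no discording edges we have $\Phi(\0) = \blf(\0)$; thus $\pi(R) \geq \frac{1}{Z}e^{-\beta\blf(\0)}$.

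For the numerator, I would first observe that the only pairs $(\x,\y)$ with $\x \in R$ and $\y \in S\setminus R$ that contribute a nonzero term to $Q(R,S\setminus R)$ are those with $H(\x,\y)=1$, since $P(\x,\y)=0$ otherwise; such a pair has $\x \in \partial R$ by definition, so these are exactly the edges of $\mathcal{E}(\partial R)$, of which there are at most $n|\partial R|$. The key step is to show that each crossing endpoint $\y \notin R$ has large potential, namely $\Phi(\y) \geq \blf^\star + \cw$. This I would derive from the path-based definition of $R$: if $\x \in R$ there is a path from $\0$ to $\x$ in $\Ham$ along which every profile has potential below $\blf^\star + \cw$ (in particular $\Phi(\x) < \blf^\star + \cw$); were $\Phi(\y) < \blf^\star + \cw$ as well, appending $\y$ to this path would witness $\y \in R$, contradicting $\y \notin R$.

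Given this potential bound, I would estimate each crossing term via reversibility of the chain: $Q(\x,\y) = Q(\y,\x) = \pi(\y)P(\y,\x) \leq \pi(\y) \leq \frac{1}{Z}e^{-\beta(\blf^\star + \cw)}$, where the final inequality uses Theorem~\ref{thm:gibbsPot} together with $\Phi(\y) \geq \blf^\star + \cw$. Summing over the at most $n|\partial R|$ crossing edges yields $Q(R,S\setminus R) \leq n|\partial R|\,\frac{1}{Z}e^{-\beta(\blf^\star + \cw)}$. Dividing by the lower bound on $\pi(R)$, the normalizing constants $Z$ cancel and the exponents combine to give $B(R) \leq n|\partial R|\,e^{-\beta(\cw + \blf^\star - \blf(\0))}$, which is precisely the claim.

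The only genuinely delicate point is the potential lower bound $\Phi(\y) \geq \blf^\star + \cw$ for the crossing endpoints, which rests entirely on the reachability definition of $R$ and the value $\blf^\star + \cw$ used as the potential threshold; everything else is a routine combination of reversibility, the explicit Gibbs form of $\pi$, and the crude estimates $P \leq 1$ and $|\mathcal{E}(\partial R)| \leq n|\partial R|$. (Retaining the sharper $P \leq \frac{1}{n}$ would in fact save a factor of $n$, but the stated bound is all that is needed downstream.)
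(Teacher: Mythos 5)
Your proof is correct and follows essentially the same route as the paper's: lower-bound $\pi(R)$ by $\pi(\0) = \frac{1}{Z}e^{-\beta\blf(\0)}$, upper-bound each crossing term by $\frac{1}{Z}e^{-\beta\Phi(\y)}$ with $\Phi(\y) \geq \blf^\star + \cw$ forced by exactly the same path-appending contradiction, and count the crossing edges by $n|\partial R|$. The only cosmetic difference is that you get the per-edge bound $Q(\x,\y) \leq \pi(\y)$ from reversibility plus $P \leq 1$, whereas the paper derives the identical inequality by expanding $\sigma_i$ and invoking the potential identity $\Phi(\y)-\Phi(\x) = u_i(\x)-u_i(\y)$; the two computations are equivalent.
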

\begin{proof}
Since $\0 \in R$, it holds $\pi(R) \geq \pi(\0) = \frac{e^{-\beta\blf(\0)}}{Z}$. Moreover, by \eqref{eq:updateprob} we have
{\allowdisplaybreaks
\begin{align*}
  Q(R, \overline{R}) & = \sum_{\substack{(\x,\y) \in \mathcal{E}(\partial R): \\
  \y=(\x_{-i}, y_i)}} \frac{e^{-\beta\Phi(\x)}}{Z} \frac{e^{\beta u_i(\y)}}{e^{\beta u_i(\x)} + e^{\beta u_i(\y)}} \\
  & = \sum_{\substack{(\x,\y) \in \mathcal{E}(\partial R): \\ \y=(\x_{-i}, y_i)}} \frac{e^{-\beta\Phi(\x)}}{Z} \frac{e^{-\beta\Phi(\y)} e^{\beta(u_i(\x) + \Phi(\x))}}{e^{-\beta\Phi(\x)} e^{\beta(u_i(\x) + \Phi(\x))} + e^{-\beta\Phi(\y)} e^{\beta(u_i(\x) + \Phi(\x))}} \\ & = \frac{1}{Z} \sum_{(\x,\y) \in \mathcal{E}(\partial R)} \frac{e^{-\beta \Phi(\x)} e^{-\beta\Phi(\y)}}{e^{-\beta \Phi(\x)} + e^{-\beta \Phi(\y)}} = \frac{1}{Z} \sum_{(\x,\y) \in \mathcal{E}(\partial R)} \frac{e^{-\beta\Phi(\y)}}{1 + e^{\beta(\Phi(\x) - \Phi(\y))}} \\ & \leq  \frac{1}{Z} \sum_{(\x,\y) \in \mathcal{E}(\partial R)} e^{-\beta\Phi(\y)} \leq |\mathcal{E}(\partial R)| \cdot \frac{e^{-\beta(\blf^\star + \cw)}}{Z}.
 \end{align*}
 }%
 The second equality follows from the definition of potential function which implies $\Phi(\y)-\Phi(\x)=-u_i(\y)+u_i(\x)$ for $\x$ and $\y$ as above; last inequality holds because if by contradiction $\Phi(\y) < \blf^\star + \cw$ then, by definition of $R$, it would be $\y \in R$, a contradiction.
\end{proof}

From Lemma~\ref{lemma:bottle_lb} and Theorem~\ref{theorem:bottleneck} we obtain a lower bound to the mixing time of the opinion games that holds for every value of $\beta$, every social network $G$ and every vector $(b_1, \ldots, b_n)$ of internal beliefs. However, it is not clear how close this bound is to the one given in Theorem \ref{th:all_graph}. Nevertheless, by taking $b_i = 1/2$ for each player $i$ and $\beta$ high enough, we obtain the following theorem.
\begin{theorem}
\label{th:lb_high_beta}
 Let $\G$ be an $n$-player opinion game on a graph $G$. Then, there exists a vector of internal beliefs such that for $\beta = \Omega\left(\frac{n \log n}{\cw}\right)$ it holds
 $
  \tm \geq e^{\beta\Theta(\cw)}.
 $
\end{theorem}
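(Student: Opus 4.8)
The plan is to instantiate the general machinery just developed, namely Lemma~\ref{lemma:bottle_lb} together with the bottleneck-ratio bound of Theorem~\ref{theorem:bottleneck}, at the symmetric belief vector $b_i = 1/2$ for every player $i$, and then to check that the hypothesis $\beta = \Omega(n\log n/\cw)$ is exactly what is needed to absorb the polynomial/exponential prefactors into the exponent. First I would record the crucial simplification that this choice of beliefs produces: since $x_i \in \{0,1\}$ forces $(x_i - 1/2)^2 = 1/4$, we get $\blf(\x) = \sum_i (x_i - b_i)^2 = n/4$ for \emph{every} profile, so $\blf$ is constant. In particular $\blf^\star = \blf(\0) = n/4$, whence the term $\blf^\star - \blf(\0)$ in Lemma~\ref{lemma:bottle_lb} vanishes and $\Phi(\x) = n/4 + D(\x)$.

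The first thing that needs justification is that the set $R$ really satisfies the hypothesis $\pi(R)\le 1/2$ of Theorem~\ref{theorem:bottleneck}. Here I would use the bit-complement map $\x \mapsto \bar\x$. Because $b_i = 1/2$, one checks $(\,\overline{x_i}-1/2)^2 = (x_i - 1/2)^2$ and $D_i(\bar\x) = D_i(\x)$, so $\Phi(\bar\x) = \Phi(\x)$ and hence $\pi(\bar\x) = \pi(\x)$; moreover complementation is an automorphism of $\Ham$ sending $\0$ to $\1$ and preserving the potential-threshold condition defining $R_{\0}, R_{\1}$. Thus it is a measure-preserving bijection between $R_{\0}$ and $R_{\1}$, giving $\pi(R_{\0}) = \pi(R_{\1})$. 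Next, the two sets are disjoint: if some $\x$ lay in both, concatenating a $\0$-to-$\x$ path and an $\x$-to-$\1$ path inside $\{\Phi < \blf^\star + \cw\}$ would yield a $\0$-to-$\1$ path on $\Ham$ staying below $\blf^\star + \cw$, contradicting the Observation above (which produces a profile with $D \geq \cw$, hence $\Phi = n/4 + D \geq \blf^\star + \cw$, on every such path). Disjointness plus equal measure gives $\pi(R_{\0}) = \pi(R_{\1}) \le 1/2$, so we may set $R = R_{\0}$.

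With $\blf^\star - \blf(\0) = 0$, Lemma~\ref{lemma:bottle_lb} reads $B(R) \leq n\,|\partial R|\, e^{-\beta\cw}$, and bounding $|\partial R|$ by the total number of profiles $2^n$ gives $B(R) \leq n\,2^n\,e^{-\beta\cw}$. Theorem~\ref{theorem:bottleneck} then yields $\tm \geq \tfrac{1}{4B(R)} \geq \tfrac{e^{\beta\cw}}{4n\,2^n}$. Writing $4n\,2^n \leq e^{dn}$ for a suitable absolute constant $d$ (valid for all large $n$), we obtain $\tm \geq e^{\beta\cw - dn}$. It is at this final step that the threshold on $\beta$ enters: since $\beta = \Omega(n\log n/\cw)$ means $\beta\cw = \Omega(n\log n)$, the additive loss $dn$ is of lower order than $\beta\cw$, so for any fixed $c\in(0,1)$ we have $\beta\cw - dn \geq c\,\beta\cw$ once $n$ is large, giving $\tm \geq e^{c\beta\cw} = e^{\beta\Theta(\cw)}$.

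I expect the only genuinely delicate point to be the $\pi(R)\le 1/2$ verification, i.e.\ the disjointness-plus-symmetry argument that forces $R$ to carry at most half the stationary mass; everything else is bookkeeping, and in particular the apparently crude bound $|\partial R|\le 2^n$ costs nothing because the hypothesis on $\beta$ is deliberately strong enough ($\beta\cw$ superlinear in $n$) to swallow the $e^{O(n)}$ prefactor while only perturbing the constant hidden in $\Theta(\cw)$. The role of the belief choice $b_i=1/2$ is precisely to kill the $\blf^\star - \blf(\0)$ correction and to restore the complementation symmetry, which is why the theorem only claims the existence of \emph{some} belief vector rather than holding for all of them.
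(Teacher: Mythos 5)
Your proposal is correct and follows essentially the same route as the paper's proof: choose $b_i=1/2$ so that $\blf^\star-\blf(\0)=0$, bound $|\partial R|\leq 2^n$ in Lemma~\ref{lemma:bottle_lb}, and apply Theorem~\ref{theorem:bottleneck}, absorbing the $n2^n$ prefactor via $\beta\cw=\Omega(n\log n)$. The only addition is your complementation-symmetry argument showing $\pi(R_{\0})=\pi(R_{\1})\leq 1/2$, a point the paper disposes of in the setup preceding the theorem rather than in the proof itself.
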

\begin{proof}
If $b_i = 1/2$ for every player $i$, from Lemma~\ref{lemma:bottle_lb} and Theorem~\ref{theorem:bottleneck}, since $|\partial R| \leq 2^n$ then
 \[
  \tm \geq \frac{e^{\beta \cw}}{n 2^n} = e^{\beta \cw - n \log(2n)}=e^{\beta\Theta(\cw)}. \qedhere
 \]
\end{proof}

\subsubsection{\texorpdfstring{For smaller $\beta$}{For smaller beta}}
Theorem~\ref{th:lb_high_beta} gives an (essentially) tight lower bound for high values of $\beta$ for each network topology. It would be interesting to prove a matching bound also for lower values of the rationality parameter: in this section we prove such a bound for specific classes of graphs: complete bipartite graphs and cliques.

We start by considering the class of complete bipartite graphs $K_{m,m}$.
\begin{theorem}
\label{th:matching_lb}
 Let $\G$ be an $n$-player opinion game on $K_{m,m}$. Then, there exist a vector of internal beliefs and edge weights such that, for every $\beta = \Omega\left(\frac{1}{m}\right)$, we have
 $
  \tm \geq \frac{e^{\beta\Theta(\cw)}}{n}.
 $
\end{theorem}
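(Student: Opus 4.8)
The plan is to instantiate the bottleneck-ratio machinery of Lemma~\ref{lemma:bottle_lb} and Theorem~\ref{theorem:bottleneck} on a fully symmetric instance, and to exploit the fact that the cutwidth of $K_{m,m}$ is quadratic in $m$, so that the trivial boundary estimate $|\partial R|\leq 2^n$ already suffices for $\beta=\Omega(1/m)$. First I would fix the instance: give every edge unit weight and set $b_i=\tfrac12$ for each player $i$. Then $\blf(\x)=\sum_i(x_i-\tfrac12)^2=n/4$ is constant over all profiles, so $\blf^\star=\blf(\0)$ and $\Phi(\x)=n/4+D(\x)$; in particular the threshold $\blf^\star+\cw$ defining $R_{\0}$ reduces to the condition $D(\z)<\cw$ along the path. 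Under this symmetric choice the map $\x\mapsto\1-\x$ preserves $\Phi$, hence $\pi$, and sends $R_{\0}$ to $R_{\1}$, so $\pi(R_{\0})=\pi(R_{\1})$; moreover the two sets are disjoint, since a common profile would splice into a $\0$--$\1$ path in $\Ham$ along which $D<\cw$ everywhere, contradicting the observation that every such path meets a profile of discording weight at least $\cw$. Therefore $\pi(R_{\0})\leq 1/2$ and I may legitimately take $R=R_{\0}$.

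The technical heart is to evaluate $\cw(K_{m,m})$, which I would show is $\Theta(m^2)$ (indeed $\approx m^2/2$). Write the two colour classes as $A,B$ with $|A|=|B|=m$. For any ordering of the vertices, consider the gap that puts exactly $m$ vertices on the left, say $a$ of them in $A$ and $b=m-a$ in $B$; since all edges run between $A$ and $B$, the crossing edges number $a(m-b)+b(m-a)=m^2-2ab\geq m^2-2(m/2)^2=m^2/2$, so every ordering has a cut of size at least $m^2/2$ and $\cw(K_{m,m})\geq m^2/2$. For the matching upper bound I would use the alternating ordering $A_1,B_1,A_2,B_2,\dots$: at the gap after $i$ vertices the left side holds about $i/2$ vertices of each class, giving a cut of $mi-2(i/2)^2=mi-i^2/2$, whose maximum over $1\leq i\leq 2m$ is attained at $i=m$ with value $m^2/2$. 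Hence $\cw(K_{m,m})=\Theta(m^2)$, and as $n=2m$ this is $\Theta(n^2)$.

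Finally I would assemble the bound. Lemma~\ref{lemma:bottle_lb} with $\blf^\star=\blf(\0)$ gives $B(R)\leq n\,|\partial R|\,e^{-\beta\cw}\leq n\,2^n\,e^{-\beta\cw}$, so Theorem~\ref{theorem:bottleneck} yields $\tm\geq \frac{1}{4B(R)}\geq \frac{e^{\beta\cw}}{4n\,2^n}=\frac{1}{4n}\,e^{\beta\cw-n\ln 2}$. Because $\cw=\Theta(m^2)$ while $n\ln 2=\Theta(m)$, once $\beta\geq c/m$ for a suitable constant $c$ we have $\beta\cw\geq 2n\ln 2$, whence $\beta\cw-n\ln 2\geq\tfrac12\beta\cw$ and $\tm\geq\frac{1}{4n}e^{\beta\cw/2}=e^{\beta\Theta(\cw)}/n$, which is exactly the claim for every $\beta=\Omega(1/m)$.

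I expect the main obstacle to be the cutwidth evaluation, specifically obtaining a lower bound that holds uniformly over \emph{all} orderings (the middle-gap argument above) together with a tight enough matching upper bound, and controlling the integer rounding in $a,b,i$ so that the constant in $\cw\approx m^2/2$ is pinned down. Everything else is a routine specialization of the general lower-bound lemmas, since the real leverage is simply that $K_{m,m}$ has cutwidth quadratic in $n$, which swamps the $2^n=e^{\Theta(n)}$ loss in the boundary estimate already at $\beta=\Omega(1/m)$, removing the extra logarithmic factor present in the generic Theorem~\ref{th:lb_high_beta}.
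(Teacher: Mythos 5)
Your proposal is correct, and it shares the paper's skeleton: the same symmetric instance ($b_i=\frac{1}{2}$, unit weights), the same bottleneck machinery (Lemma~\ref{lemma:bottle_lb} combined with Theorem~\ref{theorem:bottleneck}), and the same evaluation of $\cw(K_{m,m})$ via the middle-gap lower bound and the alternating ordering (the paper's Claim~\ref{claim:cutwidth_bipartite}). The genuine difference is how $|\partial R|$ is handled. The paper proves a dedicated counting lemma (Lemma~\ref{lemma:bound_partialR}): boundary profiles must have between $\cw-m$ and $\cw-1$ discording edges, hence roughly between $n/4$ and $n/2$ ones, and the resulting binomial sums are bounded by $(5e)^m\leq e^{3m}=e^{c_1\sqrt{\cw}}$. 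You skip this entirely and use the trivial bound $|\partial R|\leq 2^n$, observing that for this family $n=2m=\Theta\bigl(\sqrt{\cw}\bigr)$, so the entropy loss $e^{n\ln 2}$ is itself $e^{O(\sqrt{\cw})}$ and is swamped by $e^{\beta\cw}$ once $\beta\geq c/m$. This loses nothing --- indeed $2^n=e^{2m\ln 2}$ is numerically sharper than the paper's $e^{3m}$, so your admissible range of $\beta$ and your final exponent ($\beta\cw/2$, versus the paper's $\beta\cw(1-c_2)$) are just as good up to constants; the paper's finer counting would only begin to matter for a graph family in which $n$ grows faster than $\sqrt{\cw}$. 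Two smaller points also work in your favor: your symmetry-plus-disjointness argument (the flip $\x\mapsto\1-\x$ preserves $\Phi$ and hence $\pi$, maps $R_{\0}$ onto $R_{\1}$, and a common profile would splice into a $\0$--$\1$ path all of whose profiles have discording weight below $\cw$, contradicting the paper's observation) makes the requirement $\pi(R)\leq 1/2$ of Theorem~\ref{theorem:bottleneck} explicit, where the paper relies only on its earlier ``w.l.o.g.'' setup; and your final assembly needs only the direction $\cw\geq m^2/2$, so the integer-rounding issues in the alternating-ordering upper bound that you flag as the main obstacle are actually immaterial to the conclusion.
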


To prove the theorem above, we start by evaluating the cutwidth of $K_{m,m}$: we focus on instances of the game with \emph{identical edge weights}. To simplify the exposition, we assume that $w_e=1$ for all the edges $e$ of $K_{m,m}$ and characterize the best ordering from which the cutwidth is obtained. We will denote with $A$ and $B$ the two sides of the bipartite graph.
\begin{claim}
 \label{claim:cutwidth_bipartite}
For identical edge weights, the ordering that obtains the cutwidth in $K_{m,m}$ is the one that selects alternatively a vertex from $A$ and a vertex from $B$. Moreover, the cutwidth of $K_{m,m}$ is $\lceil m^2/2 \rceil$.
\end{claim}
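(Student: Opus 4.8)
The plan is to analyze, for an arbitrary ordering, the cut induced by each prefix as a function only of how that prefix splits between the two sides $A$ and $B$, and then to optimize. Suppose a prefix contains $a$ vertices of $A$ and $b$ vertices of $B$, so that its complement contains $m-a$ vertices of $A$ and $m-b$ of $B$. Since every edge of $K_{m,m}$ joins $A$ to $B$, the crossing edges are exactly those from the $a$ placed $A$-vertices to the $m-b$ unplaced $B$-vertices, together with those from the $b$ placed $B$-vertices to the $m-a$ unplaced $A$-vertices. With unit weights this gives a cut value of $a(m-b)+b(m-a)=(a+b)m-2ab$. Thus, for a fixed prefix size $k=a+b$, the cut depends only on $ab$ and is \emph{smaller} the \emph{larger} $ab$ is; since $ab\le\lfloor k/2\rfloor\lceil k/2\rceil$ with equality exactly when $|a-b|\le 1$, the cut at a prefix of size $k$ is minimized precisely by a balanced split.

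The key structural observation is then that the alternating ordering realizes a balanced split \emph{at every prefix length simultaneously}: after $k$ steps it has placed $\lceil k/2\rceil$ vertices from one side and $\lfloor k/2\rfloor$ from the other. Hence it minimizes the cut at every prefix at once, and a fortiori minimizes the maximum cut; this is what makes it an optimal ordering. Writing $W(k)=km-2\lfloor k/2\rfloor\lceil k/2\rceil=km-2\lfloor k^2/4\rfloor$ for the cut it induces at prefix size $k$, it remains only to evaluate $\max_{1\le k\le 2m-1}W(k)$.

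For the lower bound I would not even need the full ordering: for \emph{any} ordering, inspect the single prefix of size $k=m$. If it holds $a$ vertices of $A$ (and $b=m-a$ of $B$), its cut equals $m^2-2a(m-a)\ge m^2-2\lfloor m/2\rfloor\lceil m/2\rceil=\lceil m^2/2\rceil$, giving $\cw(K_{m,m})\ge\lceil m^2/2\rceil$. For the matching upper bound I would bound the alternating ordering's profile: using $\lfloor k^2/4\rfloor\ge k^2/4-\tfrac14$ gives $W(k)\le km-\tfrac{k^2}{2}+\tfrac12\le\tfrac{m^2}{2}+\tfrac12$, where the last step uses that the parabola $km-k^2/2$ peaks at $k=m$ with value $m^2/2$. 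Since $W(k)$ is an integer, this forces $W(k)\le\lceil m^2/2\rceil$ for every $k$, with equality attained at $k=m$. Combining the two bounds yields $\cw(K_{m,m})=\lceil m^2/2\rceil$, attained by the alternating ordering.

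The routine but error-prone part, and the only place I would slow down, is the bookkeeping of floors and ceilings to make the even and odd cases of $m$ agree with $\lceil m^2/2\rceil$; the conceptual content—that balanced prefixes are simultaneously optimal and that the worst cut sits at the middle prefix $k=m$—is straightforward once the cut formula $(a+b)m-2ab$ is in hand.
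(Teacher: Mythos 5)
Your proof is correct and follows essentially the same route as the paper's: the same prefix-cut formula $(a+b)m-2ab$, the same observation that for fixed prefix size the cut is minimized by a balanced split, and the same conclusion that the alternating ordering is simultaneously optimal at every prefix. The only divergence is the final evaluation of the maximum, where the paper resorts to a case analysis on the parity of $m$ and $t$ while you obtain $\lceil m^2/2\rceil$ more cleanly, via the size-$m$ prefix for the lower bound and the parabola-plus-integrality estimate for the upper bound; both finishes are valid.
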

\begin{proof}
 Let $(T, V \setminus T)$ be a cut of the graph, we denote with $t$ the size of $T$: we also denote $t_A$ as the number of vertices of $A$ in $T$ and $t_B$ as the number of vertices of $B$ in $T$. Obviously, $t = t_A + t_B$. Given $t_A$ and $t_B$, the size of the cut $(T, V \setminus T)$ will be $t_A (m - t_B) + t_B (m - t_A)=mt - 2t_A(t-t_A)$. It is immediate to check that for every fixed $t$ the cut is minimized when $\lceil t/2 \rceil$ vertices of $T$ are taken from $A$ and the remaining ones from $B$. Therefore, the cutwidth is achieved by an ordering which selects alternatively vertices from the two sides of the graph and is then given by the maximum over $t$ of
 $$
  \bigg\lceil \frac{t}{2} \bigg\rceil \left(m - \bigg\lfloor \frac{t}{2} \bigg\rfloor \right) + \bigg\lfloor \frac{t}{2} \bigg\rfloor \left(m - \bigg\lceil \frac{t}{2} \bigg\rceil \right).
 $$
The above function is equal to $mt-\frac{t^2-1}{2}$ for $t$ odd and $m-t^2/2$ for $t$ even. Both these fuctions are maximized for $t = m$. However, this may be impossible to achieve when for example $t$ is odd and $m$ is even. Nevertheless, a simple case analysis on the parity of $m$ and $t$ shows that the maximum is achieved for $t=m-1,m,m+1$ when $m$ is even and for $t=m$ for $m$ odd, resulting in a cutwidth of $\lceil m^2/2 \rceil$.
\end{proof}

The following lemma gives a bound to the size of $\partial R$ for this graph.
\begin{lemma}
\label{lemma:bound_partialR}
 For the opinion game on the graph $K_{m,m}$ with $b_i = 1/2$ for every player $i$ and identical edge weights, there exists a constant $c_1$ such that $|\partial R| \leq e^{c_1\sqrt{\cw}}$.
\end{lemma}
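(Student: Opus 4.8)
The plan is to exploit the very special structure forced by setting every belief to $1/2$. When $b_i = 1/2$ for all $i$ we have $(x_i-b_i)^2 = 1/4$ for each $x_i \in \{0,1\}$, so $\blf(\x) = \sum_i (x_i - 1/2)^2 = n/4 = m/2$ takes the \emph{same} value on every profile. In particular $\blf^\star = \blf(\0) = m/2$, and since all edge weights are $1$ the potential collapses to $\Phi(\x) = \blf(\x) + D(\x) = m/2 + D(\x)$, where $D(\x)$ is the number of discording edges. Consequently the threshold ``$\Phi < \blf^\star + \cw$'' defining $R = R_{\0}$ reduces to ``$D(\x) < \cw$'', and $R$ is precisely the connected component of $\0$ in the subgraph of $\Ham$ induced by the profiles with fewer than $\cw$ discording edges. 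Establishing this reduction is the conceptual heart of the argument.

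I would then bound $|\partial R|$ by the crudest available estimate: every profile is an element of $S = \{0,1\}^n$ with $n = 2m$, so trivially $|\partial R| \le |S| = 2^{2m}$. It remains only to convert this into the claimed form. By Claim~\ref{claim:cutwidth_bipartite}, $\cw = \lceil m^2/2\rceil \ge m^2/2$, so $m \le \sqrt{2\cw}$ and
\[
  |\partial R| \le 2^{2m} = e^{2m\ln 2} \le e^{(2\sqrt2\,\ln 2)\sqrt{\cw}},
\]
which gives the lemma with, for instance, $c_1 = 2\sqrt2\,\ln 2 < 2$.

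The one point to be wary of is whether this crude bound is hopelessly wasteful, which would force a more delicate enumeration of the boundary. Here the quadratic growth of the cutwidth of $K_{m,m}$ is decisive: because $\cw = \Theta(m^2) = \Theta(n^2)$, the target $e^{c_1\sqrt{\cw}}$ is already of order $e^{\Theta(n)} = 2^{\Theta(n)}$, so the total number of profiles is of exactly the right order and no refinement is possible or needed. A sanity check confirms that the exponential form is genuinely forced rather than merely convenient: a profile lies in $\partial R$ only if a single bit-flip raises its discording count from below $\cw$ to at least $\cw$, and since one flip changes $D$ by at most $m$, every $\x \in \partial R$ satisfies $\cw - m \le D(\x) < \cw$; writing $p,q$ for the numbers of $1$'s on the two sides of $K_{m,m}$ one has $D = m(p+q) - 2pq$, and this thin band still contains $\Theta(2^{2m}/\sqrt m)$ profiles, clustered where $p \approx m/2$ or $q \approx m/2$. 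Thus the main (and only mild) obstacle is recognizing that the correct target is exponential and that the trivial count already meets it, instead of being lured into an unnecessary fine-grained analysis of the boundary.
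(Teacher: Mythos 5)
Your proof is correct, but it takes a genuinely different and more elementary route than the paper's. The paper localizes $\partial R$ inside a band of Hamming weights: it shows that every profile of $\partial R$ has at least $t_1 = \lfloor n/4 \rfloor$ and at most $t_2 \leq m-1$ players with opinion $1$, and then bounds $\sum_{i=t_1}^{t_2}\binom{n}{i}$ via $\binom{n}{i}\leq (ne/i)^i \leq (5e)^i$, arriving at $|\partial R| \leq e^{3m}$ and hence $c_1 = 3\sqrt{2}$ after using $m \leq \sqrt{2}\sqrt{\cw}$. You instead observe that the trivial bound already suffices: $|\partial R| \leq |S| = 2^{2m} = e^{2m\ln 2}$, and Claim~\ref{claim:cutwidth_bipartite} gives $\cw \geq m^2/2$, i.e.\ $m \leq \sqrt{2\cw}$, so $|\partial R| \leq e^{(2\sqrt{2}\ln 2)\sqrt{\cw}}$. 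This is valid, and your constant $2\sqrt{2}\ln 2 \approx 1.96$ is actually \emph{smaller} than the paper's $3\sqrt{2} \approx 4.24$ (the paper's loose estimates cost more than its refined count gains); since in Theorem~\ref{th:matching_lb} a smaller $c_1$ only weakens the hypothesis on $\beta$, nothing breaks downstream. Your sanity check also explains why no refined count can do much better here: the band $\cw - m \leq D(\x) < \cw$ still contains order $2^{2m}/\sqrt{m}$ profiles, so any bound is exponential in $n$, and the quadratic growth $\cw = \Theta(n^2)$ is exactly what lets $e^{c_1\sqrt{\cw}}$ absorb $2^n$. What the paper's band argument buys is a template that does not collapse to triviality: it is the kind of counting (paralleling the clique analysis generalizing Theorem~15.3 of \cite{lpwAMS08}) that one would need on graphs whose cutwidth grows subquadratically in $n$, where $e^{c\sqrt{\cw}}$ no longer dominates the full state space. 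Two minor remarks: your ``conceptual heart'' (that $\blf$ is constant, so $R$ is the component of $\0$ among profiles with $D(\x) < \cw$) is correct but is not actually used by your bound, which needs only the cutwidth value; and the lemma's content, on this particular graph, is indeed carried almost entirely by Claim~\ref{claim:cutwidth_bipartite}.
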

\begin{proof}
Since $b_i = 1/2$ for every player $i$, we have that $\blf(\x) = n/4$ for every profile $\x$. Therefore, by definition of $R$, all profiles in $R$ (and therefore $\partial R$) have less then $\cw$ discording edges. Indeed, for $\x \in R$ we have $\blf(\x) + |D(\x)|=\Phi(\x) < \blf^\star + \cw$. Moreover, if a profile $\y$ has less then $\cw - m$ discording edges, then $\y$ is not in $\partial R$ as a state neighbor of $\y$ has at most $m-1$ additional discording edges.

Consequently, to bound the size of $\partial R$, we need to count the number of profiles in $R$ that have potential between $\blf^\star+\cw - m$ and $\blf^\star+\cw-1$ (i.e., the number of profiles with at least $\cw - m$ and at most $\cw-1$ discording edges). To count that, we consider two sets $L_0$ and $L_1$: we start by setting $L_0=V$ and $L_1=\emptyset$. We take vertices from $L_0$ and sequentially move them to $L_1$. We can think of $L_0$ as the set of vertices with opinion $0$ and $L_1$ as the set of vertices with opinion $1$: this way we can model a path from $\0$ to $\1$ in the Hamming graph. The number $M(t)$ of edges between $L_0$ and $L_1$ after $t$ moves is the number of discording edges in the social graph when vertices in $L_0$ have opinion $0$ and vertices in $L_1$ have opinion $1$. We have
 $$
  \bigg\lceil \frac{t}{2} \bigg\rceil \left(m - \bigg\lfloor \frac{t}{2} \bigg\rfloor \right) + \bigg\lfloor \frac{t}{2} \bigg\rfloor \left(m - \bigg\lceil \frac{t}{2} \bigg\rceil \right) \leq M(t) \leq mt,
 $$
where the lower bound follows from the structural proof of minimum cuts contained in Claim~\ref{claim:cutwidth_bipartite}.

 Let $t_1$ be the largest integer such that for all possible ways to choose $t_1 - 1$ vertices in $L_0$ and move them in $L_1$, the number of edges between $L_0$ and $L_1$ is less than $\cw - m$, i.e.
 $$
  (t_1 - 1)m < \cw - m \Rightarrow t_1 = \bigg\lfloor \frac{\cw}{m}\bigg\rfloor = \bigg\lfloor \frac{m}{2}\bigg\rfloor=\bigg\lfloor \frac n4\bigg\rfloor.
 $$
 Let $t_2$ be the smallest integer such that for all possible ways to move $t_2 + 1$ vertices from $L_0$ to $L_1$, the number of edges between $L_0$ and $L_1$ is at least $\cw$, i.e.
 $$
  \bigg\lceil \frac{t_2 + 1}{2} \bigg\rceil \left(m - \bigg\lfloor \frac{t_2 + 1}{2} \bigg\rfloor \right) + \bigg\lfloor \frac{t_2 + 1}{2} \bigg\rfloor \left(m - \bigg\lceil \frac{t_2 + 1}{2} \bigg\rceil \right) \geq \cw,
 $$
 that, as showed in Claim~\ref{claim:cutwidth_bipartite}, means $t_2 = m-2$ for $m$ even and $t_2=m-1$ for $m$ odd. Then, we can conclude $t_2 \leq m-1$.

 By the definition of $t_1$, all profiles with at most $t_1 - 1$ players with opinion $1$ are not in $\partial R$ and, by definition of $t_2$, all profiles with at least $t_2 + 1$ players with opinion $1$ are not in $R$. Thus, we have
 \begin{equation}
  \label{eq:bound_partialR}
  |\partial R| \leq \sum_{i = t_1}^{t_2} \binom{n}{i} \leq \sum_{i = t_1}^{t_2} \left(\frac{n \cdot e}{i}\right)^i \leq \sum_{i = t_1}^{t_2} \left(5e \right)^i = \frac{(5e)^{t_2+1}-(5e)^{t_1}}{5e-1} \leq (5e)^{t_2+1} \leq (5e)^m \leq\ e^{3m},
 \end{equation}
 where in the third inequality we used the fact that $i \geq t_1 > n/5$, in the penultimate the fact that $t_2 < m$ and lastly the fact that $5^m \leq e^{2m}$ for $m \geq 0$. The lemma follows since $m \leq \sqrt{2} \sqrt{\cw}$.
\end{proof}

\begin{proof}[Proof of Theorem~\ref{th:matching_lb}]
 If $b_i = 1/2$ for every player $i$, from Lemmata~\ref{lemma:bottle_lb} and~\ref{lemma:bound_partialR}, we have
 $$
  B(R) \leq n \cdot e^{c_1\sqrt{\cw}} \cdot e^{-\beta\cw} \leq n \cdot e^{-\beta\cw (1 - c_2)},
 $$
 where $c_2 = \frac{c_1 \sqrt{\cw}}{\beta \cw} < 1$ since by hypothesis $\beta > \frac{c_1}{\sqrt{\cw}} = \Omega(1/m)$; we also notice that $c_2$ goes to $0$ as $\beta$ increases. The theorem follows from Theorem~\ref{theorem:bottleneck}.
\end{proof}

We remark that it is possible to prove a result similar to Theorem~\ref{th:matching_lb} also for the clique $K_n$: the proof follows from a simple generalization of Theorem~15.3 in \cite{lpwAMS08} and by observing that the cutwidth of a clique is $\lfloor n^2/4 \rfloor$.

\section{Conclusions and open problems}
In this work we analyze two decentralized dynamics for binary opinion games: the best-response dynamics and the logit dynamics. As for the best-response dynamics we show that it takes time polynomial in the number of players to reach a Nash equilibrium, the latter being characterized by the existence of clusters in which players have a common opinion. On the other hand, for the logit dynamics we show polynomial convergence when the level of noise is high enough and that it increases as $\beta$ grows.

It is important to highlight, as noted above, that the convergence time of the two dynamics are computed with respect to two different equilibrium concepts, namely Nash equilibrium for the best-response dynamics and logit equilibrium for the logit dynamics. This explains why the convergence times of these two dynamics asymptotically diverge even though the logit dynamics becomes similar to the best response dynamics as $\beta$ goes to infinity.

Theorem \ref{th:all_graph} and \ref{th:lb_high_beta} which prove bounds to the convergence of logit dynamics can also be read in a positive fashion. Indeed, for social networks that have a bounded cutwidth, the convergence rate of the dynamics depends only on the value of $\beta$. (We highlight that checking if a graph has bounded cutwidth can be done in polynomial time \cite{tsbISAAC00}.) In general, we have the following picture: as long as $\beta$ is less than the maximum of (roughly) $\frac{\log n}{\cw}$ and $\frac{1}{\w_{\max}\Delta_{\max}}$ the convergence time to the logit equilibrium is polynomial. Moreover, Theorem~\ref{th:lb_high_beta} shows that for $\beta$ lower bounded by (roughly) $\frac{n \log n}{\cw}$ the convergence time to the logit equilibrium is super-polynomial. Then for some network topology, there is a gap in our knowledge which is naturally interesting to close.

In \cite{afppSODA12} the concept of metastable distributions has been introduced in order to predict the outcome of games for which the logit dynamics takes too much time to reach the stationary distribution for some value of $\beta$. It would be interesting to investigate existence and structure of such distributions for our opinion games.

\newpage

\bibliographystyle{plain}
\bibliography{meta4LD-upd}

\end{document}